\documentclass[a4paper]{article}

\newif\ifluatex
\IfFileExists{ifluatex.sty}
{
    \usepackage{ifluatex}
}
{
    \luatexfalse
}
\ifluatex
    \usepackage[TU]{fontenc}
\else
    \usepackage[utf8]{inputenc}
    \usepackage[T1]{fontenc}
\fi

\usepackage[british]{babel}

\usepackage[languages, complexity, semigroups]{Grossnathans_math}

\usepackage{hyperref}

\newcommand{\evalP}[1]{\xi_{#1}}
\newcommand{\pr}{\emph{p}}

\newcommand{\ccddo}[1]{\mathopen{[}#1\mathclose{]}}

\newcommand{\StVQJsdD}{\StVQuasi(\FSVJsdD)}
\newcommand{\FSVDAintJsdD}{\FSVGen{\FMVDA} \cap \FSVJsdD}
\newcommand{\FSVDAintLJ}{\FSVGen{\FMVDA} \cap \FSVLJ}
\newcommand{\V}{\FMVariety{V}}
\newcommand{\W}{\FMVariety{W}}

\hyphenation{mo-no-ids thresh-old}

\begin{document}
\title{The Power of Programs over Monoids in $\FMVJ$ and Threshold Dot-depth One
       Languages\footnote{Revised and extended version of~\cite{Grosshans-2020}
       that includes a complete algebraic characterisation of threshold
       dot-depth one languages.}}
\author{Nathan \textsc{Grosshans}%
	\thanks{Universität Kassel, Fachbereich Elektrotechnik/Informatik,
		Kassel, Germany,
		\href{mailto:nathan.grosshans@polytechnique.edu}
		     {\nolinkurl{nathan.grosshans@polytechnique.edu}},
		\url{https://nathan.grosshans.me}.}
       }
\date{}
\maketitle

\begin{abstract}
The model of programs over (finite) monoids, introduced by Barrington and
Thérien, gives an interesting way to characterise the circuit complexity class
$\NC[1]$ and its subclasses and showcases deep connections with algebraic
automata theory.
In this article, we investigate the computational power of programs over monoids
in $\FMVJ$, a small variety of finite aperiodic monoids.
First, we give a fine hierarchy within the class of languages recognised by
programs over monoids from $\FMVJ$, based on the length of programs but also
some parametrisation of $\FMVJ$.
Second, and most importantly, we make progress in understanding what regular
languages can be recognised by programs over monoids in $\FMVJ$. To this end, we
introduce a new class of restricted dot-depth one languages, threshold dot-depth
one languages. We show that programs over monoids in $\FMVJ$ actually can
recognise all languages from this class, using a non-trivial trick, and
conjecture that threshold dot-depth one languages with additional positional
modular counting suffice to characterise the regular languages recognised by
programs over monoids in $\FMVJ$. Finally, using a result by J.\ C.\ Costa, we
give an algebraic characterisation of threshold dot-depth one languages that
supports that conjecture and is of independent interest.
 \end{abstract}

\section{Introduction}
In computational complexity theory, many hard still open questions concern
relationships between complexity classes that are expected to be quite small in
comparison to the mainstream complexity class $\PTIME$ of tractable languages.
One of the smallest such classes is $\NC[1]$, the class of languages decided by
Boolean circuits of polynomial length, logarithmic depth and bounded fan-in, a
relevant and meaningful class, that has many characterisations but whose
internal structure still mostly is a mystery. Indeed, among its most important
subclasses, we count $\AC[0]$, $\CC[0]$ and $\ACC[0]$: all of them are
conjectured to be different from each other and strictly within $\NC[1]$, but
despite many efforts for several decades, this could only be proved for the
first of those classes.

In the late eighties, Barrington and Thérien~\cite{Barrington-Therien-1988},
building on Barrington's celebrated theorem~\cite{Barrington-1989}, gave an
interesting viewpoint on those conjectures, relying on algebraic automata
theory. They defined the notion of a program over a monoid $M$: a sequence of
instructions $(i, f)$, associating through function $f$ some element of $M$ to
the letter at position $i$ in the input of fixed length. In that way, the
program outputs an element of $M$ for every input word, by multiplying out the
elements given by the instructions for that word; acceptance or rejection then
depends on that outputted element.
A language of words of arbitrary length is consequently recognised in a
non-uniform fashion, by a sequence of programs over some fixed monoid, one for
each possible input length; when that sequence is of polynomial length, it is
said that the monoid \pr-recognises that language.
Barrington and Thérien's discovery is that $\NC[1]$ and almost all of its
significant subclasses can each be exactly characterised by \pr-recognition over
monoids taken from some suitably chosen variety of finite monoids (a class of
finite monoids closed under basic operations on monoids). For instance,
$\NC[1]$, $\AC[0]$, $\CC[0]$ and $\ACC[0]$ correspond exactly to \pr-recognition
by, respectively, finite monoids, finite aperiodic monoids, finite solvable
groups and finite solvable monoids. Understanding the internal structure of
$\NC[1]$ thus becomes a matter of understanding what finite monoids from some
particular variety are able to \pr-recognise.

\vspace{0pt plus 1pt}

It soon became clear that regular languages play a central role in understanding
\pr-recognition: McKenzie, Péladeau and Thérien indeed
observed~\cite{McKenzie-Peladeau-Therien-1991} that finite monoids from a
variety $\V$ and a variety $\W$ \pr-recognise the same languages if and only if
they \pr-recognise the same regular languages. Otherwise stated, most
conjectures about the internal structure of $\NC[1]$ can be reformulated as a
statement about where one or several regular languages lie within that
structure.
This is why a line of previous works got interested into various notions of
tameness, capturing the fact that for a given variety of finite monoids,
\pr-recognition does not offer much more power than classical
morphism-recognition when it comes to regular languages 
(see~\cite{PhD_thesis/Peladeau,Peladeau-Straubing-Therien-1997,%
Maciel-Peladeau-Therien-2000,Straubing-2000,Straubing-2001,%
PhD_thesis/Tesson,Lautemann-Tesson-Therien-2006,%
Grosshans-McKenzie-Segoufin-2017}).

\vspace{0pt plus 1pt}

This paper is a contribution to an ongoing study of what regular languages can
be \pr-recognised by monoids taken from ``small'' varieties, started with the
author's Ph.D. thesis~\cite{PhD_thesis/Grosshans}. In a previous paper by the
author with McKenzie and Segoufin~\cite{Grosshans-McKenzie-Segoufin-2017}, a
novel notion of tameness was introduced and shown for the ``small'' variety of
finite aperiodic monoids $\FMVDA$. This allowed them to characterise the class
of regular languages \pr-recognised by monoids from $\FMVDA$ as those recognised
by so called quasi-$\FMVDA$ morphisms and represented a first small step towards
a new proof that the variety $\FMVA$ of finite aperiodic monoids is tame. This
is a statement equivalent to Furst's, Saxe's,
Sipser's~\cite{Furst-Saxe-Sipser-1984} and Ajtai's~\cite{Ajtai-1983} well-known
lower bound result about $\AC[0]$.
In~\cite{Grosshans-McKenzie-Segoufin-2017}, the authors also observed that,
while $\FMVDA$ ``behaves well'' with respect to \pr-recognition of regular
languages, the variety $\FMVJ$, a subclass of $\FMVDA$, does, in contrast,
``behave badly'' in the sense that monoids from $\FMVJ$ do \pr-recognise regular
languages that are not recognised by quasi-$\FMVJ$ morphisms.

\vspace{0pt plus 1pt}

Now, $\FMVJ$ is a well-studied and fundamental variety in algebraic automata
theory (see, e.g.,~\cite{Books/Pin-1986,Pin-2017}), corresponding through
classical morphism-recognition to the class of regular languages in which
membership depends on the presence or absence of a finite set of words as
subwords. This paper is a contribution to the understanding of the power of
programs over monoids in $\FMVJ$, a knowledge that certainly does not bring
us closer to a new proof of the tameness of $\FMVA$ (as we are dealing with a
strict subvariety of $\FMVDA$), but that is motivated by the importance of
$\FMVJ$ in algebraic automata theory and the unexpected power of programs over
monoids in $\FMVJ$.
The results we present in this article are threefold: first, we exhibit a fine
hierarchy within the class of languages \pr-recognised by monoids from $\FMVJ$,
depending on the length of those programs and on a parametrisation of $\FMVJ$;
second, we show that a whole class of regular languages, threshold dot-depth one
languages, are \pr-recognised by monoids from $\FMVJ$ while, in general, they
are not recognised by any quasi-$\FMVJ$ morphism; third, we give an algebraic
characterisation of the new class of threshold dot-depth one languages.
This class forms a subclass of that of dot-depth one languages~\cite{Pin-2017}
where, roughly said, detection of a given factor does work only when it does not
appear too often as a subword. We actually even conjecture that this class of
languages with additional positional modular counting (that is, letters can be
differentiated according to their position modulo some fixed number) corresponds
exactly to all regular languages \pr-recognised by monoids in $\FMVJ$. The
characterisation of threshold dot-depth one languages as being exactly the
dot-depth one languages that are also recognised by monoids in $\FMVDA$ is a
statement that is interesting in itself for automata theory and constitutes an
essential step towards the proof of the aforementioned conjecture.

\paragraph{Organisation of the paper.}
Following the present introduction, Section~\ref{sec:Preliminaries} is
dedicated to the necessary preliminaries. In Section~\ref{sec:Fine_hierarchy},
we present the results about the fine hierarchy and in
Section~\ref{sec:Regular_languages} we expose the results concerning the regular
languages \pr-recognised by monoids from $\FMVJ$.
Section~\ref{sec:Algebraic_characterisation_TDDO_languages} is dedicated to the
algebraic characterisation of threshold dot-depth one languages. Finally,
Section~\ref{sec:Conclusion} gives a short conclusion.

\paragraph{Note.}
\hspace{0pt plus 1pt}
This article is partly based on unpublished parts of the author's Ph.D.
thesis~\cite{PhD_thesis/Grosshans}.

\section{Preliminaries}
\label{sec:Preliminaries}
\subsection{Various mathematical materials}

We assume the reader is familiar with the basics of formal language theory,
semigroup theory and recognition by morphisms, that we might designate by
classical recognition; for those, we only specify some things and refer the
reader to the two classical references of the domain by
Eilenberg~\cite{Books/Eilenberg-1974,Books/Eilenberg-1976} and
Pin~\cite{Books/Pin-1986}.

\paragraph{General notations and conventions.}
Let $i, j \in \N$. We shall denote by $\intinterval{i}{j}$ the set of all
$n \in \N$ verifying $i \leq n \leq j$. We shall also denote by $[i]$ the set
$\intinterval{1}{i}$.
Given some set $E$, we shall denote by $\powerset{E}$ the powerset of $E$.
All our alphabets and words will always be finite; the empty word will be
denoted by $\emptyword$.
Given some alphabet $\Sigma$ and some $n \in \N$, we denote by
$\Sigma^{\geq n}$, $\Sigma^{=n}$ and $\Sigma^{< n}$ the set of words over
$\Sigma$ of length, respectively, at least $n$, exactly $n$ and less than $n$.
For any word $u$ over an arbitrary alphabet, we will denote by $\alphabet(u)$
the set of letters that appear in it.

\paragraph{Varieties and languages.}
A \emph{variety of monoids} is a class of finite monoids closed under
submonoids, Cartesian product and morphic images. A \emph{variety of semigroups}
is defined similarly. When dealing with varieties, we consider only finite
monoids and semigroups, each having an \emph{idempotent power}, a smallest
$\omega \in \N_{>0}$ such that $x^\omega = x^{2 \omega}$ for any element $x$.
To give an example, the variety of finite aperiodic monoids, denoted by $\FMVA$,
contains all finite monoids $M$ such that, given $\omega$ its idempotent power,
$x^\omega = x^{\omega + 1}$ for all $x \in M$.

Formally, we see a \emph{class of languages $\mathcal{C}$} as a correspondence
that associates a set of languages $\mathcal{C}(\Sigma^*)$ over $\Sigma$ to each
alphabet $\Sigma$.
To each variety $\V$ of monoids or semigroups we associate the class
$\DLang{\V}$ of languages such that, respectively, their syntactic monoid or
semigroup belongs to $\V$.
For instance, $\DLang{\FMVA}$ is well-known to be the class of star-free
languages.

\paragraph{Quasi $\V$ languages.}
If $S$ is a semigroup we denote by $S^1$ the monoid $S$ if $S$ is already a
monoid and $S \cup \set{1}$ otherwise.

The following definitions are taken from~\cite{Pin-Straubing-2005}.
Let $\varphi$ be a surjective morphism from $\Sigma^*$ to a finite monoid $M$.
For all $k$ consider the subset $\varphi(\Sigma^k)$ of $M$ (where $\Sigma^k$ is
the set of words over $\Sigma$ of length $k$). As $M$ is finite there is a $k$
such that $\varphi(\Sigma^{2k}) = \varphi(\Sigma^k)$. This implies that
$\varphi(\Sigma^k)$ is a semigroup. The semigroup given by the smallest such $k$
is called the \emph{stable semigroup of $\varphi$}. If $S$ is the stable
semigroup of $\varphi$, $S^1$ is called \emph{the stable monoid of $\varphi$}.
If $\V$ is a variety of monoids or semigroups, then we shall denote by
$\StVQuasi{\V}$ the class of such surjective morphisms whose stable monoid or
semigroup, respectively, is in $\V$ and by $\DLang{\StVQuasi{\V}}$ the class of
languages whose syntactic morphism is in $\StVQuasi{\V}$.

\paragraph{Programs over monoids.}
Programs over monoids form a non-uniform model of computation, first defined by
Barrington and Thérien~\cite{Barrington-Therien-1988}, extending Barrington's
permutation branching program model~\cite{Barrington-1989}.
Let $M$ be a finite monoid and $\Sigma$ an alphabet. A \emph{program $P$ over
$M$ on $\Sigma^n$} is a finite
sequence of instructions of the form $(i, f)$ where $i \in [n]$ and
$f \in M^\Sigma$; said otherwise, it is a word over $([n] \times M^\Sigma)$.
The \emph{length} of $P$, denoted by $\length{P}$, is the number of its
instructions.
The program $P$ defines a function from $\Sigma^n$ to $M$ as follows.
On input $w \in \Sigma^n$, each instruction $(i, f)$ outputs the monoid element
$f(w_i)$. A sequence of instructions then yields a sequence of elements of $M$
and their product is the output $P(w)$ of the program. A language
$L \subseteq \Sigma^n$ is consequently recognised by $P$ whenever there exists
$F \subseteq M$ such that $L = P^{-1}(F)$.

A language $L$ over $\Sigma$ is \emph{recognised} by a sequence of programs
$(P_n)_{n \in \N}$ over some finite monoid $M$ if for each $n$, the program
$P_n$ is on $\Sigma^n$ and recognises $L^{=n}$.
We say $(P_n)_{n \in \N}$ is of length $s(n)$ for $s\colon \N \to \N$ whenever
$\length{P_n} = s(n)$ for all $n \in \N$ and that it is of length at most $s(n)$
whenever there exists $\alpha \in \R_{>0}$ verifying
$\length{P_n} \leq \alpha \cdot s(n)$ for all $n \in \N$.

For $s\colon \N \to \N$ and $\V$ a variety of monoids, we denote by
$\Prog{\V, s(n)}$ the class of languages recognised by sequences of programs
over monoids in $\V$ of length at most $s(n)$.
The class $\Prog{\V} = \bigcup_{k \in \N} \Prog{\V, n^k}$ is then the class of
languages \pr-recognised by a monoid in $\V$, i.e. recognised by sequences of
programs over monoids in $\V$ of polynomial length.

The following is an important property of $\Prog{\V}$.

\begin{proposition}[{\cite[Corollary 3.5]{McKenzie-Peladeau-Therien-1991}}]
\label{lemma-simple-closure-P}
    Let $\FMVariety{V}$ be a variety of monoids, then $\Prog{\FMVariety{V}}$ is
    closed under Boolean operations.
\end{proposition}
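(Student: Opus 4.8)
The plan is to treat complement, intersection and union in turn; since this is a closure property of a non-uniform model, the only point requiring any care is to check that the resulting sequences of programs remain of polynomial length while using monoids still lying in $\V$.

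Complement is immediate. If $L$ over $\Sigma$ is recognised by $(P_n)_{n\in\N}$ over $M \in \V$ with $L^{=n} = P_n^{-1}(F_n)$ for accepting sets $F_n \subseteq M$, then the same programs $(P_n)_{n\in\N}$ recognise $\Sigma^* \setminus L$ with accepting sets $M \setminus F_n$, because $(\Sigma^* \setminus L)^{=n} = \Sigma^{=n} \setminus L^{=n} = P_n^{-1}(M \setminus F_n)$. The lengths are unchanged, so $\Sigma^* \setminus L \in \Prog{\V}$.

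For intersection, let $L_1, L_2$ over $\Sigma$ be recognised by $(P^{(1)}_n)_{n\in\N}$ over $M_1 \in \V$ and $(P^{(2)}_n)_{n\in\N}$ over $M_2 \in \V$, with accepting sets $F^{(1)}_n, F^{(2)}_n$. A variety is closed under Cartesian product, so $M_1 \times M_2 \in \V$. I would define $P_n$ over $M_1 \times M_2$ as the concatenation of a copy of $P^{(1)}_n$ in which every instruction $(i, f)$ becomes $(i,\, a \mapsto (f(a), 1_{M_2}))$, followed by a copy of $P^{(2)}_n$ in which every instruction $(i, g)$ becomes $(i,\, a \mapsto (1_{M_1}, g(a)))$. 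On input $w \in \Sigma^n$, the elements output by the first block multiply to $(P^{(1)}_n(w), 1_{M_2})$ and those of the second block to $(1_{M_1}, P^{(2)}_n(w))$, so $P_n(w) = (P^{(1)}_n(w), P^{(2)}_n(w))$; with accepting set $F^{(1)}_n \times F^{(2)}_n$ this yields $P_n^{-1}(F^{(1)}_n \times F^{(2)}_n) = (L_1 \cap L_2)^{=n}$. Since $\length{P_n} = \length{P^{(1)}_n} + \length{P^{(2)}_n}$ and a sum of two polynomially bounded functions is again polynomially bounded (its degree being the maximum of the two degrees), $L_1 \cap L_2 \in \Prog{\V}$.

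Union then follows by De Morgan, $L_1 \cup L_2 = \Sigma^* \setminus \bigl((\Sigma^* \setminus L_1) \cap (\Sigma^* \setminus L_2)\bigr)$, using the two cases above; alternatively the very same product construction works directly with accepting set $(F^{(1)}_n \times M_2) \cup (M_1 \times F^{(2)}_n)$. There is no genuine obstacle here: the constructions are routine, and the single thing to verify — preservation of polynomial length — is handled by the observations just made.
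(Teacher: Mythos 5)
Your argument is correct and is essentially the standard proof of this fact (the paper itself only cites \cite[Corollary~3.5]{McKenzie-Peladeau-Therien-1991} and gives no proof): complementation by replacing the accepting set with its complement, intersection via the direct product $M_1 \times M_2 \in \V$ with the two programs concatenated and padded by identities on the unused coordinate, and union by De Morgan or by enlarging the accepting set. The length bookkeeping you give is exactly what is needed, since the sum of two polynomially bounded lengths stays within $\Prog{\V, n^{\max(k_1, k_2)}}$.
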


Given two alphabets $\Sigma$ and $\Gamma$, a $\Gamma$-program on $\Sigma^n$ for
$n \in \N$ is defined just like a program over some finite monoid $M$ on
$\Sigma^n$, except that instructions output letters from $\Gamma$ and thus that
the program outputs words over $\Gamma$.
Let now $L \subseteq \Sigma^*$ and $K \subseteq \Gamma^*$.
We say that \emph{$L$ program-reduces to $K$} if and only if there exists a
sequence $(\Psi_n)_{n \in \N}$ of $\Gamma$-programs (the program-reduction) such
that $\Psi_n$ is on $\Sigma^n$ and $L^{=n} = \Psi_n^{-1}(K^{=\length{\Psi_n}})$
for each $n \in \N$.
The following proposition shows closure of $\Prog{\V}$ also under
program-reductions.

\begin{proposition}[{\cite[Proposition~3.3.12 and Corollary~3.4.3]
			  {PhD_thesis/Grosshans}}]
\label{ptn:P(V)_program-reduction_closure}
\label{ptn:Program-reduction_to_regular_language}
    Let $\Sigma$ and $\Gamma$ be two alphabets.
    Let $\V$ be a variety of monoids.
    Given $K \subseteq \Gamma^*$ in $\Prog{\V, s(n)}$ for $s\colon \N \to \N$
    and $L \subseteq \Sigma^*$ from which there exists a program-reduction to
    $K$ of length $t(n)$, for $t\colon \N \to \N$, we have that
    $L \in \Prog{\V, s(t(n))}$.
    In particular, when $K$ is recognised (classically) by a monoid in $\V$, we
    have that $L \in \Prog{\V, t(n)}$.
\end{proposition}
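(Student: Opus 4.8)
The plan is to build the sequence of programs witnessing $L \in \Prog{\V, s(t(n))}$ by \emph{composing} a sequence of programs witnessing $K \in \Prog{\V, s(n)}$ with the given program-reduction, the composition being carried out instruction by instruction, via pre-composition of the instruction functions.

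Concretely, fix a monoid $M \in \V$ together with a sequence $(P_m)_{m \in \N}$ of programs over $M$ such that $P_m$ is on $\Gamma^m$, $K^{=m} = P_m^{-1}(F_m)$ for some $F_m \subseteq M$, and $\length{P_m} \leq \alpha \cdot s(m)$ for some fixed $\alpha \in \R_{>0}$ and all $m \in \N$; and fix a program-reduction $(\Psi_n)_{n \in \N}$ from $L$ to $K$ with $\length{\Psi_n} = t(n)$, so that $\Psi_n$ is on $\Sigma^n$ and $L^{=n} = \Psi_n^{-1}(K^{=t(n)})$. Write the $j$-th instruction of $\Psi_n$ as $(i_j, h_j)$ with $i_j \in [n]$ and $h_j \in \Gamma^\Sigma$, for $j \in [t(n)]$; thus for $w \in \Sigma^n$ the $j$-th letter of $\Psi_n(w)$ is $h_j(w_{i_j})$. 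For each $n$ I would define a program $Q_n$ over $M$ on $\Sigma^n$ by replacing every instruction $(j, g)$ of $P_{t(n)}$ — here $j \in [t(n)]$ and $g \in M^\Gamma$ — by the instruction $(i_j,\, g \circ h_j)$, which lies in $[n] \times M^\Sigma$. Then $Q_n$ has exactly $\length{P_{t(n)}}$ instructions, and since instruction $(i_j, g \circ h_j)$ outputs $g(h_j(w_{i_j}))$ on input $w$ — precisely what instruction $(j, g)$ of $P_{t(n)}$ outputs on input $\Psi_n(w)$ — a direct unwinding of the definitions gives $Q_n(w) = P_{t(n)}(\Psi_n(w))$ for all $w \in \Sigma^n$. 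Hence $L^{=n} = \Psi_n^{-1}(P_{t(n)}^{-1}(F_{t(n)})) = Q_n^{-1}(F_{t(n)})$, so $Q_n$ recognises $L^{=n}$; moreover $\length{Q_n} = \length{P_{t(n)}} \leq \alpha \cdot s(t(n))$, and all the $Q_n$ are programs over the single monoid $M \in \V$. Thus $(Q_n)_{n \in \N}$ witnesses $L \in \Prog{\V, s(t(n))}$.

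For the final assertion, I would first check that any language $K \subseteq \Gamma^*$ recognised classically by a monoid $N \in \V$ lies in $\Prog{\V, n}$: taking a morphism $\eta\colon \Gamma^* \to N$ and $F \subseteq N$ with $K = \eta^{-1}(F)$, the program $P_m$ on $\Gamma^m$ made of the instructions $(1, g), (2, g), \dots, (m, g)$, where $g \in N^\Gamma$ is the restriction of $\eta$ to the letters of $\Gamma$, outputs $\eta(v)$ on input $v \in \Gamma^m$, so $K^{=m} = P_m^{-1}(F)$ and $\length{P_m} = m$. Applying the first part with $s$ the identity function then yields $L \in \Prog{\V, t(n)}$. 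The whole argument is essentially bookkeeping, so there is no genuine obstacle; the only points needing a little care are verifying that pre-composing instruction functions really does realise $P_{t(n)} \circ \Psi_n$ at the level of the output element of $M$, and ensuring that the same monoid $M$ serves for every length, so that $(Q_n)_{n \in \N}$ is a legitimate sequence of programs over a monoid in $\V$ of the claimed length.
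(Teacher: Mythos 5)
Your proof is correct and is the standard composition argument (substituting the instructions of the reduction into the instructions of the program for $K$, plus the canonical length-$m$ program simulating a recognising morphism); the paper itself only cites this result from the author's thesis, and your construction is exactly the expected one.
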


\subsection{Tameness and the variety \texorpdfstring{$\FMVJ$}{J}}

We won't introduce any of the proposed notions of tameness but will only state
that the main consequence for a variety of monoids $\V$ to be tame in the sense
of~\cite{Grosshans-McKenzie-Segoufin-2017} is that
$\Prog{\V} \cap \Reg \subseteq \DLang{\StVQuasi\V}$. This consequence has
far-reaching implications from a computational-complexity-theoretic standpoint
when $\Prog{\V}$ happens to be equal to a circuit complexity class.
For instance, tameness for $\FMVA$ implies that
$\Prog{\FMVA} \cap \Reg \subseteq \DLang{\StVQA}$, which is equivalent to the
fact that $\AC[0]$ does not contain the language $\FLang{MOD_m}$ of words over
$\set{0, 1}$ containing a number of $1$s not divisible by $m$ for any
$m\!\in\!\N, m\!\geq\!2$ (a central result in complexity
theory~\cite{Furst-Saxe-Sipser-1984,Ajtai-1983}).

Let us now define the variety of monoids $\FMVJ$.
A finite monoid $M$ of idempotent power $\omega$ belongs to $\FMVJ$ if and only
if $(xy)^\omega = (xy)^\omega x = y (xy)^\omega$ for all $x, y \in M$. It is a
strict subvariety of the variety $\FMVDA$, containing all finite monoids $M$ of
idempotent power $\omega$ such that $(xy)^\omega = (xy)^\omega x (xy)^\omega$
for all $x, y \in M$, itself a strict subvariety of $\FMVA$.
The variety $\FMVJ$ is a ``small'' one, well within $\FMVA$.

We now give some specific definitions and results about $\FMVJ$ that we will
use, based essentially on~\cite{Klima-Polak-2010}, but also
on~\cite[Chapter 4, Section 1]{Books/Pin-1986}.

For some alphabet $\Sigma$ and each $k \in \N$, let us define the equivalence
relation $\sim_k$ on $\Sigma^*$ by $u \sim_k v$ if and only if $u$ and $v$ have
the same set of $k$-subwords (subwords of length at most $k$), for all
$u, v \in \Sigma^*$. The relation $\sim_k$ is a congruence of finite index on
$\Sigma^*$.
For an alphabet $\Sigma$ and a word $u \in \Sigma^*$, we shall write
$u \shuffle \Sigma^*$ for the language of all words over $\Sigma$ having $u$ as
a subword. In the following, we consider that $\shuffle$ has precedence over
$\cup$ and $\cap$ (but of course not over concatenation).

We define the \emph{class of piecewise testable languages $\LVPT$} as the class
of regular languages such that for every alphabet $\Sigma$, the set
$\LVPT(\Sigma^*)$ contains all languages over $\Sigma$ that are Boolean
combinations of languages of the form $u \shuffle \Sigma^*$ where
$u \in \Sigma^*$.
In fact, $\LVPT(\Sigma^*)$ is the set of languages over $\Sigma$ equal to a
union of $\sim_k$-classes for some $k \in \N$ (see~\cite{Simon-1975}).
Simon showed~\cite{Simon-1975} that a language is piecewise testable if and only
if its syntactic monoid is in $\FMVJ$, i.e. $\LVPT = \DLang{\FMVJ}$.

We can define a hierarchy of piecewise testable languages in a natural way.
For $k \in \N$, let the \emph{class of $k$-piecewise testable languages
$\LVPT[k]$} be the class of regular languages such that for every alphabet
$\Sigma$, the set $\LVPT[k](\Sigma^*)$ contains all languages over $\Sigma$ that
are Boolean combinations of languages of the form $u \shuffle \Sigma^*$ where
$u \in \Sigma^*$ with $\length{u} \leq k$.
We then have that $\LVPT[k](\Sigma^*)$ is the set of languages over $\Sigma$
equal to a union of $\sim_k$-classes.
Let us define $\FMVJ[k]$ the inclusion-wise smallest variety of monoids
containing the quotients of $\Sigma^*$ by $\sim_k$ for any alphabet $\Sigma$: we
have that a language is $k$-piecewise testable if and only if its syntactic
monoid belongs to $\FMVJ[k]$, i.e. $\LVPT[k] = \DLang{\FMVJ[k]}$.
(See~\cite[Section 3]{Klima-Polak-2010}.)

\section{Fine Hierarchy}
\label{sec:Fine_hierarchy}
The first part of our investigation of the computational power of programs over
monoids in $\FMVJ$ concerns the influence of the length of programs on their
computational capabilities.

We say two programs over a same monoid on the same set of input words are
\emph{equivalent} if and only if they recognise the same languages.
Tesson and Thérien proved in~\cite{Tesson-Therien-2001} that for any monoid $M$
in $\FMVDA$, there exists some $k \in \N$ such that for any alphabet $\Sigma$
there is a constant $c \in \N_{>0}$ verifying that any program over $M$ on
$\Sigma^n$ for $n \in \N$ is equivalent to a program over $M$ on $\Sigma^n$ of
length at most $c \cdot n^k$.
Since $\FMVJ \subset \FMVDA$, any monoid in $\FMVJ$ does also have this
property. However, this does not imply that there exists some $k \in \N$ working
for all monoids in $\FMVJ$, i.e. that $\Prog{\FMVJ}$ collapses to
$\Prog{\FMVJ, n^k}$.

In this section, we show on the one hand that, as for $\FMVDA$, while
$\Prog{\FMVJ, s(n)}$ collapses to $\Prog{\FMVJ}$ for any super-polynomial
function $s\colon \N \to \N$, there does not exist any $k \in \N$ such that
$\Prog{\FMVJ}$ collapses to $\Prog{\FMVJ, n^k}$; and on the other hand that
$\Prog{\FMVJ[k]}$ does optimally collapse to
$\Prog{\FMVJ[k], n^{\ceiling{k / 2}}}$ for each $k \in \N$.

\subsection{Strict hierarchy}

Given $k, n \in \N$, we say that $\sigma$ is a \emph{$k$-selector over $n$} if
$\sigma$ is a function of $\powerset{[n]}^{[n]^k}$ that associates a subset of
$[n]$ to each vector in $[n]^k$.
For any sequence $\Delta = (\sigma_n)_{n \in \N}$ such that $\sigma_n$ is a
$k$-selector over $n$ for each $n \in \N$ --- a sequence we will call a
\emph{sequence of $k$-selectors} ---, we set
$L_\Delta = \bigcup_{n \in \N} K_{n, \sigma_n}$, where for each $n \in \N$, the
language $K_{n, \sigma_n}$ is the set of words over $\set{0, 1}$ of length
$(k + 1) \cdot n$ that can be decomposed into $k + 1$ consecutive blocks
$u^{(1)}, u^{(2)}, \ldots, u^{(k)}, v$ of $n$ letters where the first $k$ blocks
each contain $1$ exactly once and uniquely define a vector $\rho$ in $[n]^k$,
where for all $i \in [k]$, $\rho_i$ is given by the position of the only $1$ in
$u^{(i)}$ (i.e. $u^{(i)}_{\rho_i} = 1$) and $v$ is such that there exists
$j \in \sigma_n(\rho)$ verifying that $v_j$ is $1$.
Observe that for any $k$-selector $\sigma_0$ over $0$, we have
$K_{0, \sigma_0} = \emptyset$.

We now proceed similarly to what has been done in Subsection~5.1
in~\cite{Grosshans-McKenzie-Segoufin-2017} to show, on one hand, that for all
$k \in \N$, there is a monoid $M_k$ in $\FMVJ[2 k + 1]$ such that for any
sequence of $k$-selectors $\Delta$, the language $L_\Delta$ is recognised by a
sequence of programs over $M_k$ of length at most $n^{k + 1}$; and, on the other
hand, that for all $k \in \N$ there is a sequence of $k$-selectors $\Delta$ such
that for any finite monoid $M$ and any sequence of programs $(P_n)_{n \in \N}$
over $M$ of length at most $n^k$, the language $L_\Delta$ is not recognised by
$(P_n)_{n \in \N}$.

\paragraph{Upper bound.}
We start with the upper bound.
Given $k \in \N$, we define the alphabet
$Y_k = \set{e, \#} \cup \set{\bot_l, \top_l \mid l \in [k]}$; we are going to
prove that for all $k \in \N$ there exists a language
$Z_k \in \LVPT[2 k + 1](Y_k^*)$ such that for all
$\Delta = (\sigma_n)_{n \in \N}$ sequences of $k$-selectors, there exists a
program-reduction from $L_\Delta$ to $Z_k$ of length at most
$2 \cdot (k + 1)^{-k} \cdot n^{k + 1}$.
To this end, we use the following proposition and the fact that the language of
words of length $n \in \N$ of $L_\Delta$ is exactly $K_{n', \sigma_{n'}}$ when
there exists $n' \in \N$ verifying $n = (k + 1) \cdot n'$ and $\emptyset$
otherwise.

\begin{proposition}
\label{ptn:k-selectors_languages_in_P(J)}
    For all $k \in \N$ there is a language $Z_k \in \LVPT[2 k + 1](Y_k^*)$ such
    that $\emptyword \notin Z_k$ and for all $n \in \N$ and all $k$-selectors
    $\sigma_n$ over $n$, we have
    $K_{n, \sigma_n} = \Psi_{(k + 1) \cdot n, \sigma_n}^{-1}
		       (Z_k^{=\length{\Psi_{(k + 1) \cdot n, \sigma_n}}})$
    where $\Psi_{(k + 1) \cdot n, \sigma_n}$ is a $Y_k$-program on
    $\set{0, 1}^{(k + 1) \cdot n}$ of length at most
    $2 \cdot (k + 1) \cdot n^{k + 1}$.
\end{proposition}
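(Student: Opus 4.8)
The plan is to design the target language $Z_k$ over $Y_k$ so that it recognises the "block structure" of words in $K_{n,\sigma_n}$, and then to build the $Y_k$-program $\Psi_{(k+1)\cdot n,\sigma_n}$ that reads the input $w \in \set{0,1}^{(k+1)\cdot n}$, conceptually splits it into the $k+1$ blocks $u^{(1)},\dots,u^{(k)},v$, and emits, for each candidate vector $\rho \in [n]^k$, a gadget of letters encoding whether the $1$s in the first $k$ blocks sit exactly at positions $\rho_1,\dots,\rho_k$ and whether $v$ has a $1$ somewhere in $\sigma_n(\rho)$. Concretely, for a fixed $\rho$ the program will, for each $l \in [k]$, scan block $u^{(l)}$ and output a letter $\top_l$ at position $\rho_l$ and $\bot_l$ at every other position of that block (this costs $n$ instructions per block per $\rho$), then scan $v$ restricted to $\sigma_n(\rho)$ outputting $e$ when it sees a $1$ and $\#$ otherwise, and finally output a separator. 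Over all $\rho \in [n]^k$ this gives roughly $n^k \cdot ((k+1)\cdot n) = (k+1)\cdot n^{k+1}$ instructions, which after accounting for separators and the final block is at most $2\cdot(k+1)\cdot n^{k+1}$, as required.

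The language $Z_k$ should then be the Boolean combination of subword conditions that says: \emph{there exists one gadget block (delimited by separators) in which, for every $l\in[k]$, exactly one $\top_l$ occurs and in which at least one $e$ occurs}. The key point is that "for each $l$, $u^{(l)}$ contains exactly one $1$, at position $\rho_l$" is detected by the program's own output pattern within that gadget, so membership of $w$ in $K_{n,\sigma_n}$ translates exactly into: some gadget has the right $\top_l$/$\bot_l$ counts and contains an $e$. I would express this with $\sim_{2k+1}$-type conditions: "contains $\top_l \top_l$ as a subword" is a length-$2$ condition ruling out two $1$s in block $l$; "contains no $\top_l$" rules out zero $1$s; and "contains the subword $\top_1 \top_2 \cdots \top_k e$ (in some order consistent with the separator placement)" captures the good gadget — this is a subword of length $k+1$, and combined with the separator-boundedness conditions one stays within $\LVPT[2k+1]$. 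I must be careful to set up the separator letters and the ordering of emissions so that a length-$(2k+1)$ piecewise condition can "localise" to a single gadget; this is exactly the role of the $2k+1$ bound and mirrors the construction referenced from~\cite{Grosshans-McKenzie-Segoufin-2017}.

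After fixing $Z_k$ I would verify the two directions of the equality $K_{n,\sigma_n} = \Psi_{(k+1)\cdot n,\sigma_n}^{-1}(Z_k^{=\length{\Psi}})$: if $w \in K_{n,\sigma_n}$, the gadget for the vector $\rho$ it defines has exactly one $\top_l$ per $l$ and an $e$ (since $v_j = 1$ for some $j \in \sigma_n(\rho)$), so $\Psi(w) \in Z_k$; conversely, if $\Psi(w) \in Z_k$, the witnessing gadget forces each $u^{(l)}$ to contain exactly one $1$ (hence a well-defined $\rho$) and forces $v$ to contain a $1$ in $\sigma_n(\rho)$, so $w \in K_{n,\sigma_n}$. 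The condition $\emptyword \notin Z_k$ is immediate since every defining subword has positive length. The main obstacle I anticipate is the \textbf{localisation within a single gadget}: a piecewise-testable condition cannot directly say "these $\top_l$'s all lie between the same pair of separators", so I need to engineer the alphabet of separators and the emission order (for instance, a per-gadget nested separator scheme, or interleaving the $\top_l$ with position-dependent separators) cleverly enough that a bounded piecewise condition of length $2k+1$ does pin down one gadget — getting the constant exactly $2k+1$ rather than something larger is the delicate part, and it is precisely where the analogue construction in~\cite{Grosshans-McKenzie-Segoufin-2017} has to be adapted rather than quoted verbatim.
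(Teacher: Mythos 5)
Your proposal has a genuine gap, and it is exactly the one you flag at the end: the localisation of the piecewise-testable condition to a single gadget. Your program enumerates all $n^k$ candidate vectors $\rho$ and emits one gadget per $\rho$; membership in $K_{n,\sigma_n}$ then requires that \emph{some single gadget} simultaneously satisfies, for every $l \in [k]$, a positive condition (a $\top_l$ occurs) and a negative condition (no stray $\bot_l$ occurs), plus a condition on $v$. A Boolean combination of subword conditions of bounded length cannot tie these requirements to the same gadget: the positive witnesses can be scattered across different gadgets, and the negative conditions (``no $\bot_l$ \emph{within this gadget}'') are global complements that see all $n^k$ gadgets at once. No choice of separator alphabet fixes this while keeping the subword length at $2k+1$ (or indeed at any constant), because the number of gadgets grows with $n$ while the piecewise-testable condition is fixed. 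So the plan as stated does not yield a proof.

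The paper's construction avoids enumeration of $\rho$ altogether. It defines $Z_k$ recursively as the nested-bracket language $Y_{k-1}^*\,\top_k\,Z_{k-1}\,\bot_k\,Y_{k-1}^*$ with $Z_0 = Y_0^*\,\#\,Y_0^*$, i.e.\ each bracket letter must occur \emph{exactly once}, in properly nested order, with a $\#$ in the innermost scope; this single expression is manifestly in $\LVPT[2k+1]$. The program, for each position $i$ of the current block, emits an instruction that outputs $\top_k$ only if the input letter at $i$ is $1$ (and the neutral letter $e$ otherwise), then recursively runs the program for the restricted selector $\sigma|i$, then emits the matching $\bot_k$-instruction on position $i$. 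Thus the multiplicity of $\top_k$ in the output equals the number of $1$s in the block: zero or two $1$s directly violate the ``exactly once'' requirement of $Z_k$, and when there is exactly one $1$ the unique surviving bracket pair localises the recursion for free. That conditional emission of the brackets — rather than unconditional per-$\rho$ gadgets checked afterwards — is the missing idea; without it, or something equivalent to it, your approach cannot be completed.
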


\begin{proof}
    We first define by induction on $k$ a family of languages $Z_k$ over the
    alphabet $Y_k$.
    For $k = 0$, set $Z_0 = Y_0^* \# Y_0^*$.
    For $k \in \N_{>0}$, the language $Z_k$ is the set of words containing each
    of $\top_k$ and $\bot_k$ exactly once, the first before the latter, and
    verifying that the factor between the occurrence of $\top_k$ and the
    occurrence of $\bot_k$ belongs to $Z_{k - 1}$, i.e.
    $Z_k = Y_{k - 1}^* \top_k Z_{k - 1} \bot_k Y_{k - 1}^*$.
    A simple induction on $k$ shows that $Z_k$ for $k \in \N$ is defined by the
    expression
    \[
	Y_{k - 1}^* \top_k Y_{k - 2}^* \top_{k - 1} \cdots Y_1^* \top_2 Y_0^*
	\top_1 Y_0^* \# Y_0^* \bot_1 Y_0^* \bot_2 Y_1^* \cdots \bot_{k - 1}
	Y_{k - 2}^* \bot_k Y_{k - 1}^*
	\displaypunct{,}
    \]
    hence it belongs to $\LVPT[2 k + 1](Y_k^*)$ and in particular does not
    contain the empty word $\emptyword$.

    Fix $n \in \N$. If $n = 0$, the proposition follows trivially since for any
    $k$-selector $\sigma_0$ over $0$, we have $K_{0, \sigma_0} = \emptyset$ and
    $\emptyword \notin Z_k$; otherwise, we define by induction on $k$ a
    $Y_k$-program $P_k(d, \sigma)$ on $\set{0, 1}^{(d + k + 1) \cdot n}$ for
    every $k$-selector $\sigma$ over $n$ and every $d \in \N$.

    For any $j \in [n]$ and $\sigma$ a $0$-selector over $n$, which is just a
    function in $\powerset{[n]}^{\set{\emptyword}}$, let
    $h_{j, \sigma}\colon \set{0, 1} \to Y_0$ be the function defined by
    $h_{j, \sigma}(0) = e$ and
    $h_{j, \sigma}(1) =
     \begin{cases}
	 \# & \text{if $j \in \sigma(\emptyword)$}\\
	 e & \text{otherwise}
     \end{cases}$.
    For all $k \in \N_{>0}$, we also let $f_k$ and $g_k$ be the functions in
    ${Y_k}^{\set{0, 1}}$ defined by $f_k(0) = g_k(0) = e$, $f_k(1) = \top_k$ and
    $g_k(1) = \bot_k$. Moreover, for any $k$-selector $\sigma$ over $n$, the
    symbol $\sigma|j$ for $j \in [n]$ denotes the $(k - 1)$-selector over $n$
    such that for all $\rho' \in [n]^{k - 1}$, we have $i \in \sigma|j(\rho')$
    if and only if $i \in \sigma((j, \rho'))$.

    For $k \in \N_{>0}$, for $d \in \N$ and $\sigma$ a $k$-selector over $n$,
    the $Y_k$-program $P_k(d, \sigma)$ on $\set{0, 1}^{(d + k + 1) \cdot n}$ is
    the following sequence of instructions:
    \begin{align*}
	& (d \cdot n + 1, f_k) P_{k - 1}(d + 1, \sigma|1) (d \cdot n + 1, g_k)\\
	\cdots &
	(d \cdot n + n, f_k) P_{k - 1}(d + 1, \sigma|n) (d \cdot n + n, g_k)
	\displaypunct{.}
    \end{align*}
    In words, for each position
    $i \in \intinterval{d \cdot n + 1}{d \cdot n + n}$ with a $1$ in the
    $(d + 1)$-th block of $n$ letters in the input, the program runs, between
    the symbols $\top_k$ and $\bot_k$, the program $P_{k - 1}(d + 1, \sigma|i)$
    obtained by induction for $\sigma|i$ the $(k - 1)$-selector over $n$
    obtained by restricting $\sigma$ to all vectors in $[n]^k$ whose first
    coordinate is $i$.

    For $k = 0$, for $d \in \N$ and $\sigma$ a $0$-selector over $n$, the
    $Y_0$-program $P_0(d, \sigma)$ on $\set{0, 1}^{(d + 1) \cdot n}$ is the
    following sequence of instructions:
    \[
	(d \cdot n + 1, h_{1, \sigma}) (d \cdot n + 2, h_{2, \sigma}) \cdots
	(d \cdot n + n, h_{n, \sigma})
	\displaypunct{.}
    \]
    In words, for each position
    $i \in \intinterval{d \cdot n + 1}{d \cdot n + n}$ with a $1$ in the
    $(d + 1)$-th block of $n$ letters in the input, the program outputs $\#$ if
    and only if $(i - d \cdot n)$ does belong to the set $\sigma(\emptyword)$.

    In short, $P_k(d, \sigma)$ is designed so that for any
    $w \in \set{0, 1}^{(d + k + 1) \cdot n}$, the word $P_k(d, \sigma)(w)$
    belongs to $Z_k$ if and only if the last $(k + 1) \cdot n$ letters of $w$
    form a word of $K_{n, \sigma}$.

    A simple computation shows that for any $k \in \N$, any $d \in \N$ and
    $\sigma$ a $k$-selector over $n$, the number of instructions in
    $P_k(d, \sigma)$ is at most $2 \cdot (k + 1) \cdot n^{k + 1}$.

    A simple induction on $k$ shows that for any $k \in \N$ and $d \in \N$, when
    running on a word $w \in \set{0, 1}^{(d + k + 1) \cdot n}$, for any $\sigma$
    a $k$-selector over $n$, the program $P_k(d, \sigma)$ returns a word in
    $Z_k$ if and only if when $u^{(1)}, u^{(2)}, \ldots, u^{(k)}, v$ are the
    last $k + 1$ consecutive blocks of $n$ letters of $w$, then
    $u^{(1)}, u^{(2)}, \ldots, u^{(k)}$ each contain $1$ exactly once and define
    the vector $\rho$ in $[n]^k$ where for all $i \in [k]$, the value $\rho_i$
    is given by the position of the only $1$ in $u^{(i)}$, verifying that there
    exists $j \in \sigma_n(\rho)$ such that $v_j$ is $1$.

    Therefore, for any $k \in \N$ and $\sigma_n$ a $k$-selector over $n$, if we
    set $\Psi_{(k + 1) \cdot n, \sigma_n} = P_k(0, \sigma_n)$, we have
    $K_{n, \sigma_n} = \Psi_{(k + 1) \cdot n, \sigma_n}^{-1}
		       (Z_k^{=\length{\Psi_{(k + 1) \cdot n, \sigma_n}}})$
    where $\Psi_{(k + 1) \cdot n, \sigma_n}$ is a $Y_k$-program on
    $\set{0, 1}^{(k + 1) \cdot n}$ of length at most
    $2 \cdot (k + 1) \cdot n^{k + 1}$.
\end{proof}

Consequently, for all $k \in \N$ and any sequence of $k$-selectors $\Delta$,
since the language $Z_k$ is in $\LVPT[2 k + 1](Y_k^*)$ and thus recognised by a
monoid from $\FMVJ[2 k + 1]$, we have, by
Proposition~\ref{ptn:Program-reduction_to_regular_language}, that
$L_\Delta \in \Prog{\FMVJ[2 k + 1], n^{k + 1}}$.

\paragraph{Lower bound.}
For the lower bound, we use the following claim, whose proof can be found
in~\cite[Claim 10]{Grosshans-McKenzie-Segoufin-2017}.

\begin{claim}\label{claim-monoid-fixed}
    For all $i \in \N_{>0}$ and $n \in \N$, the number of languages in
    $\set{0, 1}^n$ recognised by programs over a monoid of order $i$ on
    $\set{0, 1}^n$, with at most $l \in \N$ instructions, is upper-bounded by
    $i^{i^2} 2^i \cdot (n \cdot i^2)^l$.
\end{claim}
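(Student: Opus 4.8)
The plan is to obtain the bound as a product of three independent counts: the number of possible multiplication tables on a set of size $i$, the number of programs of a \emph{fixed} length over such a monoid, and the number of possible sets of accepting elements.

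First I would note that, since the languages recognised by a program over a monoid depend only on that monoid up to isomorphism, every monoid of order $i$ may be taken to have underlying set $[i]$, so that its multiplication is a function $[i] \times [i] \to [i]$. There are at most $i^{i^2}$ such functions — we need not even impose associativity or the existence of an identity, as we only seek an upper bound — hence at most $i^{i^2}$ monoids of order $i$ to consider. It thus suffices to bound, for each fixed monoid $M$ of order $i$ on $[i]$, the number of languages of $\set{0, 1}^n$ recognised by some program over $M$ on $\set{0, 1}^n$ with at most $l$ instructions, and then to multiply by $i^{i^2}$.

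So fix such an $M$ with identity $1$, and fix $n \geq 1$. I would first reduce to programs of length \emph{exactly} $l$: appending to any program over $M$ on $\set{0, 1}^n$ with at most $l$ instructions enough copies of the instruction $(1, c)$, where $c \colon \set{0, 1} \to M$ is the constant function with value $1$, produces a program of length exactly $l$ computing the same function $\set{0, 1}^n \to M$, hence recognising the same languages. Now a program of length exactly $l$ is a sequence of $l$ instructions, each a pair $(j, f)$ with $j \in [n]$ and $f \in M^{\set{0, 1}}$; there being $n$ choices for $j$ and $i^2$ choices for $f$, there are at most $(n \cdot i^2)^l$ such programs. Finally, a fixed program $P$ over $M$ recognises precisely the languages $P^{-1}(F)$ for $F \subseteq M$, and there are $2^i$ such subsets. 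Combining the three counts — at most $i^{i^2}$ monoids, at most $(n \cdot i^2)^l$ programs per monoid, at most $2^i$ accepting sets per program — yields the announced bound $i^{i^2} \cdot 2^i \cdot (n \cdot i^2)^l$.

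There is no real obstacle here; the argument is a direct counting. The two points deserving a little care are: remembering to range over all monoids of order $i$ rather than fixing one, which is exactly what produces the factor $i^{i^2}$; and the padding step, which lets us count only programs of length exactly $l$ instead of summing $(n \cdot i^2)^m$ over all $m \leq l$ (the naive sum would give only a weaker bound). One may assume $n \geq 1$ throughout, the case $n = 0$ being degenerate.
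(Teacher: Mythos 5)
Your proof is correct, and it is the standard counting argument that the paper itself does not reproduce but delegates to~\cite[Claim~10]{Grosshans-McKenzie-Segoufin-2017}: the three factors $i^{i^2}$, $(n \cdot i^2)^l$ and $2^i$ of the stated bound correspond exactly to your three counts (multiplication tables on $[i]$, padded programs of length exactly $l$, and accepting subsets). The padding trick to avoid summing over lengths $m \leq l$ and the reduction to monoids with underlying set $[i]$ are both handled properly, and your remark that $n = 0$ is degenerate is the right way to dispose of that edge case.
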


If for some $k \in \N$ and $i \in [\alpha]$ with $\alpha \in \N_{>0}$, we apply
this claim for all $n \in \N$ and $l = \alpha \cdot ((k + 1) \cdot n)^k$, we get
a number $\mu_i(n)$ of languages in $\set{0, 1}^{(k + 1) \cdot n}$ recognised by
programs over a monoid of order $i$ on $\set{0, 1}^{(k + 1) \cdot n}$ with at
most $l$ instructions that is in $2^{\Omicron(n^k \log_2(n))}$, which is
asymptotically strictly smaller than the number of distinct $K_{n, \sigma_n}$
when the $k$-selector $\sigma_n$ over $n$ varies, which is $2^{n^{k + 1}}$, i.e.
$\mu_i(n)$ is in $\omicron(2^{n^{k + 1}})$.

Hence, for all $j \in \N_{>0}$, there exist an $n_j \in \N$ and $\tau_j$ a
$k$-selector over $n_j$ such that no program over a monoid of order $i \in [j]$
on $\set{0, 1}^{(k + 1) \cdot n_j}$ and of length at most
$j \cdot ((k + 1) \cdot n_j)^k$ recognises $K_{n_j, \tau_j}$. Moreover, we can
assume without loss of generality that the sequence $(n_j)_{j \in \N_{>0}}$ is
increasing.
Let $\Delta = (\sigma_n)_{n \in \N}$ be such that $\sigma_{n_j} = \tau_j$ for
all $j \in \N_{>0}$ and
$\sigma_n\colon [n]^k \to \powerset{[n]}, \rho \mapsto \emptyset$ for any
$n \in \N$ verifying that it is not equal to any $n_j$ for $j \in \N_{>0}$. We
show that no sequence of programs over a finite monoid of length $\Omicron(n^k)$
can recognise $L_\Delta$. If this were the case, then let $i$ be the order of
the monoid. Let $j \in \N, j \geq i$ be such that for any $n \in \N$, the $n$-th
program has length at most $j \cdot n^k$. But, by construction, we know that
there does not exist any such program on $\set{0, 1}^{(k + 1) \cdot n_j}$
recognising $K_{n_j, \tau_j}$, a contradiction.

This implies the following hierarchy, using the fact that for all $k \in \N$ and
all $d \in \N, d \leq \ceiling{\frac{k}{2}} - 1$, any monoid from $\FMVJ[d]$ is
also a monoid from $\FMVJ[k]$.

\begin{proposition}
\label{ptn:P(J)_strict_hierarchy}
    For all $k \in \N$, we have
    $\Prog{\FMVJ, n^k} \subset \Prog{\FMVJ, n^{k + 1}}$.
    More precisely, for all $k \in \N$ and
    $d \in \N, d \leq \ceiling{\frac{k}{2}} - 1$, we have
    $\Prog{\FMVJ[k], n^d} \subset \Prog{\FMVJ[k], n^{d + 1}}$.
\end{proposition}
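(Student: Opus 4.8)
The plan is to derive Proposition~\ref{ptn:P(J)_strict_hierarchy} by combining the upper bound from Proposition~\ref{ptn:k-selectors_languages_in_P(J)} with the counting lower bound from Claim~\ref{claim-monoid-fixed}, essentially by packaging the two ``paragraphs'' above (Upper bound / Lower bound) into a clean statement. First I would fix $k \in \N$ and set $d = \ceiling{k/2} - 1$, so that $2d+1 \leq k$. From Proposition~\ref{ptn:k-selectors_languages_in_P(J)}, for any sequence of $d$-selectors $\Delta$ we obtain a program-reduction from $L_\Delta$ to $Z_d \in \LVPT[2d+1](Y_d^*)$ of length $\Omicron(n^{d+1})$; since $Z_d$ is recognised (classically) by a monoid in $\FMVJ[2d+1] \subseteq \FMVJ[k]$, Proposition~\ref{ptn:Program-reduction_to_regular_language} gives $L_\Delta \in \Prog{\FMVJ[k], n^{d+1}}$. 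This shows membership in the larger class for \emph{every} choice of $\Delta$.

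Next I would run the lower-bound counting argument to show that \emph{some} $\Delta$ escapes $\Prog{\FMVJ[k], n^d}$ — indeed, escapes $\Prog{M, n^d}$ for every fixed finite monoid $M$, which is stronger and simpler to state. Applying Claim~\ref{claim-monoid-fixed} with $l = \alpha \cdot ((d+1)\cdot n)^d$ and a monoid of order $i$, the number of distinct languages over $\set{0,1}^{(d+1)\cdot n}$ recognised this way is $2^{\Omicron(n^d \log n)}$, which is $\omicron\bigl(2^{n^{d+1}}\bigr)$, while the number of distinct languages $K_{n,\sigma_n}$ as the $d$-selector $\sigma_n$ ranges over $\powerset{[n]}^{[n]^d}$ is $2^{n^{d+1}}$ (each of the $n^d$ vectors $\rho$ can independently pick any of the $2^n$ subsets, but only the subset — equivalently a bit per position — matters, giving $2^{n \cdot n^d}$). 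A standard diagonalisation over all orders $i$ and all length-constants $j$, exactly as in the excerpt, then produces a sequence of $d$-selectors $\Delta$ with $L_\Delta \notin \Prog{\FMVariety{V}, n^d}$ for any variety $\FMVariety{V}$ of monoids whatsoever, in particular $L_\Delta \notin \Prog{\FMVJ[k], n^d}$. Combining the two parts yields $\Prog{\FMVJ[k], n^d} \subsetneq \Prog{\FMVJ[k], n^{d+1}}$; and since $\Prog{\FMVJ[k], n^{d'}} \subseteq \Prog{\FMVJ[k], n^{d'+1}}$ trivially for all $d'$, the strict inclusion propagates to all $d \leq \ceiling{k/2}-1$. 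For the first, coarser statement, one simply takes $k$ large enough (e.g. $k' = 2k+2$, so $\ceiling{k'/2}-1 = k$) and uses $\FMVJ[k'] \subseteq \FMVJ$ together with the same $L_\Delta$.

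The main obstacle — though it is more bookkeeping than conceptual difficulty — is making sure the length bounds line up across the reduction. Proposition~\ref{ptn:k-selectors_languages_in_P(J)} gives a program-reduction of length $2 \cdot (d+1) \cdot n^{d+1}$ from $K_{n,\sigma_n}$ to $Z_d$ on inputs of length $N = (d+1)\cdot n$; rewriting this in terms of the true input length $N$ gives length $\Theta(N^{d+1})$, hence the $2 \cdot (d+1)^{-d} \cdot n^{d+1}$ constant quoted earlier, and one must check that $\Prog{\FMVJ[k], n^{d+1}}$ as \emph{defined} (closure under multiplication by a positive real constant) absorbs this, which it does. Symmetrically, in the lower bound one must choose the constant $\alpha$ in $l = \alpha \cdot ((d+1)\cdot n)^d$ so that it dominates any fixed polynomial constant $j$ in the hypothesised $\Prog{\cdot, n^d}$ sequence — this is handled by quantifying over all $j$ in the diagonalisation, picking $n_j$ large and $\tau_j$ bad for all monoids of order at most $j$ and all programs of length at most $j \cdot n^d$ simultaneously, exactly as written. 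Finally I would note the one genuinely used algebraic input: $\FMVJ[d'] \subseteq \FMVJ[k]$ whenever $2d'+1 \le k$ (equivalently $d' \le \ceiling{k/2}-1$), which follows because $\sim_{d'}$ refines $\sim_k$ is false — rather, $\sim_k$ refines $\sim_{d'}$, so every $\sim_{d'}$-class is a union of $\sim_k$-classes and the quotient by $\sim_{d'}$ is a quotient of the quotient by $\sim_k$; this is the fact the excerpt invokes just before the proposition, so I would cite it rather than reprove it.
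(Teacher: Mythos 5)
Your proposal follows the paper's own route essentially step for step: the same combination of Proposition~\ref{ptn:k-selectors_languages_in_P(J)} (together with the containment $\FMVJ[2d+1] \subseteq \FMVJ[k]$ forced by $2d+1 \leq k$) for the upper bound, and the same counting-plus-diagonalisation over all monoid orders for the lower bound, with the correct count $2^{n^{d+1}}$ of distinct $K_{n, \sigma_n}$ set against the $2^{\Omicron(n^d \log n)}$ bound of Claim~\ref{claim-monoid-fixed}. The length bookkeeping across the reduction, the derivation of the coarser statement over $\FMVJ$, and the justification that $\Sigma^* / \sim_{d'}$ is a morphic image of $\Sigma^* / \sim_k$ for $d' \leq k$ are all correct and match the paper.

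The one step that does not work as written is the final ``propagation''. You establish strictness only at the single level $d = \ceiling{k/2} - 1$ and then claim that, because the hierarchy is monotone, strictness propagates to every $d \leq \ceiling{k/2} - 1$. Monotonicity gives no such thing: a chain $A_0 \subseteq A_1 \subseteq \cdots \subseteq A_{D+1}$ can have $A_D \subset A_{D+1}$ strict while $A_d = A_{d+1}$ for every $d < D$, and there is no padding or translation argument for program length that would push a collapse at a lower level up to level $D$. The repair is immediate and is what the paper does implicitly: for each $d \leq \ceiling{k/2} - 1$ \emph{separately}, instantiate the upper bound with $d$-selectors --- which places the witness $L_\Delta$ in $\Prog{\FMVJ[2d+1], n^{d+1}} \subseteq \Prog{\FMVJ[k], n^{d+1}}$ precisely because $d \leq \ceiling{k/2} - 1$ guarantees $2d+1 \leq k$ --- and instantiate the lower bound with the corresponding diagonal sequence of $d$-selectors, which gives $L_\Delta \notin \Prog{M, n^d}$ for every finite monoid $M$ and hence $L_\Delta \notin \Prog{\FMVJ[k], n^d}$. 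With that substitution your argument is complete and coincides with the paper's.
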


\subsection{Collapse}

Looking at Proposition~\ref{ptn:P(J)_strict_hierarchy}, it looks at first glance
rather strange that, 
for each $k \in \N$, we can only prove strictness of the hierarchy inside
$\Prog{\FMVJ[k]}$ up to exponent $\ceiling{\frac{k}{2}}$. We now show, in a way
similar to Subsection~5.2 in~\cite{Grosshans-McKenzie-Segoufin-2017}, that in
fact $\Prog{\FMVJ[k]}$ does collapse to $\Prog{\FMVJ[k], n^{\ceiling{k / 2}}}$
for all $k \in \N$, showing Proposition~\ref{ptn:P(J)_strict_hierarchy} to be
optimal in some sense.

\begin{proposition}
\label{ptn:P(J)_collapse}
    Let $k \in \N$.
    Let $M \in \FMVJ[k]$ and $\Sigma$ be an alphabet.
    Then there exists a constant $c \in \N_{>0}$ such that any program over $M$
    on $\Sigma^n$ for $n \in \N$ is equivalent to a program over $M$ on
    $\Sigma^n$ of length at most $c \cdot n^{\ceiling{k / 2}}$.

    In particular, $\Prog{\FMVJ[k]} = \Prog{\FMVJ[k], n^{\ceiling{k / 2}}}$ for
    all $k \in \N$.
\end{proposition}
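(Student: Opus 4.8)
The plan is to prove the quantitative statement — every program over $M \in \FMVJ[k]$ on $\Sigma^n$ is equivalent to one of length $O(n^{\ceil{k/2}})$ — since the collapse $\Prog{\FMVJ[k]} = \Prog{\FMVJ[k], n^{\ceil{k/2}}}$ follows immediately by applying it to every member of a polynomial-length sequence of programs. The overall strategy mirrors Subsection~5.2 of~\cite{Grosshans-McKenzie-Segoufin-2017}, so I would first recall the structural feature of $\FMVJ[k]$ that makes short programs possible: a monoid in $\FMVJ[k]$ is a quotient of $\Sigma^*/{\sim_k}$, so the product of a sequence of monoid elements depends only on the multiset of ``$k$-subword information'' of that sequence — more precisely, two words over $M$ that are $\sim_k$-equivalent (viewing $M$ itself as an alphabet) map to the same element under the evaluation morphism $M^* \to M$. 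Hence a program over $M$ on $\Sigma^n$ is really computing, for each input $w \in \Sigma^n$, the $\sim_k$-class of the word $P(w) \in M^*$ obtained by reading off the instruction outputs, and only that class matters.

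The **core combinatorial step** is then a pruning argument on the instruction sequence. Fix the program $P = (i_1,f_1)\cdots(i_\ell,f_\ell)$. For a word $v$ over the alphabet $M$ and $u$ a candidate $k$-subword, whether $v \in u \shuffle M^*$ is determined by greedily scanning $v$ left to right and matching the letters of $u$ one at a time. Translating this to $P$: for a fixed target subword pattern of length at most $k$ — really a choice of at most $k$ ``slots'', each slot asking for a particular monoid element to be produced — the set of inputs $w$ for which $P(w)$ contains that pattern as a $k$-subword is expressible via at most $k$ nested existential choices of instruction indices. The classical trick (due to Tesson--Thérien, adapted here) is that one does not need all $\ell$ instructions to witness such patterns: for each of the (constantly many) relevant patterns and each way of interleaving the ``already matched prefix'' position, only $O(n)$ instructions per matching step are genuinely needed, because at a given point in the scan the only information that matters is the current position in $[n]$ whose letter is about to determine the next output, and there are only $n$ such positions. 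A careful accounting — at most $k$ nested layers, each contributing a factor $n$, but with the layers paired up so that two consecutive ``coordinate'' choices can be fused into one scan of length $O(n)$ rather than $O(n^2)$ — yields the exponent $\ceil{k/2}$ rather than $k$. The constant $c$ absorbs the dependence on $|M|$, $|\Sigma|$ and $k$, which are all fixed.

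The **main obstacle** is precisely this halving of the exponent: the naive bound from $k$ nested position-quantifiers is $n^k$, and getting down to $n^{\ceil{k/2}}$ requires the observation that in the $\sim_k$-membership test for a length-$\le k$ subword, consecutive scan steps share structure allowing two ``rounds'' of the greedy match to be implemented by a single sweep over the $n$ input positions (this is the same phenomenon that forces the lower bound in Proposition~\ref{ptn:P(J)_strict_hierarchy} to stop at $\ceil{k/2}$, and indeed matching the two is the point of calling the hierarchy ``optimal''). Concretely, I would: (1) reduce ``equivalence of programs'' to ``computing the $\sim_k$-class of the output word''; (2) show that this class is a Boolean combination of $O(1)$ predicates of the form ``$P(w)$ contains subword $\gamma \in M^{\le k}$''; (3) show each such predicate is computed by a program of length $O(n^{\ceil{k/2}})$ by the nested-scan construction with paired layers, verifying that the $\sim_k$-class of each such short program's output correctly encodes the predicate (using closure of the construction inside $\FMVJ[k]$, analogous to how $Z_k$ stays in $\LVPT[2k+1]$ in Proposition~\ref{ptn:k-selectors_languages_in_P(J)}); (4) assemble the pieces using Proposition~\ref{lemma-simple-closure-P}-style closure (here done directly at the program level over a single monoid, or via a product construction staying in $\FMVJ[k]$) and read off $c$. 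I expect step~(3), and especially checking that the fused double-layer still outputs something whose $\sim_k$-type is the intended one, to require the only real care; everything else is bookkeeping.
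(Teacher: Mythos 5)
Your steps (1), (2) and (4) match the paper's reduction: since $M \in \FMVJ[k]$, equivalence of programs over $M$ depends only on the set of $k$-subwords of the output word $\evalP{P}(w) \in M^*$, so it suffices to handle, for each of the finitely many $t \in M^{\leq k}$, the predicate ``$t$ is a subword of $\evalP{P}(w)$''. The genuine gap is in your step (3), which is exactly where the exponent $\ceiling{k/2}$ must come from and which you leave at ``two consecutive coordinate choices can be fused into one scan''. Taken literally, pairing \emph{consecutive} letters of the pattern does not work. The $O(n)$-instruction trick for a single letter consists of keeping, for each (position, letter) pair, either the \emph{first} instruction of $P$ that would output that monoid element (correct for the leftmost letter of the pattern, whose witness can only move left) or the \emph{last} such instruction (correct for the rightmost letter, whose witness can only move right); for a middle letter, the set of candidate instruction indices in a left-to-right greedy match depends on where the previous letter was matched, so each additional letter multiplies the count by another factor of $n$, giving an exponent linear in $k$. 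The paper's Lemma~\ref{lem:Program_compression_subword_presence} instead pairs the letters \emph{outside-in}: at each level it retains the first-instruction anchors for $t_1$ and the last-instruction anchors for $t_k$ (a set $I_{(1,k)}$ of at most $2 \card{\Sigma} n$ indices), observes that any occurrence of $t$ forces each middle pattern $t_\alpha \cdots t_\beta$ with $1 < \alpha \leq \beta < k$ to be witnessed inside single gaps between consecutive anchors, and recurses on all such middle patterns in each of the $O(n)$ gaps; each level thus strips \emph{two} letters (the first and the last) at the cost of one factor of $n$. Supplying this bracketing argument, including the reassembly of the per-gap witnesses into a witness for $t$, is the content you are missing.

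A secondary issue: the proposition requires an equivalent program over the \emph{same} monoid $M$. Building one program per subword predicate and assembling them by a product construction yields a program over some other monoid of $\FMVJ[k]$ (note that $M$ itself need not classically recognise $\gamma \shuffle M^*$ for a given $\gamma \in M^{\leq k}$); this suffices for the collapse $\Prog{\FMVJ[k]} = \Prog{\FMVJ[k], n^{\ceiling{k/2}}}$ but not for the first assertion as stated. The paper sidesteps this by producing a \emph{subprogram} of $P$: the index sets retained for the various patterns $t$ are simply unioned, and the resulting subsequence of the original instructions is automatically a program over $M$ equivalent to $P$.
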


Actually, the equivalent shorter program we give is even a \emph{subprogram} of
the original one, i.e. a subsequence of the latter. For $P$ some program over a
finite monoid $M$, we may denote by $\evalP{P}$ the function that associates to
each possible input word $w$ the word in $M^{\length{P}}$ obtained by
successively evaluating the instructions of $P$ for $w$.

Observe that given $P$ a program over some finite monoid $M$ on $\Sigma^n$ for
$n \in \N$ and $\Sigma$ an alphabet, a subprogram $P'$ of $P$ is equivalent to
$P$ if and only if for every language $K \subseteq M^*$ recognised by the
evaluation morphism $\eta_M$ of $M$, the unique morphism from $M^*$ to $M$
extending the identity on $M$, we have
$\evalP{P}(w) \in K \Leftrightarrow \evalP{P'}(w) \in K$ for all
$w \in \Sigma^n$. Moreover, every language recognised by $\eta_M$ is precisely a
language of $\LVPT[k](M^*)$ when $M \in \FMVJ[k]$ for some $k \in \N$.

The result is hence a consequence of the following lemma and the fact that every
language in $\LVPT[k](M^*)$ is a union of $\sim_k$-classes, each of those
classes corresponding to all words over $M$ having the same set of $k$-subwords,
that is finite.

\begin{lemma}
\label{lem:Program_compression_subword_presence}
    Let $\Sigma$ be an alphabet and $M$ a finite monoid.

    For all $k \in \N$, there exists a constant $c \in \N_{>0}$ verifying that
    for any program $P$ over $M$ on $\Sigma^n$ for $n \in \N$ and any word
    $t \in M^k$, there exists a subprogram $Q$ of $P$ of length at most
    $c \cdot n^{\ceiling{k / 2}}$ such that for any subprogram $Q'$ of $P$ that
    has $Q$ as a subprogram, we have that $t$ is a subword of $\evalP{P}(w)$ if
    and only if $t$ is a subword of $\evalP{Q'}(w)$ for all $w \in \Sigma^n$.
\end{lemma}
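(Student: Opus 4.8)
The plan is to prove the lemma by induction on $k$, producing the subprogram $Q$ that "witnesses" every possible placement of the word $t = t_1 t_2 \cdots t_k \in M^k$ as a subword of the evaluation $\evalP{P}(w)$. The crucial quantitative point is that a single instruction of $P$ outputs a monoid element that depends only on which of the $|\Sigma|$ letters sits at some fixed input position; so for a length-$n$ input there are only finitely many "behaviours" an instruction can have, and what matters for detecting $t$ as a subword is only whether, for each input $w$, we can chain together $k$ instructions $p_{j_1} < p_{j_2} < \cdots < p_{j_k}$ in $P$ with $\evalP{P}(w)_{j_\ell} = t_\ell$. First I would set up the bookkeeping: for a word $s \in M^m$ and a factorisation target, define for each pair $(a,b)$ with $a \le b$ the predicate "$t$ embeds into the factor of $s$ between positions $a$ and $b$", and observe that the set of $w$ for which $t$ is a subword of $\evalP{P}(w)$ is a Boolean combination (a union over first-occurrence positions) of such predicates applied to subfactors of $P$.

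The inductive step is the heart of the argument and is where the exponent $\ceiling{k/2}$ comes from. Split $t$ as $t = t' t''$ with $\length{t'} = \ceiling{k/2}$ and $\length{t''} = \floor{k/2}$. For a fixed input $w$, $t$ is a subword of $\evalP{P}(w)$ iff there is a "split position" $r$ in $P$ such that $t'$ is a subword of the prefix of $\evalP{P}(w)$ up to $r$ and $t''$ is a subword of the suffix from $r{+}1$. By the induction hypothesis applied to the prefix programs (for $t'$) and the suffix programs (for $t''$), each such half-test is determined by a subprogram of size $c' \cdot n^{\ceiling{\length{t'}/2}}$ respectively $c'' \cdot n^{\ceiling{\length{t''}/2}}$. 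The obstacle is that $r$ ranges over up to $\length{P}$ positions, which is unbounded, so naively taking a witness subprogram for every $r$ is far too large. The resolution — and this is the step I expect to be the main difficulty — is that we do not need every split position: for detecting $t'$ as a subword of a prefix, only the \emph{first} occurrence pattern matters, so among all split positions $r$ it suffices to keep, for each of the finitely many possible "states" (which $k$-subwords over $M$ have been realised so far, an element of a finite set by Simon's bound), one representative split position. More precisely, I would argue that the predicate "$t'$ is a subword of the length-$r$ prefix of $\evalP{P}(w)$" is, as $r$ grows, a monotone-in-$r$ event, so it flips at most once; combined with the symmetric statement for $t''$ read from the right, only $\Omicron(1)$ split positions are genuinely relevant per input, and one picks a fixed finite set $R$ of split positions (depending only on $M$ and $k$, not on $n$) that covers all cases, then takes $Q$ to be the union of the witness subprograms obtained from the induction hypothesis at those $\length{R} = \Omicron(1)$ split positions.

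This gives $\length{Q} \le \length{R} \cdot \bigl(c' n^{\ceiling{\length{t'}/2}} + c'' n^{\ceiling{\length{t''}/2}}\bigr) \le c \cdot n^{\ceiling{k/2}}$ for a suitable constant $c$, since $\ceiling{\length{t'}/2} \le \ceiling{\ceiling{k/2}/2} \le \ceiling{k/2}$ and likewise for $t''$ — actually the recursion $\ceiling{k/2} = \ceiling{\ceiling{k/2}/2} + \ceiling{\floor{k/2}/2}$ is what one checks to see the bound is tight. The base cases $k = 0$ (trivial: the empty word is always a subword, take $Q$ empty) and $k = 1$ (take $Q = P$, of length $\le n \cdot |\Sigma| \le$ bounded times $n$, though in fact a single "detect whether some instruction outputs $t_1$" argument shows a constant suffices — but $n^{\ceiling{1/2}} = n$ is already enough) are straightforward. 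The closure property "$Q'$ has $Q$ as a subprogram implies the same subword-detection behaviour as $P$" is maintained through the induction because adding instructions to $Q$ can only add subwords to $\evalP{Q'}(w)$, and the witness subprogram $Q$ was chosen so that whenever $t$ embeds into $\evalP{P}(w)$ it already embeds into $\evalP{Q}(w) \subseteq \evalP{Q'}(w)$, while conversely $\evalP{Q'}(w)$ is a subword of $\evalP{P}(w)$ so no spurious embeddings appear — this monotonicity is exactly why the statement is phrased with the intermediate subprogram $Q'$.
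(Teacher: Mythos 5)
There is a genuine gap, and it sits exactly at the step you yourself flag as the main difficulty: the reduction to a bounded set of split positions. You split $t = t' t''$ at the middle and want, for every input $w$, a split position $r$ such that $t'$ embeds into the length-$r$ prefix of $\evalP{P}(w)$ and $t''$ into the remaining suffix; you then assert that a fixed set $R$ of split positions, of size depending only on $M$ and $k$ (not on $n$ or $\length{P}$), covers all inputs. The monotonicity you invoke only says that, \emph{for each fixed $w$}, the predicate ``$t'$ is a subword of the length-$r$ prefix'' flips at most once as $r$ grows; but that flip position is a global function of the whole input and varies from one $w$ to another, so the union over all $w \in \Sigma^n$ of the relevant split positions is not bounded independently of $n$. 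Already for $t'$ of length $2$ and a program reading positions $1, 2, \ldots, n$ in order, the position where $t'$ first completes as a subword of the prefix can be any index in $[n]$ depending on $w$. Nothing in the proposal controls this. By contrast, the paper's marked positions are selected by a purely syntactic, input-independent criterion (the first instruction $(p,f)$ with $f(a) = t_1$, the last with $f(a) = t_k$, over all pairs $(p,a)$), which is why there are at most $2 \card{\Sigma} n$ of them.

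Moreover, if your claim were true it would prove far too much: with $\card{R} = \Omicron(1)$ your recursion for the exponent reads $g(k) = \max\bigl(g(\ceiling{k/2}), g(\lfloor k/2 \rfloor)\bigr)$ with $g(1) = g(2) = 1$, which closes at $g(k) = 1$ for all $k$, i.e.\ every program would compress to linear length --- flatly contradicting the strict hierarchy of Proposition~\ref{ptn:P(J)_strict_hierarchy}, which shows the exponent $\ceiling{k/2}$ cannot be improved. (The identity you check along the way, $\ceiling{k/2} = \ceiling{\ceiling{k/2}/2} + \ceiling{\lfloor k/2 \rfloor / 2}$, is also false: for $k = 6$ the right-hand side is $4$, and in any case the two witness subprograms at a common split position have lengths that add, so the relevant quantity is the maximum of the two exponents, not their sum.) The paper avoids all of this by peeling off only the outermost letters $t_1$ and $t_k$ at each step: it keeps the $\Omicron(n)$ syntactically first/last instructions for $t_1$ and $t_k$, and inside each of the $\Omicron(n)$ gaps between consecutive marked instructions it recurses on \emph{every} contiguous middle piece $t_\alpha \cdots t_\beta$ with $1 < \alpha \leq \beta < k$ (a constant number of them), yielding the recursion $E(k) = E(k-2) + 1$ and hence exactly $\ceiling{k/2}$. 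To repair your argument you would have to replace the middle split by this outer peeling, or find some other input-independent selection of split positions --- which is precisely the content of the paper's proof.
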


\begin{proof}
    A program $P$ over $M$ on $\Sigma^n$ for $n \in \N$ is a finite sequence
    $(p_i, f_i)$ of instructions where each $p_i$ is a positive natural number
    which is at most $n$ and each $f_i$ is a function from $\Sigma$ to $M$. We
    denote by $l$ the number of instructions of $P$. For each set
    $I \subseteq [l]$ we denote by $P[I]$ the subprogram of $P$ consisting of
    the subsequence of instructions of $P$ obtained after removing all
    instructions whose index is not in $I$. When $I = \intinterval{i}{j}$ for
    some $i, j \in [l]$, we may write $P[i, j]$ instead of $P[I]$.

    We prove the lemma by induction on $k$, fixing the constant to be
    $c_k = k! \cdot \card{\Sigma}^{\ceiling{k / 2}}$ for a given $k \in \N$.

    The intuition behind the proof for a program $P$ on inputs of length $n$ and
    some $t$ of length at least $3$ is as follows. Given $l$ the length of $P$,
    we will select a subset $I$ of the indices of instructions numbered from $1$
    to $l$ to obtain $P[I]$ verifying the conditions of the lemma.
    Consider all the indices $1 \leq i_1 < i_2 < \cdots < i_s \leq l$ that each
    correspond, for some letter $a$ and some position $p$ in the input, to the
    first instruction of $P$ that would output the element $t_1$ when reading
    $a$ at position $p$ or to the last instruction of $P$ that would output the
    element $t_k$ when reading $a$ at position $p$.
    We then have that, given some $w$ as input, $t$ is a subword of
    $\evalP{P}(w)$ if and only if there exist $1 \leq \gamma < \delta \leq s$
    verifying that the element at position $i_\gamma$ of $\evalP{P}(w)$ is
    $t_1$, the element at position $i_\delta$ of $\evalP{P}(w)$ is $t_k$ and
    $t_2 \cdots t_{k - 1}$ is a subword of
    $\evalP{P[i_\gamma + 1, i_\delta - 1]}(w)$.
    The idea is then that if we set $I$ to contain $i_1, i_2, \ldots, i_s$ as
    well as all indices obtained by induction for $P[i_j + 1, i_{j + 1} - 1]$
    and $t_\alpha \cdots t_\beta$ for all $1 \leq j \leq s - 1$ and
    $1 < \alpha \leq \beta < k$, we would have that for all $w$, the word $t$ is
    a subword of $\evalP{P}(w)$ if and only if it is a subword of
    $\evalP{P[I]}(w)$, that is $\evalP{P}(w)$ where only the elements at indices
    in $I$ have been kept. The length upper bound of the order of
    $n^{\ceiling{k / 2}}$ would be met because the number of possible values for
    $j$ is $s - 1$, hence at most linear in $n$, and the number of possible
    values for $(\alpha, \beta)$ is quadratic in $k$, a constant.

    The intuition behind the proof when $t$ is of length less than $3$ is
    essentially the same, but without induction.

    \paragraph{Inductive step.}
    Let $k \in \N, k \geq 3$ and assume the lemma proven for all
    $k' \in \N, k' < k$. Let $P$ be a program over $M$ on $\Sigma^n$ for
    $n \in \N$ of length $l \in \N$ and some word $t \in M^k$.

    Observe that when $n = 0$, we necessarily have $P = \emptyword$, so that the
    lemma is trivially proven in that case. So we now assume $n > 0$.

    For each $p \in [n]$ and each $a \in \Sigma$ consider within the sequence of
    instructions of $P$ the first instruction of the form $(p, f)$ with
    $f(a) = t_1$ and the last instruction of that form with $f(a) = t_k$, if
    they exist. We let $I_{(1, k)}$ be the set of indices of these instructions
    for all $a$ and $p$. Notice that the size of $I_{(1, k)}$ is at most
    $2 \cdot \card{\Sigma} \cdot n$.

    Let $s = \card{I_{(1, k)}}$ and let us denote
    $I_{(1, k)} = \set{i_1, i_2, \ldots, i_s}$ where
    $i_1 < i_2 < \cdots < i_s$. Given $\alpha, \beta \in [k]$, we also set
    $t^{(\alpha, \beta)} = t_\alpha t_{\alpha + 1} \cdots t_\beta$.
    For all $\alpha, \beta \in [k]$ such that $1 < \alpha \leq \beta < k$ and
    $j \in [s - 1]$, we let $J_{j, (\alpha, \beta)}$ be the set of indices of
    the instructions within $P[i_j + 1, i_{j + 1} - 1]$ appearing in its
    subprogram obtained by induction for $P[i_j + 1, i_{j + 1} - 1]$ and
    $t^{(\alpha, \beta)}$.

    We now let $I$ be the union of $I_{(1, k)}$ and
    $J_{j, (\alpha, \beta)}' = \set{e + i_j \mid e \in J_{j, (\alpha, \beta)}}$
    for all $\alpha, \beta \in [k]$ such that $1 < \alpha \leq \beta < k$ and
    $j \in [s - 1]$ (the translation being required because the first
    instruction in $P[i_j + 1, i_{j + 1} - 1]$ is the $(i_j + 1)$-th instruction
    in $P$).
    We claim that $Q = P[I]$, a subprogram of $P$, has the desired properties.

    First notice that by induction the size of $J_{j, (\alpha, \beta)}'$ for all
    $\alpha, \beta \in [k]$ such that $1 < \alpha \leq \beta < k$ and
    $j \in [s - 1]$ is upper bounded by
    \[
	(\beta - \alpha + 1)! \cdot
	\card{\Sigma}^{\ceiling{(\beta - \alpha + 1) / 2}} \cdot
	n^{\ceiling{(\beta - \alpha + 1) / 2}} \leq
	(k - 2)! \cdot \card{\Sigma}^{\ceiling{(k - 2) / 2}} \cdot
	n^{\ceiling{(k - 2) / 2}}
	\displaypunct{.}
    \]
    Hence, the size of $I$ is at most
    \begin{align*}
	& \card{I_{(1, k)}} +
	  \sum_{j = 1}^{s - 1} \sum_{1 < \alpha \leq \beta < k}
	\card{J_{j, (\alpha, \beta)}'}\\
	\leq & 2 \cdot \card{\Sigma} \cdot n +
	       (2 \cdot \card{\Sigma} \cdot n - 1) \cdot
	       \frac{(k - 1) \cdot (k - 2)}{2} \cdot (k - 2)! \cdot
	       \card{\Sigma}^{\ceiling{(k - 2) / 2}} \cdot
	       n^{\ceiling{(k - 2) / 2}}\!\!\\
	\leq & 2 \cdot \card{\Sigma} \cdot n +
	       (2 \cdot \card{\Sigma} \cdot n - 1) \cdot
	       \frac{k \cdot (k - 1)}{2} \cdot (k - 2)! \cdot
	       \card{\Sigma}^{\ceiling{(k - 2) / 2}} \cdot
	       n^{\ceiling{(k - 2) / 2}}\\
	\leq & k! \cdot \card{\Sigma}^{\ceiling{k / 2}} \cdot
	       n^{\ceiling{k / 2}} = c_k \cdot n^{\ceiling{k / 2}}
    \end{align*}
    as
    $\card{\set{(\alpha, \beta) \in \N^2 \mid 1 < \alpha \leq \beta < k}} =
     \sum_{j = 2}^{k - 1} (k - j) = \sum_{j = 1}^{k - 2} j =
     \frac{(k - 1) \cdot (k - 2)}{2}$
    and
    $2 \cdot \card{\Sigma} \cdot n \leq
     \frac{k!}{2} \cdot \card{\Sigma}^{\ceiling{(k - 2) / 2}} \cdot
     n^{\ceiling{(k - 2) / 2}}$
    since $k \geq 3$, so that $P[I]$ has at most the required length.

    Let $Q'$ be a subprogram of $P$ that has $Q$ as a subprogram: it means that
    there exists some set $I' \subseteq [l]$ containing $I$ such that
    $Q' = P[I']$.

    Take $w \in \Sigma^n$.

    Assume now that $t$ is a subword of $\evalP{P}(w)$.
    It means that there exist $r_1, r_2, \ldots, \allowbreak r_k \in [l]$,
    $r_1 < r_2 < \cdots < r_k$, such that for all $j \in [k]$, we have
    $f_{r_j}(w_{p_{r_j}}) = t_j$. By definition of $I_{(1, k)}$, there exist
    $\gamma, \delta \in [s], \gamma < \delta$, such that
    $i_\gamma \leq r_1 < r_k \leq i_\delta$ and
    $f_{i_\gamma}(w_{p_{i_\gamma}}) = t_1$ and
    $f_{i_\delta}(w_{p_{i_\delta}}) = t_k$. For each
    $j \in \intinterval{\gamma}{\delta - 1}$, let $m_j \in \intinterval{2}{k}$
    be the smallest integer in $\intinterval{2}{k - 1}$ such that
    $i_j \leq r_{m_j} < i_{j + 1}$ and $k$ if it does not exist, and
    $M_j \in \intinterval{1}{k - 1}$ be the biggest integer
    in $\intinterval{2}{k - 1}$ such that $i_j \leq r_{M_j} < i_{j + 1}$ and $1$
    if it does not exist. Observe that, since for each
    $j \in \intinterval{\gamma}{\delta - 1}$, we have
    $t^{(m_j, M_j)} = t^{(k, 1)} = \emptyword$ if there does not exist any
    $o \in \intinterval{2}{k - 1}$ verifying $i_j \leq r_o < i_{j + 1}$, it
    holds that
    $t^{(2, k - 1)} = \prod_{j = \gamma}^{\delta - 1} t^{(m_j, M_j)}$.
    For all $j \in \intinterval{\gamma}{\delta - 1}$, we have that for any set
    $J \subseteq [i_{j + 1} - i_j - 1]$ containing
    $\bigcup_{1 < \alpha \leq \beta < k} J_{j, (\alpha, \beta)}$, the word
    $t^{(m_j, M_j)}$ is a subword of
    $f_{i_j}(w_{p_{i_j}}) \evalP{P[i_j + 1, i_{j + 1} - 1][J]}(w)$ when
    $m_j < k$ and $r_{m_j} = i_j$, and of
    $\evalP{P[i_j + 1, i_{j + 1} - 1][J]}(w)$ otherwise.
    Indeed, let $j \in \intinterval{\gamma}{\delta - 1}$.
    \begin{itemize}
	\item
	    If $m_j < k$ and $r_{m_j} = i_j$, then
	    $f_{i_j}(w_{p_{i_j}}) = f_{r_{m_j}}(w_{p_{r_{m_j}}}) = t_{m_j}$ and
	    $i_j = r_{m_j} < r_{m_j + 1} < \cdots < r_{M_j} < i_{j + 1}$, so
	    $t^{(m_j + 1, M_j)}$ is a subword of
	    $\evalP{P[i_j + 1, i_{j + 1} - 1]}(w)$. This implies, directly when
	    $m_j = M_j$ or by induction otherwise, that for any set
	    $J \subseteq [i_{j + 1} - i_j - 1]$ containing
	    $\bigcup_{1 < \alpha \leq \beta < k} J_{j, (\alpha, \beta)}$, the
	    word $t^{(m_j + 1, M_j)}$ is a subword of
	    $\evalP{P[i_j + 1, i_{j + 1} - 1][J]}(w)$. This implies in turn that
	    $t^{(m_j, M_j)}$ is a subword of
	    $f_{i_j}(w_{p_{i_j}}) \evalP{P[i_j + 1, i_{j + 1} - 1][J]}(w)$.
	\item
	    Otherwise, when $m_j = k$, there does not exist any
	    $o \in \intinterval{2}{k - 1}$ verifying $i_j \leq r_o < i_{j + 1}$,
	    so $t^{(m_j, M_j)} = \emptyword$ is trivially a subword of
	    $\evalP{P[i_j + 1, i_{j + 1} - 1][J]}(w)$ for any set
	    $J \subseteq [i_{j + 1} - i_j - 1]$ containing
	    $\bigcup_{1 < \alpha \leq \beta < k} J_{j, (\alpha, \beta)}$. And
	    when $m_j < k$ but $r_{m_j} \neq i_j$, it means that
	    $r_{m_j} > i_j$, hence
	    $i_j < r_{m_j} < r_{m_j + 1} < \cdots < r_{M_j} < i_{j + 1}$, so
	    $t^{(m_j, M_j)}$ is a subword of
	    $\evalP{P[i_j + 1, i_{j + 1} - 1]}(w)$. This implies, by induction,
	    that $t^{(m_j, M_j)}$ is a subword of
	    $\evalP{P[i_j + 1, i_{j + 1} - 1][J]}(w)$ for any set
	    $J \subseteq [i_{j + 1} - i_j - 1]$ containing
	    $\bigcup_{1 < \alpha \leq \beta < k} J_{j, (\alpha, \beta)}$.
    \end{itemize}
    Therefore, using the convention that $i_0 = 0$ and $i_{s + 1} = l + 1$, if
    we define, for each $j \in \intinterval{0}{s}$, the set
    $I_j' = \set{e - i_j \mid e \in I', i_j < e < i_{j + 1}}$ as the subset of
    $I'$ of elements strictly between $i_j$ and $i_{j + 1}$ translated by
    $-i_j$, we have that $t^{(2, k - 1)}$ is a subword of
    \begin{align*}
	\evalP{P[i_\gamma + 1, i_{\gamma + 1} - 1][I_\gamma']}(w)
	& f_{i_{\gamma + 1}}(w_{p_{i_{\gamma + 1}}})
	  \evalP{P[i_{\gamma + 1} + 1, i_{\gamma + 2} - 1][I_{\gamma + 1}']}(w)
	  \cdots\\
	& f_{i_{\delta - 1}}(w_{p_{i_{\delta - 1}}})
	  \evalP{P[i_{\delta - 1} + 1, i_\delta - 1][I_{\delta - 1}']}(w)
    \end{align*}
    (since we have $r_{m_\gamma} \geq r_2 > r_1 \geq i_\gamma$), so that, as
    $f_{i_\gamma}(w_{p_{i_\gamma}}) = t_1$ and
    $f_{i_\delta}(w_{p_{i_\delta}}) = t_k$, we have that
    $t = t_1 t^{(2, k - 1)} t_k$ is a subword of
    \begin{align*}
	& \evalP{P[1, i_1 - 1][I_0']}(w)
	f_{i_1}(w_{p_{i_1}}) \evalP{P[i_1 + 1, i_2 - 1][I_1']}(w)
	\cdots
	f_{i_s}(w_{p_{i_s}}) \evalP{P[i_s + 1, l][I_s']}(w)\\
	= & \evalP{P[I']}(w)
	\displaypunct{.}
    \end{align*}

    Assume finally that $t$ is a subword of $\evalP{P[I']}(w)$. Then it is
    obviously a subword of $\evalP{P}(w)$, as $\evalP{P[I']}(w)$ is a subword of
    $\evalP{P}(w)$.

    Therefore, $t$ is a subword of $\evalP{P}(w)$ if and only if $t$ is a
    subword of $\evalP{Q'}(w) = \evalP{P[I']}(w)$, as desired.

    \paragraph{Base case.}
    There are three subcases to consider.

    \emph{Subcase $k = 2$.}
    Let $P$ be a program over $M$ on $\Sigma^n$ for $n \in \N$ of length
    $l \in \N$ and some word $t \in M^2$.

    We use the same idea as in the inductive step.

    Observe that when $n = 0$, we necessarily have $P = \emptyword$, so that the
    lemma is trivially proven in that case. So we now assume $n > 0$.

    For each $p \in [n]$ and each $a \in \Sigma$ consider within the sequence of
    instructions of $P$ the first instruction of the form $(p, f)$ with
    $f(a) = t_1$ and the last instruction of that form with $f(a) = t_2$, if
    they exist. We let $I$ be the set of indices of these instructions for all
    $a$ and $p$. Notice that the size of $I$ is at most
    $2 \cdot \card{\Sigma} \cdot n =
     2! \cdot \card{\Sigma}^{\ceiling{2 / 2}} \cdot n^{\ceiling{2 / 2}} =
     c_2 \cdot n^{\ceiling{2 / 2}}$.

    We claim that $Q = P[I]$, a subprogram of $P$, has the desired properties.
    We just showed it has at most the required length.
    
    Let $Q'$ be a subprogram of $P$ that has $Q$ as a subprogram: it means that
    there exists some set $I' \subseteq [l]$ containing $I$ such that
    $Q' = P[I']$.
    
    Take $w \in \Sigma^n$.

    Assume now that $t$ is a subword of $\evalP{P}(w)$.
    It means there exist $i_1, i_2 \in [l], i_1 < i_2$ such that
    $f_{i_1}(w_{p_{i_1}}) = t_1$ and $f_{i_2}(w_{p_{i_2}}) = t_2$. By definition
    of $I$, there exist ${i_1}', {i_2}' \in I$, such that
    ${i_1}' \leq i_1 < i_2 \leq {i_2}'$ and $f_{{i_1}'}(w_{p_{{i_1}'}})= t_1$
    and $f_{{i_2}'}(w_{p_{{i_2}'}}) = t_2$.
    Hence, as $f_{{i_1}'}(w_{p_{{i_1}'}}) f_{{i_2}'}(w_{p_{{i_2}'}})$ is a
    subword of $\evalP{P[I']}(w)$ (because $I \subseteq I'$), we get that
    $t = t_1 t_2$ is a subword of $\evalP{P[I']}(w)$.

    Assume finally that $t$ is a subword of $\evalP{P[I']}(w)$. Then it is
    obviously a subword of $\evalP{P}(w)$, as $\evalP{P[I']}(w)$ is a subword of
    $\evalP{P}(w)$.

    Therefore, $t$ is a subword of $\evalP{P}(w)$ if and only if $t$ is a
    subword of $\evalP{Q'}(w) = \evalP{P[I']}(w)$, as desired.

    \emph{Subcase $k = 1$.}
    Let $P$ be a program over $M$ on $\Sigma^n$ for $n \in \N$ of length
    $l \in \N$ and some word $t \in M^1$.

    We again use the same idea as before.

    Observe that when $n = 0$, we necessarily have $P = \emptyword$, so that the
    lemma is trivially proven in that case. So we now assume $n > 0$.

    For each $p \in [n]$ and each $a \in \Sigma$ consider within the sequence of
    instructions of $P$ the first instruction of the form $(p, f)$ with
    $f(a) = t_1$, if it exists. We let $I$ be the set of indices of these
    instructions for all $a$ and $p$. Notice that the size of $I$ is at most
    $\card{\Sigma} \cdot n =
     1! \cdot \card{\Sigma}^{\ceiling{1 / 2}} \cdot n^{\ceiling{1 / 2}} =
     c_1 \cdot n^{\ceiling{1 / 2}}$.

    We claim that $Q = P[I]$, a subprogram of $P$, has the desired properties.
    We just showed it has at most the required length.
    
    Let $Q'$ be a subprogram of $P$ that has $Q$ as a subprogram: it means that
    there exists some set $I' \subseteq [l]$ containing $I$ such that
    $Q' = P[I']$.
    
    Take $w \in \Sigma^n$.

    Assume now that $t$ is a subword of $\evalP{P}(w)$.
    It means there exists $i \in [l]$ such that $f_i(w_{p_i}) = t_1$.
    By definition of $I$, there exists $i' \in I$ such that $i' \leq i$ and
    $f_{i'}(w_{p_{i'}}) = t_1$.
    Hence, as $f_{i'}(w_{p_{i'}})$ is a subword of $\evalP{P[I']}(w)$ (because
    $I' \subseteq I$), we get that $t = t_1$ is a subword of $\evalP{P[I']}(w)$.

    Assume finally that $t$ is a subword of $\evalP{P[I']}(w)$. Then it is
    obviously a subword of $\evalP{P}(w)$, as $\evalP{P[I']}(w)$ is a subword of
    $\evalP{P}(w)$.

    Therefore, $t$ is a subword of $\evalP{P}(w)$ if and only if $t$ is a
    subword of $\evalP{Q'}(w) = \evalP{P[I']}(w)$, as desired.
    
    \emph{Subcase $k = 0$.}
    Let $P$ be a program over $M$ on $\Sigma^n$ for $n \in \N$ of length
    $l \in \N$ and some word $t \in M^0$.
    
    We claim that $Q = \emptyword$, a subprogram of $P$, has the desired
    properties.

    First notice that the length of $Q$ is
    $0 \leq 0! \cdot \card{\Sigma}^{\ceiling{0 / 2}} \cdot n^{\ceiling{0 / 2}} =
     c_0 \cdot n^{\ceiling{0 / 2}}$,
    at most the required length.

    Let $Q'$ be a subprogram of $P$ that has $Q$ as a subprogram.
    As $t \in M^0$, we necessarily have that $t = \emptyword$, which is a
    subword of any word in $M^*$. Therefore, we immediately get that for all
    $w \in \Sigma^n$, the word $t$ is a subword of $\evalP{P}(w)$ if and only if
    $t$ is a subword of $\evalP{Q'}(w)$, as desired.
\end{proof}

\section{Regular Languages in \texorpdfstring{$\Prog{\FMVJ}$}{P(J)}}
\label{sec:Regular_languages}
The second part of our investigation of the computational power of programs over
monoids in $\FMVJ$ is dedicated to understanding exactly what regular languages
can be \pr-recognised by monoids in $\FMVJ$.

\subsection{Non-tameness of \texorpdfstring{$\FMVJ$}{J}}
\label{sse:Non-tameness_of_J}

It is shown in~\cite{Grosshans-McKenzie-Segoufin-2017} that
$\Prog{\FMVJ} \cap \Reg \nsubseteq \DLang{\StVQJ}$, thus giving an example of a
well-known subvariety of $\FMVA$ for which \pr-recognition allows to do
unexpected things when recognising a regular language.
How far does this unexpected power go?

The first thing to notice is that, though none of them is in $\DLang{\StVQJ}$,
all languages of the form $\Sigma^* u$ and $u \Sigma^*$ for $\Sigma$ an alphabet
and $u \in \Sigma^+$ are in $\Prog{\FMVJ}$. Indeed, each of them can be
recognised by a sequence of constant-length programs over the syntactic monoid
of $u \shuffle \Sigma^*$: for every input length, just output the image, through
the syntactic morphism of $u \shuffle \Sigma^*$, of the word made of the
$\length{u}$ first or last letters. So, informally stated, programs over monoids
in $\FMVJ$ can check for some constant-length beginning or ending of their input
words.

But they can do much more.
Indeed, the language $(a + b)^* a c^+$ does not belong to $\DLang{\StVQJ}$
(compute the stable monoid), yet it is in $\Prog{\FMVJ}$. The crucial insight is
that it can be program-reduced in linear length to the piecewise testable
language of all words over $\set{a, b, c}$ having $ca$ as a subword but not the
subwords $cca$, $caa$ and $cb$ by using the following trick (that we shall call
``feedback-sweeping'') for input length $n \in \N$: read the input letters in
the order $2, 1, 3, 2, 4, 3, 5, 4, \ldots, n, n - 1$, output the letters read.
This has already been observed
in~\cite[Proposition 5]{Grosshans-McKenzie-Segoufin-2017}; here we give a formal
proof of the following lemma.

\begin{lemma}
\label{lem:Unexpected_language_in_P(J)}
    $(a + b)^* a c^+ \in \Prog{\FMVJ, n}$.
\end{lemma}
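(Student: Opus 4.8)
The plan is to program-reduce $(a+b)^*ac^+$, in linear length, to a piecewise testable language recognised classically by a monoid in $\FMVJ$, and then to invoke Proposition~\ref{ptn:Program-reduction_to_regular_language}. For the target I take $K$, the set of words over $\set{a,b,c}$ that have $ca$ as a subword but none of $cca$, $caa$, $cb$. Being a Boolean combination of languages of the form $u \shuffle \set{a,b,c}^*$, the language $K$ is piecewise testable, so by Simon's theorem its syntactic monoid lies in $\FMVJ$; in particular $K$ is classically recognised by a monoid in $\FMVJ$.

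For $n \leq 1$ I let $\Psi_n$ be the empty $\set{a,b,c}$-program on $\set{a,b,c}^n$, which is correct because then $((a+b)^*ac^+)^{=n} = \emptyset = \Psi_n^{-1}(K^{=0})$ (note $\emptyword \notin K$). For $n \geq 2$ I let $\Psi_n$ be the \emph{feedback-sweeping} program $(2,\mathrm{id})(1,\mathrm{id})(3,\mathrm{id})(2,\mathrm{id})(4,\mathrm{id})(3,\mathrm{id}) \cdots (n,\mathrm{id})(n-1,\mathrm{id})$, where $\mathrm{id}$ is the identity on $\set{a,b,c}$; it has length $2(n-1)$ and outputs, on input $w$, the word $\Psi_n(w) = w_2 w_1 w_3 w_2 w_4 w_3 \cdots w_n w_{n-1}$, whose letter at position $2j-1$ is $w_{j+1}$ and whose letter at position $2j$ is $w_j$, for $j \in [n-1]$. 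The whole content of the lemma then lies in the claim that, for $n \geq 2$ and $w \in \set{a,b,c}^n$, one has $w \in (a+b)^*ac^+$ if and only if $\Psi_n(w) \in K$.

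For the ``only if'' direction, writing $w = x a c^k$ with $x \in \set{a,b}^*$ and $k \geq 1$, the position map gives $\Psi_n(w) = u \cdot ca \cdot c^{2k-2}$ for some $u \in \set{a,b}^*$ --- the blocks left of the one straddling the boundary between $xa$ and $c^k$ are over $\set{a,b}$, that straddling block is exactly $ca$, and the blocks to its right are all $cc$ --- from which $\Psi_n(w) \in K$ is immediate. For the ``if'' direction I argue by contraposition: if $w$ contains no $c$ or no $a$, then $\Psi_n(w)$ likewise lacks a $c$ or an $a$, hence lacks the subword $ca$ and is not in $K$; otherwise $w$ has both a $c$ and an $a$, and, letting $p$ be the position of its first $c$ (so $w_{p-1} \in \set{a,b}$ when $p \geq 2$), the assumption $w \notin (a+b)^*ac^+$ forces one of three situations: $p = 1$; or $p \geq 2$ and $w_{p-1} = b$; or $p \geq 2$, $w_{p-1} = a$, and some letter strictly after position $p$ is not $c$. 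In each case the feedback-sweeping order exposes a forbidden subword of $\Psi_n(w)$: an adjacent factor $bc$ of $w$ reappears as a factor $cb$ of $\Psi_n(w)$; a factor $aca$ of $w$, or more generally a subword $ac\cdots ca$, forces $caa$ or $cca$; and a leading $c$ of $w$ forces one of $cb$, $caa$, $cca$ according to $w_2$ and $w_3$, with the boundary lengths $n \in \set{2,3}$ checked by hand.

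The delicate part is this ``if'' direction's case analysis: one has to be sure that \emph{every} way $w$ can fail to lie in $(a+b)^*ac^+$ is witnessed by one of \emph{precisely} the three forbidden subwords $cca$, $caa$, $cb$, keeping in mind the boundary effect that $w_1$ and $w_n$ each occur only once in $\Psi_n(w)$ whereas every other input letter occurs twice. Once the claim is established, $(\Psi_n)_{n \in \N}$ is a program-reduction from $(a+b)^*ac^+$ to $K$ whose length is $2(n-1) \leq 2n$, hence linear; since $K$ is classically recognised by a monoid in $\FMVJ$, Proposition~\ref{ptn:Program-reduction_to_regular_language} gives $(a+b)^*ac^+ \in \Prog{\FMVJ, n}$.
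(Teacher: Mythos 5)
Your proposal is correct and matches the paper's proof in all essentials: the same piecewise testable target language (words with subword $ca$ but none of $cca$, $caa$, $cb$), the same feedback-sweeping program $\Psi_n$ of length $2(n-1)$, and the same appeal to Proposition~\ref{ptn:Program-reduction_to_regular_language}. The only difference is cosmetic: for the converse inclusion the paper argues directly from a $ca$-occurrence in $\Psi_n(w)$ back to the structure of $w$, whereas you argue by contraposition over the ways $w$ can fail to lie in $(a+b)^*ac^+$; both require the same boundary-case bookkeeping.
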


\begin{proof}
    Let $\Sigma = \set{a, b, c}$.

    Let
    \[
	L = ca \shuffle \Sigma^* \cap (cca \shuffle \Sigma^*)^\complement \cap
	    (caa \shuffle \Sigma^*)^\complement \cap
	    (cb \shuffle \Sigma^*)^\complement
    \]
    be the language of all words over $\Sigma$ having $ca$ as a subword but not
    the subwords $cca$, $caa$ and $cb$, that by construction is piecewise
    testable, i.e. belongs to $\DLang{\FMVJ}$.

    We are now going to build a program-reduction from $(a + b)^* a c^+$ to $L$.
    Let $n \in \N$. If $n \leq 1$, we set $\Psi_n$ to be $\emptyword$, the empty
    $\Sigma$-program on $\Sigma^n$. Otherwise, if $n \geq 2$, we set
    \[
	\Psi_n = (2, \id_\Sigma) (1, \id_\Sigma) (3, \id_\Sigma) (2, \id_\Sigma)
		 (4, \id_\Sigma) (3, \id_\Sigma) \cdots
		 (n, \id_\Sigma) (n - 1, \id_\Sigma)
	\displaypunct{.}
    \]

    Let us define $s\colon \N \to \N$ by $s(n) = \length{\Psi_n}$ for all
    $n \in \N$, which is such that
    \[
	s(n) =
	\begin{cases}
	    0 & \text{if $n \leq 1$}\\
	    2 n - 2 & \text{otherwise ($n \geq 2$)}
	\end{cases}
    \]
    for all $n \in \N$.
    Fix $n \in \N$.

    Let $w \in ((a + b)^* a c^+)^{=n}$: it means $n \geq 2$ and there exist
    $u \in (a + b)^{n_1}$ with $n_1 \in \intinterval{0}{n - 2}$ and
    $n_2 \in \intinterval{0}{n - 2}$ verifying that
    $w = u a c c^{n_2}$ and $n_1 + n_2 = n - 2$. We therefore have
    \[
	\Psi_n(w) =
	\begin{cases}
	    c a c^{2 n_2} & \text{when $n_1 = 0$}\\
	    u_2 u_1 \cdots u_{n_1} u_{n_1 - 1} a u_{n_1} c a c^{2 n_2} &
		\text{otherwise ($n_1 > 0$)}
	\displaypunct{,}
	\end{cases}
    \]
    a word easily seen to belong to $L^{= 2 n - 2}$. Since this is true for all
    $w \in ((a + b)^* a c^+)^{=n}$, it follows that
    $((a + b)^* a c^+)^{=n} \subseteq \Psi_n^{-1}(L^{=s(n)})$.

    Let conversely $w \in \Psi_n^{-1}(L^{=s(n)})$. Since this means that
    $\Psi_n(w) \in L^{=s(n)}$, we necessarily have $n \geq 2$ as it must contain
    $ca$ as a subword, so that
    \[
	\Psi_n(w) = w_2 w_1 w_3 w_2 w_4 w_3 \cdots w_n w_{n - 1}
	\displaypunct{.}
    \]
    Let $i, j \in [n]$ verifying that $w_i = c$, that $w_j = a$ and $w_i w_j$ is
    a subword of $\Psi_n(w)$. This means that $j \geq i - 1$, and we will now
    show that, actually, $j = i - 1$. Assume that $j \geq i + 2$; by
    construction, this would mean that $w_i w_j w_j = c a a$ is a subword of
    $\Psi_n(w)$, a contradiction to the fact it belongs to $L$. Assume otherwise
    that $j = i + 1$; by construction, this would either mean that
    $w_i w_{i - 1} w_{i + 1} w_i$ is a subword of $\Psi_n(w)$, which would imply
    one of $c a a$, $c b a$ and $c c a$ is a subword of $\Psi_n(w)$, or that 
    $w_{i + 1} w_i w_{i + 2} w_{i + 1}$ is a subword of $\Psi_n(w)$, which would
    imply one of $c a a$, $c b a$ and $c c a$ is a subword of $\Psi_n(w)$, in
    both cases contradicting the fact $\Psi_n(w)$ belongs to $L$. Hence, we
    indeed have $j = i - 1$, and in particular that $i \geq 2$.
    Now, by construction, for each $t \in [i - 2]$, we have that
    $w_t w_i w_{i - 1} = w_t c a$ is a subword of $\Psi_n(w) \in L$, so that
    $w_t$ cannot be equal to $c$. Similarly, for each
    $t \in \intinterval{i + 1}{n}$, we have that $w_i w_{i - 1} w_t = c a w_t$
    is a subword of $\Psi_n(w) \in L$, so that $w_t$ must be equal to $c$.
    This means that $w_1 \cdots w_{i - 2} \in (a + b)^*$ and
    $w_{i + 1} \cdots w_n \in c^*$, so that
    $w \in (a + b)^* a c c^* = (a + b)^* a c^+$. Since this is true for all
    $w \in \Psi_n^{-1}(L^{=s(n)})$, it follows that
    $((a + b)^* a c^+)^{=n} \supseteq \Psi_n^{-1}(L^{=s(n)})$.

    Therefore, we have that $((a + b)^* a c^+)^{=n} = \Psi_n^{-1}(L^{=s(n)})$
    for all $n \in \N$, so $(\Psi_n)_{n \in \N}$ is a program reduction from
    $(a + b)^* a c^+$ to $L$ of length $s(n)$.
    So since $L \in \DLang{\FMVJ}$, we can conclude that
    $(a + b)^* a c^+ \in \Prog{\FMVJ, s(n)} = \Prog{\FMVJ, n}$ by
    Proposition~\ref{ptn:Program-reduction_to_regular_language}.
\end{proof}

Using variants of the ``feedback-sweeping'' reading technique, we can prove that
the phenomenon just described is not an isolated case.

\begin{lemma}
\label{lem:Examples_regular_languages_in_P(J)}
    The languages $(a + b)^* a c^+$, $(a + b)^* a c^+ a (a + b)^*$,
    $c^+ a (a + b)^* a c^+$, $(a + b)^* b a c^+$ and
    $(a + b)^* a c^+ (a + b)^* a c^+$ do all belong to
    $\Prog{\FMVJ} \setminus \DLang{\StVQJ}$.
\end{lemma}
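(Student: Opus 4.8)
The proof proceeds language by language, for each exhibiting a piecewise testable target language $L$ together with a "feedback-sweeping"-style $\Sigma$-program-reduction of linear length from the given language to $L$, then invoking Proposition~\ref{ptn:Program-reduction_to_regular_language} to conclude membership in $\Prog{\FMVJ, n} \subseteq \Prog{\FMVJ}$. The non-membership in $\DLang{\StVQJ}$ part is, for each language, a routine computation of the stable monoid (as already remarked in the excerpt for $(a+b)^* a c^+$): one checks that for a suitable stabilising power $k$, the stable semigroup contains an element that violates the defining identity $(xy)^\omega = (xy)^\omega x = y(xy)^\omega$ of $\FMVJ$ (typically because the stable monoid "sees" a fixed factor like $ac$ or $ba$, which already fails to be recognised by any quasi-$\FMVJ$ morphism). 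I would dispatch all five non-tameness claims with a single paragraph pointing to such a computation, since $(a+b)^* a c^+$ is a "syntactic subfactor" of each of the others in a sense that forces the same obstruction.

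For the positive side, Lemma~\ref{lem:Unexpected_language_in_P(J)} already handles $(a+b)^* a c^+$. For $(a+b)^* a c^+ a (a+b)^*$ I would use the same sweep $2,1,3,2,\ldots,n,n-1$ but now the target $L$ must detect, via subwords, a pattern $c\,a$ with only $a,b$'s to its left in the original word, a block of $c$'s, then an $a$, then only $a,b$'s; reading forbidden subwords such as $cca$, $cb$, and appropriate patterns controlling the right-hand $a$ lets $L$ stay piecewise testable. For $c^+ a (a+b)^* a c^+$ the sweep together with the known beginning/ending trick (checking a constant-length suffix/prefix is in $\Prog{\FMVJ}$, as noted in the excerpt) combined with closure under Boolean operations (Proposition~\ref{lemma-simple-closure-P}) reduces the problem to the previous cases. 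For $(a+b)^* b a c^+$, a refinement of the sweep that reads triples (e.g.\ $3,2,1,4,3,2,\ldots$) produces output words in which the forbidden/required subwords can encode the extra "$b$ immediately before $a$" constraint. For the hardest language $(a+b)^* a c^+ (a+b)^* a c^+$ I would iterate the feedback-sweeping idea: run two sweeps back-to-back (or a single sweep reading windows of four positions), using distinguished marker letters between the two halves so that the target piecewise testable language can recognise "two successive $ac^+$-blocks" by forbidden-subword conditions that are local to each half.

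The main obstacle is the last language. Detecting two occurrences of the "$a$ then a run of $c$'s" pattern in sequence is delicate with piecewise testable (subword) conditions, because subwords cannot directly count or delimit runs; the feedback-sweeping trick works for a single $ac^+$ because the immediate-predecessor readout turns "$c$ with an $a$ right after it" into a detectable subword while "$c$ with a second $c$ or a $b$ right after it" becomes a forbidden subword, but chaining two such patterns risks spurious subwords bridging the two halves. I expect the fix is to interpose a sweep segment that emits a fresh symbol $\#$ (not in $\{a,b,c\}$) exactly once, program the readout so that everything before $\#$ encodes the first $ac^+$ and everything after encodes the second, and design $L$ over $\{a,b,c,\#\}$ so that the forbidden subwords are all "one-sided" with respect to $\#$ — this makes $L$ manifestly piecewise testable while still pinning down membership in the original language. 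Verifying correctness of this construction (that $\Psi_n^{-1}(L^{=s(n)})$ equals the $n$-length slice of the language) is the one genuinely non-routine verification; the remaining four follow the template of Lemma~\ref{lem:Unexpected_language_in_P(J)} essentially verbatim.
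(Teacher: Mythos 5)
The paper states this lemma without proof (it is offered as following from ``variants of the feedback-sweeping technique''), so there is no official argument to compare against; judged on its own merits, your plan follows the right template for the second, third and fourth languages, but the construction you sketch for $(a+b)^* a c^+ (a+b)^* a c^+$ does not work. A program's instruction sequence is fixed once the input length is fixed, so the position of every output symbol --- including your separator $\#$ --- is data-independent, whereas the boundary between the two occurrences of $a c^+$ in an input word is data-dependent: you cannot ``emit $\#$ exactly once'' so that everything before it encodes the first occurrence and everything after it the second. If instead you run two full sweeps back-to-back with $\#$ at a fixed instruction index, both halves of the output are computed from the whole input and carry the same information, so any subword condition split at $\#$ collapses to a conjunction of subword conditions on a single sweep --- and these provably cannot count block occurrences: for $w = accac$, which is in the language, the sweep outputs $caccacca$, which contains $cacaca$ as a subword, exactly the pattern a three-block word such as $acacac$ produces. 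The machinery the paper builds precisely to handle several required factor occurrences is the decomposition of $\ccddo{u_1, \ldots, u_k}_l$ into the languages $R_l^\alpha \cap S_l^\alpha$ (Lemma~\ref{lem:TDDO-Equality}) together with one decorated sweep per required occurrence, localised by subword contexts rather than by separators (Lemma~\ref{lem:Building-block_program}), glued by Boolean closure. The clean proof of the present lemma is therefore to write each of the five languages as a Boolean combination of languages $u\Sigma^*$, $\Sigma^* u$ and $\ccddo{u_1, \ldots, u_k}_l$ --- as the text already does for $(a+b)^* a c^+$ --- and invoke Theorem~\ref{thm:TDDO_languages_in_P(J)}.

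Two further points. First, the negative half cannot be dispatched by a single computation: membership in $\DLang{\StVQJ}$ is a property of each language's own syntactic morphism and does not transfer along any informal ``syntactic subfactor'' relation, so you need five separate (routine, but separate) stable-monoid computations, each exhibiting a violation of $(xy)^\omega = (xy)^\omega x = y (xy)^\omega$. Second, your treatment of $c^+ a (a+b)^* a c^+$ is too quick even on the positive side: the condition ``exactly two maximal $c$-blocks, one initial and one final, each adjacent to an $a$'' is not a Boolean combination of constant-length prefix/suffix tests with the previously handled languages, so this case also needs its own construction (or, again, the threshold dot-depth one route).
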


Hence, we are tempted to say that there are ``much more'' regular languages in
$\Prog{\FMVJ}$ than just those in $\DLang{\StVQJ}$, even though it is not clear
to us whether $\DLang{\StVQJ} \subseteq \Prog{\FMVJ}$ or not. But can we show
any upper bound on $\Prog{\FMVJ} \cap \Reg$? It turns out that we can, relying
on two known results.

First, since $\FMVJ \subseteq \FMVDA$, we have
$\Prog{\FMVJ} \subseteq \Prog{\FMVDA}$, so Theorem~6
in~\cite{Grosshans-McKenzie-Segoufin-2017}, that states
$\Prog{\FMVDA} \cap \Reg = \DLang{\StVQDA}$, implies that
$\Prog{\FMVJ} \cap \Reg \subseteq \DLang{\StVQDA}$.

Second, let us define an important superclass of the class of piecewise testable
languages.
Let $\Sigma$ be an alphabet and $u_1, \ldots, u_k \in \Sigma^+$
($k \in \N_{>0}$); we define
$\ccddo{u_1, \ldots, u_k} = \Sigma^* u_1 \Sigma^* \cdots \Sigma^* u_k \Sigma^*$.
The \emph{class of dot-depth one languages} is the class of Boolean combinations
of languages of the form $\Sigma^* u$, $u \Sigma^*$ and
$\ccddo{u_1, \ldots, u_k}$ for $\Sigma$ an alphabet, $k \in \N_{>0}$ and
$u, u_1, \ldots, u_k \in \Sigma^+$.
The inclusion-wise smallest variety of semigroups containing all syntactic
semigroups of dot-depth one languages is denoted by $\FSVJsdD$ and verifies that
$\DLang{\FSVJsdD}$ is exactly the class of dot-depth one languages.
(See~\cite{Straubing-1985,Maciel-Peladeau-Therien-2000,Pin-2017}.)
It has been shown in~\cite[Corollary 8]{Maciel-Peladeau-Therien-2000} that
$\Prog{\FSVJsdD} \cap \Reg = \DLang{\StVQJsdD}$ (if we extend the
program-over-monoid formalism in the obvious way to finite semigroups).
Now, we have $\FMVJ \subseteq \FSVJsdD$, so that
$\Prog{\FMVJ} \subseteq \Prog{\FSVJsdD}$ and hence
$\Prog{\FMVJ} \cap \Reg \subseteq \DLang{\StVQJsdD}$.

To summarise, we have the following.

\begin{proposition}
    $\Prog{\FMVJ} \cap \Reg \subseteq \DLang{\StVQDA} \cap \DLang{\StVQJsdD}$.
\end{proposition}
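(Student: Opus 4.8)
The plan is to simply assemble the two inclusions already isolated in the surrounding text into a single statement; no new ideas are needed. First I would invoke the inclusion $\FMVJ \subseteq \FMVDA$ of varieties, which immediately yields $\Prog{\FMVJ} \subseteq \Prog{\FMVDA}$ (programs over monoids in a subvariety are in particular programs over monoids in the larger variety). Intersecting with $\Reg$ and applying Theorem~6 of~\cite{Grosshans-McKenzie-Segoufin-2017}, which states $\Prog{\FMVDA} \cap \Reg = \DLang{\StVQDA}$, gives the first half: $\Prog{\FMVJ} \cap \Reg \subseteq \DLang{\StVQDA}$.

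Second I would observe that $\FMVJ \subseteq \FSVJsdD$, where $\FSVJsdD$ is the variety of semigroups generated by the syntactic semigroups of dot-depth one languages; this holds because every piecewise testable language is in particular a dot-depth one language, so every syntactic monoid of a language in $\DLang{\FMVJ}$ lies in $\FSVJsdD$, and since $\FMVJ$ is generated by such monoids the inclusion of varieties follows. Hence $\Prog{\FMVJ} \subseteq \Prog{\FSVJsdD}$ (extending the program formalism to semigroups in the obvious way), and by~\cite[Corollary 8]{Maciel-Peladeau-Therien-2000}, which gives $\Prog{\FSVJsdD} \cap \Reg = \DLang{\StVQJsdD}$, we get the second half: $\Prog{\FMVJ} \cap \Reg \subseteq \DLang{\StVQJsdD}$.

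Combining the two, $\Prog{\FMVJ} \cap \Reg$ is contained in the intersection $\DLang{\StVQDA} \cap \DLang{\StVQJsdD}$, which is the claimed statement. There is essentially no obstacle here: the genuine content lies in the two cited results and in the soft fact that $\Prog{\cdot}$ is monotone with respect to inclusion of (monoid or semigroup) varieties; the only point requiring a word of care is the passage from monoids to semigroups in the second inclusion, namely checking that the extension of the program model to finite semigroups is compatible with $\Prog{\FMVJ} \subseteq \Prog{\FSVJsdD}$ — but this is precisely the setting in which~\cite{Maciel-Peladeau-Therien-2000} works, so it is immediate once one uses their formalism.
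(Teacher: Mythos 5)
Your proposal is correct and follows exactly the same route as the paper: the inclusion $\FMVJ \subseteq \FMVDA$ combined with $\Prog{\FMVDA} \cap \Reg = \DLang{\StVQDA}$ for the first half, and the inclusion $\FMVJ \subseteq \FSVJsdD$ combined with $\Prog{\FSVJsdD} \cap \Reg = \DLang{\StVQJsdD}$ from Maciel--P\'eladeau--Th\'erien for the second. Your extra justification of $\FMVJ \subseteq \FSVJsdD$ via piecewise testable languages being dot-depth one is sound and merely spells out what the paper takes for granted.
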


In fact, we conjecture that the inverse inclusion does also hold.

\begin{conjecture}
\label{cjt:Regular_languages_in_P(J)}
    $\Prog{\FMVJ} \cap \Reg = \DLang{\StVQDA} \cap \DLang{\StVQJsdD}$.
\end{conjecture}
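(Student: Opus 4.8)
By the preceding proposition, only the inclusion $\DLang{\StVQDA} \cap \DLang{\StVQJsdD} \subseteq \Prog{\FMVJ}$ remains to be proved, and a natural strategy for it is the following. The plan is to route this inclusion through the class of threshold dot-depth one languages and the two facts established about it in Sections~\ref{sec:Regular_languages} and~\ref{sec:Algebraic_characterisation_TDDO_languages}: that every threshold dot-depth one language lies in $\Prog{\FMVJ}$ (through the feedback-sweeping trick), and that a language is threshold dot-depth one precisely when it is a dot-depth one language recognised by a monoid in $\FMVDA$.

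\emph{Step 1: a modular decomposition.} Let $L$ be regular with syntactic morphism $\varphi\colon \Sigma^* \to M$ lying in both $\StVQuasi{\FMVDA}$ and $\StVQuasi{\FSVJsdD}$, and let $S$ be its stable semigroup, so that $S \in \FSVJsdD$, the stable monoid $S^1 \in \FMVDA$, and hence $S$ lies in the semigroup variety generated by $\FMVDA$. I would pick $d \in \N_{>0}$ a multiple of the stability threshold of $\varphi$, so that $\varphi(\Sigma^d) = S$, and write $L$ as the disjoint union, over the residues $r \in \set{0, \ldots, d - 1}$, of the set of words of $L$ of length congruent to $r$ modulo $d$. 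After grouping positions into consecutive blocks of $d$ letters --- so that the base alphabet becomes $\Sigma^d$ --- and recording the bounded-length trailing block of size $r$, each such piece becomes a language over a finite alphabet whose syntactic semigroup divides $S$, hence lies in $\FSVJsdD$ and in the semigroup variety generated by $\FMVDA$. By the algebraic characterisation of Section~\ref{sec:Algebraic_characterisation_TDDO_languages}, each of these block-coded pieces is therefore threshold dot-depth one.

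\emph{Step 2: pushing the decomposition into $\Prog{\FMVJ}$.} Positional counting modulo $d$ costs nothing in the program model, since an instruction of a program on inputs of length $n$ may freely depend on the position it reads, and hence on that position and on $n$ modulo $d$; thus the block-coding together with the selection of a length-residue class is carried out by a trivial program-reduction of length $\Omicron(n)$, which preserves membership in $\Prog{\FMVJ}$ by Proposition~\ref{ptn:P(V)_program-reduction_closure}. Since each block-coded piece is threshold dot-depth one and hence in $\Prog{\FMVJ}$ by Section~\ref{sec:Regular_languages}, and since $\Prog{\FMVJ}$ is closed under Boolean operations by Proposition~\ref{lemma-simple-closure-P}, the finite union over $r$ lies in $\Prog{\FMVJ}$ as well, so $L \in \Prog{\FMVJ}$.

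\emph{The main obstacle.} The hard part will be Step~1: it amounts to showing that intersecting the two quasi-varieties $\StVQuasi{\FMVDA}$ and $\StVQuasi{\FSVJsdD}$ yields nothing beyond threshold dot-depth one languages equipped with $d$-modular positional information, i.e.\ that the quasi-variety construction interacts correctly with this particular intersection of a monoid variety and a semigroup variety. Making this rigorous is exactly where the algebraic characterisation of threshold dot-depth one languages --- the main point of Section~\ref{sec:Algebraic_characterisation_TDDO_languages}, and the reason it is advertised in the introduction as an essential step towards this conjecture --- is needed, together with a careful treatment of the stable-monoid-versus-stable-semigroup bookkeeping implicit in the definitions of $\StVQDA$ and $\StVQJsdD$; the theory of the quasi-variety operator from~\cite{Pin-Straubing-2005} should supply the right framework. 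A secondary point requiring care is that the feedback-sweeping argument, illustrated above only on small examples, indeed extends uniformly to all threshold dot-depth one languages --- the content of Section~\ref{sec:Regular_languages}, itself far from routine.
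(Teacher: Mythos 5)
This statement is a \emph{conjecture} in the paper: the author proves only the inclusion $\Prog{\FMVJ} \cap \Reg \subseteq \DLang{\StVQDA} \cap \DLang{\StVQJsdD}$ (the proposition preceding the conjecture) and explicitly leaves the converse open, remarking in the conclusion that it ``should readily be solved affirmatively using the finite category machinery'' of Dartois and Paperman. Your proposal is therefore not being measured against a proof in the paper, and on its own terms it is a plan rather than a proof --- you yourself flag Step~1 as unresolved. In fact you have misplaced the difficulty: Step~1 is essentially fine. For the residue-$0$ piece, the block-coded language over $\Gamma = \Sigma^d$ is recognised by the morphism induced by $\varphi$ into the stable monoid, so its syntactic semigroup divides the stable semigroup and hence lies in $\FSVJsdD$ and in $\FSVGen{\FMVDA}$; the algebraic characterisation of Section~\ref{sec:Algebraic_characterisation_TDDO_languages} then makes it threshold dot-depth one. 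The trailing blocks for other residues are routine.

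The genuine gap is Step~2. The block-coding $w \mapsto \widetilde{w} \in \Gamma^{n/d}$ is \emph{not} a program-reduction in the sense of the paper: each instruction of a $\Gamma$-program reads a single input position and outputs a single letter of $\Gamma = \Sigma^d$, so it cannot produce a block that depends on $d$ distinct positions of $w$. The transfer that \emph{is} available is the length-preserving decoration $w_i \mapsto (w_i, i \bmod d)$ over $\Sigma \times \Z \quotient d \Z$ (as in the proof of Proposition~\ref{ptn:RegLangPJ-TCCDDO}), but then the target language one needs over the decorated alphabet is not the block-coded language, and there is no reason for it to be threshold dot-depth one over $\Sigma \times \Z \quotient d \Z$ --- for instance, the set of correctly decorated words already has a non-aperiodic syntactic monoid, so one must instead exhibit a threshold dot-depth one language agreeing with the image of $L$ on correctly decorated words. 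Whether this is always possible is precisely the content of the conjectured identification of $\DLang{\StVQDA} \cap \DLang{\StVQJsdD}$ with ``threshold dot-depth one languages with additional positional modular counting'', i.e.\ exactly the open part of Conjecture~\ref{cjt:Regular_languages_in_P(J)}. Your plan thus circles back to the conjecture itself rather than proving it; closing it would require the wreath-product/category-theoretic analysis the paper defers to future work.
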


Why do we think this should be true?
Though, for a given alphabet $\Sigma$, we cannot decide whether some word
$u \in \Sigma^+$ of length at least $2$ appears as a factor of any given word
$w$ in $\Sigma^*$ with programs over monoids in $\FMVJ$ (because
$\Sigma^* u \Sigma^* \notin \DLang{\StVQDA}$),
Lemma~\ref{lem:Examples_regular_languages_in_P(J)} and the possibilities offered
by the ``feedback-sweeping'' technique give the impression that we can do it
when we are guaranteed that $u$ appears at most a fixed number of times in $w$,
which seems somehow to be what dot-depth one languages become when restricted to
belong to $\DLang{\StVQDA}$. This intuition motivates the definition of
\emph{threshold dot-depth one languages}.

\subsection{Threshold dot-depth one languages}

The idea behind the definition of threshold dot-depth one languages is that we
take the basic building blocks of dot-depth one languages, of the form
$\ccddo{u_1, \ldots, u_k}$ for an alphabet $\Sigma$, for $k \in \N_{>0}$ and
$u_1, \ldots, u_k \in \Sigma^+$, and restrict them so that, given $l \in \N_{>0}$,
membership of a word does really depend on the presence of a given word $u_i$ as
a factor if and only if it appears less than $l$ times as a subword.

\begin{definition}
    Let $\Sigma$ be an alphabet.
    For all $u \in \Sigma^+$ and $l \in \N_{>0}$, we define
    $\ccddo{u}_l$ to be the language of words over $\Sigma$ containing $u^l$ as
    a subword or $u$ as a factor, i.e.
    $\ccddo{u}_l = \Sigma^* u \Sigma^* \cup u^l \shuffle \Sigma^*$.
    Then, for all $u_1, \ldots, u_k \in \Sigma^+$ ($k \in \N, k \geq 2$) and
    $l \in \N_{>0}$, we define
    $\ccddo{u_1, \ldots, u_k}_l = \ccddo{u_1}_l \cdots \ccddo{u_k}_l$.
\end{definition}

Obviously, for each $\Sigma$ an alphabet, $k, l \in \N_{>0}$ and
$u_1, \ldots, u_k \in \Sigma^+$, the language $\ccddo{u_1, \ldots, u_k}_l$
equals $u_1 \cdots u_k \shuffle \Sigma^*$ when $l = 1$ or $u_1, \ldots, u_k$ are
all restricted to one letter.
Over $\set{a, b, c}$, the language $\ccddo{ab, c}_3$ contains all words
containing a letter $c$ verifying that in the prefix up to that letter, $ababab$
appears as a subword or $ab$ appears as a factor.
Finally, the language $(a + b)^* a c^+$ over $\set{a, b, c}$ of
Lemma~\ref{lem:Unexpected_language_in_P(J)} is equal to
${\ccddo{c, a}_2}^\complement \cap {\ccddo{c, b}_2}^\complement \cap
 \ccddo{ac}_2$.

We then define a \emph{threshold dot-depth one language} as any Boolean
combination of languages of the form $\Sigma^* u$, $u \Sigma^*$ and
$\ccddo{u_1, \ldots, u_k}_l$ for $\Sigma$ an alphabet, for $k, l \in \N_{>0}$
and $u, u_1, \ldots, u_k \in \Sigma^+$.

Confirming the intuition briefly given above, the technique of
``feedback-sweeping'' can indeed be pushed further to prove that the whole class
of threshold dot-depth one languages is contained in $\Prog{\FMVJ}$, and we
dedicate the remainder of this section to prove it.
Concerning Conjecture~\ref{cjt:Regular_languages_in_P(J)}, our
intuition leads us to believe that, in fact, the class of threshold dot-depth
one languages with additional positional modular counting is exactly
$\DLang{\StVQDA} \cap \DLang{\StVQJsdD}$. In support of this belief, in the next
section (Section~\ref{sec:Algebraic_characterisation_TDDO_languages}) we prove
that the class of threshold dot-depth one languages is exactly
$\DLang{\FMVDA} \cap \DLang{\FSVJsdD}$.

Let us now move on to the proof of the following theorem.

\begin{theorem}
\label{thm:TDDO_languages_in_P(J)}
    Every threshold dot-depth one language belongs to $\Prog{\FMVJ}$.
\end{theorem}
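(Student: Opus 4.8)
The plan is to reduce the general case to single building blocks via the closure of $\Prog{\FMVJ}$ under Boolean operations (Proposition~\ref{lemma-simple-closure-P}) and under program-reductions (Proposition~\ref{ptn:P(V)_program-reduction_closure}). Since a threshold dot-depth one language is by definition a Boolean combination of languages of the form $\Sigma^* u$, $u \Sigma^*$ and $\ccddo{u_1, \ldots, u_k}_l$, and since we have already observed that $\Sigma^* u$ and $u \Sigma^*$ are in $\Prog{\FMVJ}$ (via constant-length programs over the syntactic monoid of $u \shuffle \Sigma^*$), it suffices to show that every language of the form $\ccddo{u_1, \ldots, u_k}_l = \ccddo{u_1}_l \cdots \ccddo{u_k}_l$ lies in $\Prog{\FMVJ}$. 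First I would handle the atomic case $\ccddo{u}_l = \Sigma^* u \Sigma^* \cup u^l \shuffle \Sigma^*$, which is the heart of the matter: detect the factor $u$ provided it occurs fewer than $l$ times as a subword.

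The key step is to design, for a fixed target factor $u = a_1 \cdots a_m$ over $\Sigma$, a program-reduction from $\ccddo{u}_l$ to a piecewise testable language over an enlarged alphabet $\Gamma$, generalising the ``feedback-sweeping'' trick of Lemma~\ref{lem:Unexpected_language_in_P(J)}. The idea: for input length $n$, the $\Gamma$-program sweeps the input so that for every position $p \le n - m + 1$ it emits, consecutively, a small gadget recording the letters $w_p, w_{p+1}, \ldots, w_{p+m-1}$ (decorated so that the piecewise testable target language can recognise ``these $m$ letters, read in this window, spell $u$''), while interleaving markers that let $\sim_k$-testing, for a suitable fixed $k = k(m, l)$, (i)~detect the presence of one such successful window — giving the factor condition — and (ii)~count occurrences of $u$ as a subword up to threshold $l$ — giving the subword-threshold escape clause — so that the output word is in the target language exactly when $w \in \ccddo{u}_l$. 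The target piecewise testable language is then recognised by a monoid in $\FMVJ$, so Proposition~\ref{ptn:Program-reduction_to_regular_language} puts $\ccddo{u}_l$ in $\Prog{\FMVJ}$; the reduction can be made of polynomial length (in fact linear in $n$ for fixed $u, l$), which is all we need. For the concatenated product $\ccddo{u_1}_l \cdots \ccddo{u_k}_l$ I would iterate: either build a single combined sweep with $k$ blocks of gadgets and a positional marker separating the blocks, exactly as the inductive construction of the $P_k(d, \sigma)$ programs in the proof of Proposition~\ref{ptn:k-selectors_languages_in_P(J)} nests sub-programs between $\top$/$\bot$ markers, or — more cheaply — observe that the concatenation $\ccddo{u_1}_l \cdots \ccddo{u_k}_l$ can itself be written using a marker letter and reduced to a product over an extended alphabet, again landing in a piecewise testable language.

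The main obstacle I expect is the faithful encoding of the two conditions — factor presence versus subword count below $l$ — into a single piecewise testable target: the $\sim_k$-test is blind to multiplicities beyond what $k$-subwords can see, so the choice of decorations and the bookkeeping needed to make ``$u$ appears as a factor'' detectable by bounded subword tests on the swept output, while simultaneously making ``$u$ appears at least $l$ times as a subword'' detectable, requires care. The subword-count-to-threshold part is comparatively benign (testable languages do count subwords up to any fixed bound), but ensuring the \emph{factor} gadget for a window is not accidentally spuriously completed by letters read from a different window is the delicate point; it is exactly the kind of issue the careful case analysis in the proof of Lemma~\ref{lem:Unexpected_language_in_P(J)} addresses for $m = 2$, $l = 2$, and the general argument will need an analogous but more elaborate verification that the only $k$-subwords of $\evalP{\Psi_n}(w)$ witnessing the target pattern are the ``intended'' ones. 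Once that encoding is pinned down, the rest is routine: length bounds, the Boolean closure, and the reduction to the $u, u_1, \ldots, u_k$ all-letters or $l = 1$ degenerate cases (already piecewise testable).
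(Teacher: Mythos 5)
Your overall architecture (Boolean closure, then program-reductions via a ``feedback-sweeping'' program to a piecewise testable target) matches the paper's, and for the atomic language $\ccddo{u}_l = \Sigma^* u \Sigma^* \cup u^l \shuffle \Sigma^*$ your plan is essentially right. But there is a genuine gap at the step you treat most casually: the concatenation $\ccddo{u_1}_l \cdots \ccddo{u_k}_l$. Concatenation is not a Boolean operation, and a program is an \emph{oblivious} sequence of instructions: it cannot locate the input-dependent positions where a word splits into its $\ccddo{u_1}_l$-part, $\ccddo{u_2}_l$-part, etc. Your first suggestion (a combined sweep with positional markers ``as in the $P_k(d, \sigma)$ construction'') does not transfer, because in Proposition~\ref{ptn:k-selectors_languages_in_P(J)} the block boundaries sit at \emph{fixed, known} positions of the input, whereas here they do not; your second suggestion (rewriting the product with a marker letter so that it ``lands in a piecewise testable language'') is false as stated, since a product of non-piecewise-testable languages does not become piecewise testable by alphabet extension. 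The paper's solution to exactly this problem is a combinatorial normal form (Lemma~\ref{lem:TDDO-Equality}): the product is rewritten as $\bigcup_{\alpha \in [l]^k} \bigl(R_l^\alpha(u_1,\ldots,u_k) \cap S_l^\alpha(u_1,\ldots,u_k)\bigr)$, i.e.\ as a Boolean combination of languages each of which imposes only \emph{global} subword conditions together with, for each $i$ with $\alpha_i < l$, a single condition of the form ``$u_i$ occurs as a factor preceded by a prefix containing $x_{i,1}$ as a subword and followed by a suffix containing $x_{i,2}$ as a subword''. Each such condition is then handled by one feedback sweep whose piecewise testable target checks the subword context around the swept factor gadget (Lemma~\ref{lem:Building-block_program}). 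Without this decomposition (or an equivalent one), your plan does not go through; note also that the paper's proof of the decomposition itself is delicate and only works under an extra hypothesis, which brings me to the second point.

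The second gap is that the feedback-sweeping argument only works when the letters of the target factor $u_i$ are pairwise distinct: the correctness proof of the sweep (the ``right-to-left'' direction) relies on distinctness to rule out spurious completions of the gadget by letters from overlapping windows, and the paper explicitly states it does not know how to prove the building-block lemma otherwise. The general case is then recovered by decorating each input letter with its position modulo a large enough $d$ (Proposition~\ref{ptn:RegLangPJ-TCCDDO}), which forces distinctness over the enlarged alphabet. You correctly identify spurious window completions as ``the delicate point'', but you do not propose a mechanism that resolves it for factors with repeated letters; some device like the positional modular decoration is needed and should be stated explicitly.
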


As $\Prog{\FMVJ}$ is closed under Boolean operations
(Proposition~\ref{lemma-simple-closure-P}), our goal is to prove, given an
alphabet $\Sigma$, given $l \in \N_{>0}$ and $u_1, \ldots, u_k \in \Sigma^+$
($k \in \N_{>0}$), that $\ccddo{u_1, \ldots, u_k}_l$ is in $\Prog{\FMVJ}$; the
case of $\Sigma^* u$ and $u \Sigma^*$ for $u \in \Sigma^+$ is easily handled
(see the discussion at the beginning of Subsection~\ref{sse:Non-tameness_of_J}).
To do this, we need to put $\ccddo{u_1, \ldots, u_k}_l$ in some normal form. It
is readily seen that
$\ccddo{u_1, \ldots, u_k}_l =
 \bigcup_{q_1, \ldots, q_k \in \set{1, l}}
 L^{(l)}_{(u_1, q_1)} \cdots L^{(l)}_{(u_k, q_k)}$
where the $L^{(l)}_{(u_i, q_i)}$'s are defined thereafter.

\begin{definition}
\label{def:TDDO_alternative}
    Let $\Sigma$ be an alphabet.

    For all $u \in \Sigma^+$, $l \in \N_{>0}$ and $\alpha \in [l]$, set
    $L^{(l)}_{(u, \alpha)} =
     \begin{cases}
	\Sigma^* u \Sigma^* & \text{if $\alpha < l$}\\
	u^l \shuffle \Sigma^* & \text{otherwise}
     \end{cases}$.
\end{definition}

Building directly a sequence of programs over a monoid in $\FMVJ$ that decides
$L^{(l)}_{(u_1, q_1)} \cdots L^{(l)}_{(u_k, q_k)}$ for some alphabet $\Sigma$
and $q_1, \ldots, q_k \in \set{1, l}$ seems however tricky. We need to split
things further by controlling precisely how many times each $u_i$ for
$i \in [k]$ appears in the right place when it does less than $l$ times.
To do this, we consider, for each $\alpha \in [l]^k$, the language
$R_l^\alpha(u_1, \ldots, u_k)$ defined below.

\begin{definition}
    Let $\Sigma$ be an alphabet.

    For all $u_1, \ldots, u_k \in \Sigma^+$ ($k \in \N_{>0}$), $l \in \N_{>0}$,
    $\alpha \in [l]^k$, we set
    \begin{align*}
	R_l^\alpha(u_1, \ldots, u_k)
	= & ({u_1}^{\alpha_1} \cdots {u_k}^{\alpha_k}) \shuffle \Sigma^* \cap\\
	& \bigcap_{i \in [k], \alpha_i < l}
	  \bigl(({u_1}^{\alpha_1} \cdots {u_i}^{\alpha_i + 1} \cdots
		 {u_k}^{\alpha_k}) \shuffle \Sigma^*\bigr)^\complement
	\displaypunct{.}
    \end{align*}
\end{definition}

Now, for a given $\alpha \in [l]^k$, we are interested in the words of
$R_l^\alpha(u_1, \ldots, u_k)$ such that for each $i \in [k]$ verifying
$\alpha_i < l$, the word $u_i$ indeed appears as a factor in the right place. We
thus introduce a last language $S_l^\alpha(u_1, \ldots, u_k)$ defined as
follows.

\begin{definition}
    Let $\Sigma$ be an alphabet.

    For all $u_1, \ldots, u_k \in \Sigma^+$ ($k \in \N_{>0}$), $l \in \N_{>0}$,
    $\alpha \in [l]^k$, we set
    \[
	S_l^\alpha(u_1, \ldots, u_k) =
	\bigcap_{i \in [k], \alpha_i < l}
	\bigl(({u_1}^{\alpha_1} \cdots {u_{i - 1}}^{\alpha_{i - 1}}) \shuffle
	      \Sigma^*\bigr)
	u_i
	\bigl(({u_{i + 1}}^{\alpha_{i + 1}} \cdots {u_k}^{\alpha_k}) \shuffle
	      \Sigma^*\bigr)
	.
    \]
\end{definition}

We now have the normal form we were looking for to prove
Theorem~\ref{thm:TDDO_languages_in_P(J)}:
$\ccddo{u_1, \ldots, u_k}_l$ is equal to the union, over all $\alpha \in [l]^k$,
of the intersection of $R_l^\alpha(u_1, \ldots, u_k)$ and
$S_l^\alpha(u_1, \ldots, u_k)$.
Though rather intuitive, the correctness of this decomposition is not so
straightforward to prove and, actually, we can only prove it when for each
$i \in [k]$, the letters in $u_i$ are all distinct.

\begin{lemma}
\label{lem:TDDO-Equality}
    Let $\Sigma$ be an alphabet, $l \in \N_{>0}$ and
    $u_1, \ldots, u_k \in \Sigma^+$ ($k \in \N_{>0}$) such that for each
    $i \in [k]$, the letters in $u_i$ are all distinct. Then,
    \[
	\bigcup_{q_1, \ldots, q_k \in \set{1, l}}
	L^{(l)}_{(u_1, q_1)} \cdots L^{(l)}_{(u_k, q_k)} =
	\bigcup_{\alpha \in [l]^k}
	\bigl(R_l^\alpha(u_1, \ldots, u_k) \cap
	      S_l^\alpha(u_1, \ldots, u_k)\bigr)
	\displaypunct{.}
    \]
\end{lemma}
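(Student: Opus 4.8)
The plan is to prove the two inclusions separately, exploiting throughout the hypothesis that each $u_i$ has all distinct letters, which has the following crucial consequence: if $u_i$ has $m$ distinct letters $a_1, \ldots, a_m$ (in order), then $u_i^{\alpha}$ is a subword of a word $w$ if and only if $w$ contains $\alpha$ disjoint occurrences of the block $a_1 \cdots a_m$ read left-to-right, and moreover the number of times $u_i$ occurs as a subword of $w$, viewed through the ``greedy'' lens, is determined purely by how the letters $a_1, \ldots, a_m$ interleave. In particular, for a word of the shape $v_1 u_i v_2$ with appropriate subword conditions on $v_1, v_2$, appending one genuine factor occurrence of $u_i$ bumps the subword-count of $u_i$ up by exactly one. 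This is what makes the bookkeeping between the ``threshold'' side (counting subword occurrences up to $l$) and the right-hand side (which fixes the exact count $\alpha_i < l$ and separately demands a factor occurrence) work out.

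For the inclusion $\supseteq$, I would fix $\alpha \in [l]^k$ and take $w \in R_l^\alpha(u_1,\ldots,u_k) \cap S_l^\alpha(u_1,\ldots,u_k)$. From membership in $S_l^\alpha$ one reads off, for each $i$ with $\alpha_i < l$, a factorisation witnessing that $u_i$ appears as a factor in the ``right place'' relative to the $u_j^{\alpha_j}$'s; splicing these factorisations together (using that the witnesses for different $i$ can be taken compatibly, since $S_l^\alpha$ is an intersection of concatenations with a common skeleton $u_1^{\alpha_1}\cdots u_k^{\alpha_k}$) gives a decomposition $w = w_1 \cdots w_k$ where $w_i \in L^{(l)}_{(u_i, q_i)}$ with $q_i = l$ if $\alpha_i = l$ and $q_i = 1$ (i.e.\ $w_i \in \Sigma^* u_i \Sigma^*$) if $\alpha_i < l$. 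That lands $w$ in the left-hand side. Here the membership of $w$ in $R_l^\alpha$ is not strictly needed for this direction, but it is harmless.

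For the harder inclusion $\subseteq$, take $w$ in the left-hand union, so $w = w_1 \cdots w_k$ with $w_i \in L^{(l)}_{(u_i, q_i)}$ for some $q_1,\ldots,q_k \in \{1,l\}$. I would define $\alpha_i$ to be the number of times $u_i$ occurs as a subword of $w$ ``in the right place'' — made precise as the largest $\beta$ such that $u_1^{\alpha_1}\cdots u_{i-1}^{\alpha_{i-1}} u_i^{\beta} u_{i+1} \cdots$ (suitably formalised via a greedy left-to-right matching) is a subword of $w$ — capped at $l$; one shows this is well-defined by induction on $i$. By construction $w \in R_l^\alpha(u_1,\ldots,u_k)$: the first conjunct holds by choice of $\alpha$, and the capped-maximality gives the complement conjuncts for the indices with $\alpha_i < l$. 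The real work is $w \in S_l^\alpha(u_1,\ldots,u_k)$, i.e.\ that whenever $\alpha_i < l$, the word $u_i$ genuinely appears as a \emph{factor} of $w$ between the $\alpha$-prescribed subword occurrences of the other blocks. This is where the distinct-letters hypothesis is essential and where I expect the main obstacle to lie: if $\alpha_i < l$ then in particular $q_i$ cannot ``force'' $l$ copies of $u_i$ as a subword, so from $w_i \in L^{(l)}_{(u_i,q_i)}$ we must be in the case $q_i = 1$, giving a factor occurrence of $u_i$ inside $w_i$; one then has to argue this factor occurrence is positioned correctly relative to the global greedy matching defining $\alpha$ — i.e.\ that the $\alpha_j$ copies of $u_j$ for $j < i$ can be matched strictly before this factor and the $\alpha_j$ copies for $j > i$ strictly after. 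The distinct-letters condition is what prevents a ``shared'' letter from being double-counted and lets one slide the greedy matching around the factor block cleanly. I would handle this by a careful induction on $k$ (peeling off $u_1^{\alpha_1}$ from the front of $w$, or equivalently $u_k$ from the back), reducing to the one-block case where the statement is a direct unravelling of the definitions.
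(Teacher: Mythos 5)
Your right-to-left inclusion ($\supseteq$) has a genuine gap at the splicing step, and the parenthetical claim that membership in $R_l^\alpha$ is ``not strictly needed'' for this direction is false: the compatibility of the factorisation witnesses coming from the different conjuncts of $S_l^\alpha$ is exactly what fails without the negative conjuncts of $R_l^\alpha$. Concretely, take $\Sigma = \{a,b,c,d,e\}$, $u_1 = ab$, $u_2 = cd$, $l = 3$, $\alpha = (1,1)$ and $w = aebcdabced$. The only factor occurrence of $ab$ (positions 6--7) is followed by a suffix containing $cd$ as a subword, and the only factor occurrence of $cd$ (positions 4--5) is preceded by a prefix containing $ab$ as a subword, so $w \in S_3^{(1,1)}(ab,cd)$; yet $w$ admits no decomposition $w = w_1 w_2$ with $w_1 \in L^{(3)}_{(ab,q_1)}$ and $w_2 \in L^{(3)}_{(cd,q_2)}$, since $w_1$ must contain the unique $ab$-factor and the leftover suffix $ced$ lies in neither $\Sigma^* cd \Sigma^*$ nor $(cd)^3 \shuffle \Sigma^*$. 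Of course $w \notin R_3^{(1,1)}(ab,cd)$ (it contains $ababcd$ as a subword), which is the point: the negative conjuncts of $R$ are what force the factor occurrences to be orderable, and proving that they do is the actual content of this direction, not a remark about a ``common skeleton''. (Also, for indices with $\alpha_i = l$ the positive conjunct of $R$ is indispensable, since $S$ imposes nothing on them.) The paper proves this inclusion by induction on $k$: it peels off a suffix $y$ of minimal length in ${u_{k+1}}^{\alpha_{k+1}} \shuffle \Sigma^*$, uses the distinct-letters hypothesis to show that the $u_{k+1}$-factor demanded by $S$ cannot straddle the cut and hence lies inside $y$, and applies the inductive hypothesis to the prefix.

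Your left-to-right inclusion is closer in spirit to the paper's, but the definition of $\alpha$ is circular as stated: the ``largest $\beta$ such that ${u_1}^{\alpha_1}\cdots {u_{i-1}}^{\alpha_{i-1}} {u_i}^{\beta} u_{i+1}\cdots$ is a subword'' needs the exponents of the later blocks to already be fixed, and a forward induction on $i$ cannot supply them. Moreover, ``capped-maximality gives the complement conjuncts'' hides the hardest step: maximality of the last exponent with respect to decompositions of $w$ does not immediately yield $w \notin ({u_1}^{\alpha_1}\cdots {u_k}^{\alpha_k}{u_{k+1}}^{\alpha_{k+1}+1}) \shuffle \Sigma^*$; in the paper this requires re-decomposing the prefix around a minimal occurrence of $u_{k+1}$ and a second application of the inductive hypothesis to contradict maximality. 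The paper's single induction on $k$, which chooses only the last exponent maximally and obtains the others from the inductive hypothesis, avoids both problems; you would need to restructure your argument along those lines.
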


\begin{proof}
    Let $\Sigma$ be an alphabet and $l \in \N_{>0}$. We prove it by induction on
    $k \in \N_{>0}$.
    
    \paragraph{Base case $k = 1$.}
    Let $u_1 \in \Sigma^+$ such that the letters in $u_1$ are all distinct.
    It is clear that
    \begin{align*}
	& \bigcup_{q_1 \in \set{1, l}} L^{(l)}_{(u_1, q_1)}\\
	= & (\Sigma^* u_1 \Sigma^* \cup {u_1}^l \shuffle \Sigma^*)\\
	= & \Bigl(\bigcup_{\alpha_1 = 1}^{l - 1}
		  \bigl({u_1}^{\alpha_1} \shuffle \Sigma^* \cap
			({u_1}^{\alpha_1 + 1} \shuffle
			\Sigma^*)^\complement \cap
			\Sigma^* u_1 \Sigma^*\bigr) \cup
		  ({u_1}^l \shuffle \Sigma^*)\Bigr)\\
	= & \bigcup_{\alpha_1 \in [l]}
	    \bigl(R_l^{\alpha_1}(u_1) \cap S_l^{\alpha_1}(u_1)\bigr)
	\displaypunct{.}
    \end{align*}

    \paragraph{Induction.}
    Let $k \in \N_{>0}$ and assume that for all $u_1, \ldots, u_k \in \Sigma^+$
    such that for each $i \in [k]$, the letters in $u_i$ are all distinct, we
    have
    \[
	\bigcup_{q_1, \ldots, q_k \in \set{1, l}}
	L^{(l)}_{(u_1, q_1)} \cdots L^{(l)}_{(u_k, q_k)} =
	\bigcup_{\alpha \in [l]^k}
	\bigl(R_l^\alpha(u_1, \ldots, u_k) \cap
	      S_l^\alpha(u_1, \ldots, u_k)\bigr)
	\displaypunct{.}
    \]

    Let now $u_1, \ldots, u_{k + 1} \in \Sigma^+$ such that for each
    $i \in [k + 1]$, the letters in $u_i$ are all distinct.

    \emph{Right-to-left inclusion.}
    Let
    \[
	w \in
	\bigcup_{\alpha \in [l]^{k + 1}}
	\bigl(R_l^\alpha(u_1, \ldots, u_{k + 1}) \cap
	      S_l^\alpha(u_1, \ldots, u_{k + 1})\bigr)
	\displaypunct{.}
    \]

    Let $\alpha \in [l]^{k + 1}$ witnessing this fact.
    As $w \in R_l^\alpha(u_1, \ldots, u_{k + 1})$, we can decompose it as
    $w = x y$ where
    $x \in ({u_1}^{\alpha_1} \cdots {u_k}^{\alpha_k}) \shuffle \Sigma^*$ and
    $y \in {u_{k + 1}}^{\alpha_{k + 1}} \shuffle \Sigma^*$ with $\length{y}$
    being minimal.
    What we are going to do is, on the one hand, to prove that 
    $x \in R_l^{\alpha'}(u_1, \ldots, u_k) \cap
	   S_l^{\alpha'}(u_1, \ldots, u_k)$
    where $\alpha' = (\alpha_1, \ldots, \alpha_k)$, so that we can apply the
    inductive hypothesis on $x$ and get that there exist
    $q_1, \ldots, q_k \in \set{1, l}$ such that
    $x \in L^{(l)}_{(u_1, q_1)} \cdots L^{(l)}_{(u_k, q_k)}$; and, on the other
    hand, we are going to prove that there exists $q_{k + 1} \in \set{1, l}$
    verifying $y \in L^{(l)}_{(u_{k + 1}, q_{k + 1})}$.
    We now spell out the details.
    
    For each $i \in [k], \alpha_i < l$, we have
    $x \notin ({u_1}^{\alpha_1} \cdots {u_i}^{\alpha_i + 1} \cdots
	       {u_k}^{\alpha_k}) \shuffle \Sigma^*$,
    otherwise we would have
    $w = x y \in ({u_1}^{\alpha_1} \cdots {u_i}^{\alpha_i + 1} \cdots
		  {u_{k + 1}}^{\alpha_{k + 1}}) \shuffle \Sigma^*$.
    Also, for all $i \in [k], \alpha_i < l$, we have that
    $x \in \bigl(({u_1}^{\alpha_1} \cdots {u_{i - 1}}^{\alpha_{i - 1}}) \shuffle
		 \Sigma^*\bigr)
	   u_i
	   \bigl(({u_{i + 1}}^{\alpha_{i + 1}} \cdots {u_k}^{\alpha_k}) \shuffle
		 \Sigma^*\bigr)$,
    otherwise it would mean that $y = y_1 y_2$ with $\length{y_1} > 0$, that
    $x y_1 \in \bigl(({u_1}^{\alpha_1} \cdots \allowbreak
		      {u_{i - 1}}^{\alpha_{i - 1}})
		     \shuffle \Sigma^*\bigr)
	       u_i
	       \bigl(({u_{i + 1}}^{\alpha_{i + 1}} \cdots {u_k}^{\alpha_k})
		     \shuffle \Sigma^*\bigr)$
    and $y_2 \in {u_{k + 1}}^{\alpha_{k + 1}} \shuffle \Sigma^*$,
    contradicting the minimality of $\length{y}$.
    So $x \in R_l^{\alpha'}(u_1, \ldots, u_k) \cap
	      S_l^{\alpha'}(u_1, \ldots, u_k)$,
    which means by inductive hypothesis that there exist
    $q_1, \ldots, q_k \in \set{1, l}$ such that
    $x \in L^{(l)}_{(u_1, q_1)} \cdots L^{(l)}_{(u_k, q_k)}$.

    Remember now that the letters in $u_{k + 1}$ are all distinct. 
    If $\alpha_{k + 1} < l$, since
    $w \in
     \bigl(({u_1}^{\alpha_1} \ldots {u_k}^{\alpha_k}) \shuffle \Sigma^*\bigr)
     u_{k + 1} \Sigma^*$,
    we must have $y \in \Sigma^* u_{k + 1} \Sigma^*$. Indeed, by minimality of
    $\length{y}$, the word $y$ starts with the first letter of $u_{k + 1}$,
    which has pairwise distinct letters, so that $u_{k + 1}$ cannot appear as a
    factor of $xy$ partly in $x$ and partly in $y$; so if it were the case that
    $y$ does not contain $u_{k + 1}$ as a factor, we would have
    $x \in
     \bigl(({u_1}^{\alpha_1} \ldots {u_k}^{\alpha_k}) \shuffle \Sigma^*\bigr)
     u_{k + 1} \Sigma^*$,
    so that
    $x y = w \in ({u_1}^{\alpha_1} \ldots {u_k}^{\alpha_k}
		  {u_{k + 1}}^{\alpha_{k + 1} + 1}) \shuffle \Sigma^*$,
    a contradiction with the hypothesis on $w$.
    Hence, $y \in L^{(l)}_{(u_{k + 1}, \alpha_{k + 1})}$.
    If $\alpha_{k + 1} = l$, then
    $y \in {u_{k + 1}}^{\alpha_{k + 1}} \shuffle \Sigma^* =
     L^{(l)}_{(u_{k + 1}, \alpha_{k + 1})}$.
    So, if we set
    $q_{k + 1} = \begin{cases}
		 1 & \text{if $\alpha_{k + 1} < l$}\\
		 l & \text{otherwise}
		 \end{cases}$,
    then we get that $y \in L^{(l)}_{(u_{k + 1}, q_{k + 1})}$.

    We can conclude that
    $w = x y \in L^{(l)}_{(u_1, q_1)} \cdots L^{(l)}_{(u_k, q_k)}
		 L^{(l)}_{(u_{k + 1}, q_{k + 1})}$.

    \emph{Left-to-right inclusion.}
    Let
    $w \in \bigcup_{q_1, \ldots, q_{k + 1} \in \set{1, l}}
	   L^{(l)}_{(u_1, q_1)} \cdots L^{(l)}_{(u_{k + 1}, q_{k + 1})}$.
    The rough idea of our proof here is to take $\alpha_{k + 1} \in [l]$ the
    biggest integer in $[l]$ such that
    $w \in \bigl(\bigcup_{q_1, \ldots, q_k \in \set{1, l}}
		 L^{(l)}_{(u_1, q_1)} \cdots L^{(l)}_{(u_k, q_k)})\bigr)
	   ({u_{k + 1}}^{\alpha_{k + 1}} \shuffle \Sigma^*)$
    and decompose $w$ as $w = x y$ where
    $x \in \bigcup_{q_1, \ldots, q_k \in \set{1, l}}
	   L^{(l)}_{(u_1, q_1)} \cdots L^{(l)}_{(u_k, q_k)}$ and
    $y \in {u_{k + 1}}^{\alpha_{k + 1}} \shuffle \Sigma^*$ with $\length{y}$
    being minimal. By inductive hypothesis, we know there exists
    $\alpha \in [l]^k$ such that
    $x \in R_l^\alpha (u_1, \ldots, u_k) \cap S_l^\alpha (u_1, \ldots, u_k)$ and
    we then prove that 
    $x y \in R_l^{(\alpha_1, \ldots, \alpha_{k + 1})}(u_1, \ldots, u_{k + 1})
	     \cap
	     S_l^{(\alpha_1, \ldots, \alpha_{k + 1})}(u_1, \ldots, u_{k + 1})$
    by distinguishing between the case in which $\alpha_{k + 1} = l$ and the
    case in which $\alpha_{k + 1} < l$. The first one is easy to handle, the
    second one is much trickier.

    We now spell out the details.
    \begin{itemize}
	\item
	    Suppose we have
	    \begin{align*}
		w \in &
		    \bigcup_{q_1, \ldots, q_k \in \set{1, l}}
		    L^{(l)}_{(u_1, q_1)} \cdots L^{(l)}_{(u_k, q_k)}
		    L^{(l)}_{(u_{k + 1}, l)}\\
		= &
		   \Bigl(\bigcup_{q_1, \ldots, q_k \in \set{1, l}}
			 L^{(l)}_{(u_1, q_1)} \cdots L^{(l)}_{(u_k, q_k)}\Bigr)
		   ({u_{k + 1}}^l \shuffle \Sigma^*)
		\displaypunct{.}
	    \end{align*}
	    Then $w$ can be decomposed as $w = x y$ where
	    $x \in \bigcup_{q_1, \ldots, q_k \in \set{1, l}}
		   L^{(l)}_{(u_1, q_1)} \cdots \allowbreak
		   L^{(l)}_{(u_k, q_k)}$
	    and $y \in {u_{k + 1}}^l \shuffle \Sigma^*$ with $\length{y}$ being
	    minimal.
	    So by inductive hypothesis, there exists $\alpha \in [l]^k$ such
	    that
	    $x \in R_l^\alpha(u_1, \ldots, u_k) \cap
		   S_l^\alpha(u_1, \ldots, u_k)$.
	    Observe that this means we have
	    $w \in ({u_1}^{\alpha_1} \cdots {u_k}^{\alpha_k} {u_{k + 1}}^l)
		   \shuffle \Sigma^*$
	    and for each $i \in [k], \alpha_i < l$, that
	    $w \notin ({u_1}^{\alpha_1} \cdots {u_i}^{\alpha_i + 1} \cdots
		       \allowbreak
		       {u_k}^{\alpha_k} {u_{k + 1}}^l) \shuffle \Sigma^*$,
	    otherwise it would mean that
	    $x \in ({u_1}^{\alpha_1} \cdots {u_i}^{\alpha_i + 1} \cdots
		    \allowbreak {u_k}^{\alpha_k}) \shuffle \Sigma^*$
	    by minimality of $\length{y}$. Similarly, for all
	    $i \in [k], \alpha_i < l$, it is obvious that we have
	    \[
		w = x y \in \bigl(({u_1}^{\alpha_1} \cdots
				  {u_{i - 1}}^{\alpha_{i - 1}}) \shuffle
				  \Sigma^*\bigr)
			    u_i
			    \bigl(({u_{i + 1}}^{\alpha_{i + 1}} \cdots
				  {u_k}^{\alpha_k} {u_{k + 1}}^l) \shuffle
				  \Sigma^*\bigr)
	    \]
	    as
	    $x \in \bigl(({u_1}^{\alpha_1} \cdots {u_{i - 1}}^{\alpha_{i - 1}})
			 \shuffle \Sigma^*\bigr)
		   u_i
		   \bigl(({u_{i + 1}}^{\alpha_{i + 1}} \cdots {u_k}^{\alpha_k})
			 \shuffle \Sigma^*\bigr)$
	    and $y \in {u_{k + 1}}^l \allowbreak \shuffle \Sigma^*$.
	    Hence,
	    $w \in
	     R_l^{(\alpha_1, \ldots, \alpha_{k + 1})}(u_1, \ldots, u_{k + 1})
	     \cap
	     S_l^{(\alpha_1, \ldots, \alpha_{k + 1})}(u_1, \ldots, u_{k + 1})$.

	\item
	    Or we have
	    \begin{align*}
		w \notin & \bigcup_{q_1, \ldots, q_k \in \set{1, l}}
			   L^{(l)}_{(u_1, q_1)} \cdots L^{(l)}_{(u_k, q_k)}
			   L^{(l)}_{(u_{k + 1}, l)}\\
	       	= & \Bigl(\bigcup_{q_1, \ldots, q_k \in \set{1, l}}
			  L^{(l)}_{(u_1, q_1)} \cdots L^{(l)}_{(u_k, q_k)}\Bigr)
		    ({u_{k + 1}}^l \shuffle \Sigma^*)
	    \end{align*}
	    but
	    \[
		w \in \bigcup_{q_1, \ldots, q_k \in \set{1, l}}
		      L^{(l)}_{(u_1, q_1)} \cdots
		      L^{(l)}_{(u_k, q_k)}
		      L^{(l)}_{(u_{k + 1}, 1)}
		\displaypunct{.}
	    \]
	    Let $\alpha_{k + 1} \in [l - 1]$ be the biggest integer in
	    $[l - 1]$ such that
	    \[
		w \in \Bigl(\bigcup_{q_1, \ldots, q_k \in \set{1, l}}
			    L^{(l)}_{(u_1, q_1)} \cdots
			    L^{(l)}_{(u_k, q_k)}\Bigr)
		      ({u_{k + 1}}^{\alpha_{k + 1}} \shuffle \Sigma^*)
	    \]
	    which does exist by hypothesis.
	    We can decompose $w$ as $w = x y$ where
	    $x \in \bigcup_{q_1, \ldots, q_k \in \set{1, l}}
		   L^{(l)}_{(u_1, q_1)} \cdots L^{(l)}_{(u_k, q_k)}$
	    and $y \in {u_{k + 1}}^{\alpha_{k + 1}} \shuffle \Sigma^*$ with
	    $\length{y}$ being minimal. So by inductive hypothesis, there exists
	    $\alpha \in [l]^k$ such that
	    $x \in R_l^\alpha(u_1, \ldots, u_k) \cap
		   S_l^\alpha(u_1, \ldots, u_k)$.
	    We are now going to prove that
	    \[
		w = x y \in
		R_l^{(\alpha_1, \ldots, \alpha_{k + 1})}(u_1, \ldots, u_{k + 1})
		\cap
	   	S_l^{(\alpha_1, \ldots, \alpha_{k + 1})}(u_1, \ldots, u_{k + 1})
		\displaypunct{.}
	    \]

	    Among the obvious things to observe is that we have
	    $w \in ({u_1}^{\alpha_1} \cdots {u_k}^{\alpha_k} \allowbreak
		    {u_{k + 1}}^{\alpha_{k + 1}}) \shuffle \Sigma^*$
	    and for each $i \in [k], \alpha_i < l$, that
	    \[
		w \notin ({u_1}^{\alpha_1} \cdots {u_i}^{\alpha_i + 1}
			  \cdots {u_k}^{\alpha_k}
			  {u_{k + 1}}^{\alpha_{k + 1}}) \shuffle \Sigma^*
		\displaypunct{,}
	    \]
	    otherwise it would mean that
	    $x \in ({u_1}^{\alpha_1} \cdots {u_i}^{\alpha_i + 1} \cdots
		    {u_k}^{\alpha_k}) \shuffle \Sigma^*$
	    by minimality of $\length{y}$. Similarly, for all
	    $i \in [k], \alpha_i < l$, it is obvious that we have
	    \[
		w = x y \in \bigl(({u_1}^{\alpha_1} \cdots
				   {u_{i - 1}}^{\alpha_{i - 1}}) \shuffle
				  \Sigma^*\bigr)
			    u_i
			    \bigl(({u_{i + 1}}^{\alpha_{i + 1}} \cdots
				   {u_k}^{\alpha_k}
				   {u_{k + 1}}^{\alpha_{k + 1}}) \shuffle
				  \Sigma^*\bigr)
	    \]
	    because
	    $x \in \bigl(({u_1}^{\alpha_1} \cdots {u_{i - 1}}^{\alpha_{i - 1}})
			 \shuffle \Sigma^*\bigr)
		   u_i
		   \bigl(({u_{i + 1}}^{\alpha_{i + 1}} \cdots {u_k}^{\alpha_k})
			 \shuffle \Sigma^*\bigr)$
	    and $y \in {u_{k + 1}}^{\alpha_{k + 1}} \shuffle \Sigma^*$.

	    Now let us show that we have $y \in \Sigma^* u_{k + 1} \Sigma^*$.
	    Assume it weren't the case: the letters in $u_{k + 1}$ are pairwise
	    distinct and moreover $y$ starts with the first letter of
	    $u_{k + 1}$ by minimality of $\length{y}$, so $u_{k + 1}$ cannot
	    appear as a factor of $x y$ partly in $x$ and partly in $y$ and,
	    additionally,
	    \begin{align*}
		w & \in \bigcup_{q_1, \ldots, q_k \in \set{1, l}}
			L^{(l)}_{(u_1, q_1)} \cdots L^{(l)}_{(u_k, q_k)}
			L^{(l)}_{(u_{k + 1}, 1)}\\
		& = \Bigl(\bigcup_{q_1, \ldots, q_k \in \set{1, l}}
			  L^{(l)}_{(u_1, q_1)} \cdots L^{(l)}_{(u_k, q_k)}\Bigr)
		    \Sigma^* u_{k + 1} \Sigma^*
		\displaypunct{,}
	    \end{align*}
	    so we would have
	    $x \in (\bigcup_{q_1, \ldots, q_k \in \set{1, l}}
		    L^{(l)}_{(u_1, q_1)} \cdots L^{(l)}_{(u_k, q_k)})
		   \Sigma^* u_{k + 1} \Sigma^*$.
	    But this either contradicts the maximality of $\alpha_{k + 1}$ or
	    the fact that
	    \[
		w \notin \Bigl(\bigcup_{q_1, \ldots, q_k \in \set{1, l}}
			       L^{(l)}_{(u_1, q_1)} \cdots
			       L^{(l)}_{(u_k, q_k)}\Bigr)
			 ({u_{k + 1}}^l \shuffle \Sigma^*)
		\displaypunct{.}
	    \]
	    Thus, we have
	    $w = x y \in \bigl(({u_1}^{\alpha_1} \cdots {u_k}^{\alpha_k})
			       \shuffle \Sigma^*\bigr)
			 u_{k + 1} \Sigma^*$
	    as
	    $x \in ({u_1}^{\alpha_1} \cdots \allowbreak {u_k}^{\alpha_k})
		   \allowbreak \shuffle \Sigma^*$.

	    Let us finish with the trickiest part, namely showing that
	    $w \notin ({u_1}^{\alpha_1} \cdots \allowbreak {u_k}^{\alpha_k}
		       \allowbreak {u_{k + 1}}^{\alpha_{k + 1} + 1}) \shuffle
		      \Sigma^*$.
	    Assume that
	    $w \in ({u_1}^{\alpha_1} \cdots {u_k}^{\alpha_k}
		    {u_{k + 1}}^{\alpha_{k + 1} + 1}) \shuffle \Sigma^*$.
	    We then have that
	    $x \in ({u_1}^{\alpha_1} \cdots \allowbreak {u_k}^{\alpha_k}
		    u_{k + 1})
		   \shuffle \Sigma^*$,
	    otherwise it would mean that $y = y_1 y_2$ with
	    $\length{y_1} > 0$, with
	    $x y_1 \in ({u_1}^{\alpha_1} \cdots {u_k}^{\alpha_k} \allowbreak
			u_{k + 1})
		       \shuffle \Sigma^*$
	    and $y_2 \in {u_{k + 1}}^{\alpha_{k + 1}} \shuffle \Sigma^*$,
	    contradicting the minimality of $\length{y}$.
	    We can decompose $x$ as $x = x_1 x_2$ where
	    $x_1 \in ({u_1}^{\alpha_1} \cdots {u_k}^{\alpha_k}) \shuffle
		     \Sigma^*$
	    and $x_2 \in u_{k + 1} \shuffle \Sigma^*$ with $\length{x_2}$ being
	    minimal.
	    We claim that, actually,
	    $x_1 \in R_l^\alpha(u_1, \ldots, u_k) \cap
		     S_l^\alpha(u_1, \ldots, u_k)$,
	    so that by inductive hypothesis,
	    $x_1 \in \bigcup_{q_1, \ldots, q_k \in \set{1, l}}
		     L^{(l)}_{(u_1, q_1)} \cdots L^{(l)}_{(u_k, q_k)}$.
	    But since
	    $x_2 y \in {u_{k + 1}}^{\alpha_{k + 1} + 1} \shuffle \Sigma^*$, this
	    means that
	    \[
		w = x_1 x_2 y \in
		\Bigl(\bigcup_{q_1, \ldots, q_k \in \set{1, l}}
		      L^{(l)}_{(u_1, q_1)} \cdots L^{(l)}_{(u_k, q_k)}\Bigr)
		({u_{k + 1}}^{\alpha_{k + 1} + 1} \shuffle \Sigma^*)
		\displaypunct{,}
	    \]
	    contradicting the maximality of $\alpha_{k + 1}$ or the fact that
	    \[
		w \notin \Bigl(\bigcup_{q_1, \ldots, q_k \in \set{1, l}}
			       L^{(l)}_{(u_1, q_1)} \cdots
			       L^{(l)}_{(u_k, q_k)}\Bigr)
			 ({u_{k + 1}}^l \shuffle \Sigma^*)
		\displaypunct{.}
	    \]
	    So we can conclude that
	    $w \notin ({u_1}^{\alpha_1} \cdots {u_k}^{\alpha_k}
		       {u_{k + 1}}^{\alpha_{k + 1} + 1}) \shuffle \Sigma^*$.

	    The claim that
	    $x_1 \in R_l^\alpha(u_1, \ldots, u_k) \cap
		     S_l^\alpha(u_1, \ldots, u_k)$
	    remains to be shown.
	    We directly see that
	    $x_1 \notin ({u_1}^{\alpha_1} \cdots {u_i}^{\alpha_i + 1} \cdots
			 {u_k}^{\alpha_k}) \shuffle \Sigma^*$
	    for all $i \in [k], \alpha_i < l$, otherwise it would mean that
	    $x \in ({u_1}^{\alpha_1} \cdots {u_i}^{\alpha_i + 1} \cdots
		    \allowbreak
		    {u_k}^{\alpha_k}) \shuffle \Sigma^*$.
	    Let now $i \in [k], \alpha_i < l$, and assume that
	    $x_1 \notin \bigl(({u_1}^{\alpha_1} \cdots
			      {u_{i - 1}}^{\alpha_{i - 1}}) \shuffle
			      \Sigma^*\bigr)
			u_i \allowbreak
			\bigl(({u_{i + 1}}^{\alpha_{i + 1}} \cdots \allowbreak
			      {u_k}^{\alpha_k}) \shuffle \Sigma^*\bigr)$.
	    We can decompose $x_1$ as $x_1 = x_{1, 1} x_{1, 2}$ where
	    $x_{1, 1} \in ({u_1}^{\alpha_1} \cdots \allowbreak {u_i}^{\alpha_i})
			  \shuffle \Sigma^*$
	    and
	    $x_{1, 2} \in ({u_{i + 1}}^{\alpha_{i + 1}} \cdots
			   {u_k}^{\alpha_k}) \shuffle \Sigma^*$
	    with $\length{x_{1, 1}}$ being minimal.
	    By hypothesis, we have
	    $x_{1, 1} \notin \bigl(({u_1}^{\alpha_1} \cdots
				   {u_{i - 1}}^{\alpha_{i - 1}}) \shuffle
				   \Sigma^*\bigr)
			     u_i \Sigma^*$,
	    otherwise we would have
	    \[
		x_1 = x_{1, 1} x_{1, 2} \in
		\bigl(({u_1}^{\alpha_1} \cdots {u_{i - 1}}^{\alpha_{i - 1}})
		      \shuffle \Sigma^*\bigr)
		u_i
		\bigl(({u_{i + 1}}^{\alpha_{i + 1}} \cdots {u_k}^{\alpha_k})
		      \shuffle \Sigma^*\bigr)
		\displaypunct{.}
	    \]
	    As previously, the letters in $u_i$ are pairwise distinct, and
	    $x_{1, 1}$ ends with the last letter of $u_i$ by minimality of
	    $\length{x_{1, 1}}$, so $u_i$ cannot appear as a factor of $x$
	    partly in $x_{1, 1}$ and partly in $x_{1, 2} x_2$. Thus, we have
	    that
	    \[
		x_{1, 2} x_2 \in \Sigma^* u_i
				 \bigl(({u_{i + 1}}^{\alpha_{i + 1}} \cdots
					{u_k}^{\alpha_k}) \shuffle \Sigma^*
				 \bigr)
	    \]
	    because we know that
	    $x \in \bigl(({u_1}^{\alpha_1} \cdots {u_{i - 1}}^{\alpha_{i - 1}})
			 \shuffle \Sigma^*\bigr)
		   u_i
		   \bigl(({u_{i + 1}}^{\alpha_{i + 1}} \cdots {u_k}^{\alpha_k})
			 \allowbreak \shuffle \Sigma^*\bigr)$.
	    But this means that
	    $x = x_{1, 1} x_{1, 2} x_2 \in
	     ({u_1}^{\alpha_1} \cdots {u_i}^{\alpha_i + 1} \cdots
	      {u_k}^{\alpha_k}) \shuffle \Sigma^*$,
	    a contradiction. Hence, we can deduce that for all
	    $i \in [k], \alpha_i < l$, we have
	    $x_1 \in \bigl(({u_1}^{\alpha_1} \cdots
			    {u_{i - 1}}^{\alpha_{i - 1}}) \shuffle \Sigma^*
		     \bigr)
		     u_i
		     \bigl(({u_{i + 1}}^{\alpha_{i + 1}} \cdots
			    {u_k}^{\alpha_k}) \allowbreak \shuffle \Sigma^*
		     \bigr)$.
	    This finishes to show that
	    \[
		x_1 \in R_l^\alpha(u_1, \ldots, u_k) \cap
			S_l^\alpha(u_1, \ldots, u_k)
		\displaypunct{.}
	    \]

	    \medskip

	    Putting all together, we indeed also have that
	    \[
		w \in
		R_l^{(\alpha_1, \ldots, \alpha_{k + 1})}(u_1, \ldots, u_{k + 1})
	   	\cap
		S_l^{(\alpha_1, \ldots, \alpha_{k + 1})}(u_1, \ldots, u_{k + 1})
	    \]
	    in the present case.
    \end{itemize}

    In conclusion, in both cases,
    \[
	w \in \bigcup_{\alpha \in [l]^{k + 1}}
	      \bigl(R_l^\alpha(u_1, \ldots, u_{k + 1}) \cap
		    S_l^\alpha(u_1, \ldots, u_{k + 1})\bigr)
	\displaypunct{.}
    \]

    So we can finally conclude that
    \begin{align*}
	& \bigcup_{q_1, \ldots, q_{k + 1} \in \set{1, l}}
	   L^{(l)}_{(u_1, q_1)} \cdots L^{(l)}_{(u_{k + 1}, q_{k + 1})}\\
	= & \bigcup_{\alpha \in [l]^{k + 1}}
	    \bigl(R_l^\alpha(u_1, \ldots, u_{k + 1}) \cap
		  S_l^\alpha(u_1, \ldots, u_{k + 1})\bigr)
	\displaypunct{.}
    \end{align*}

    This concludes the proof of the lemma.
\end{proof}

Our goal now is to prove, given an alphabet $\Sigma$, given $l \in \N_{>0}$ and
$u_1, \ldots, u_k \in \Sigma^+$ ($k \in \N_{>0}$) such that for each
$i \in [k]$, the letters in $u_i$ are all distinct, that for any
$\alpha \in [l]^k$, the language
$R_l^\alpha(u_1, \ldots, u_k) \cap S_l^\alpha(u_1, \ldots, u_k)$ is in
$\Prog{\FMVJ}$; closure of $\Prog{\FMVJ}$ under union
(Proposition~\ref{lemma-simple-closure-P}) consequently entails that
$\ccddo{u_1, \ldots, u_k}_l \in \Prog{\FMVJ}$.
The way $R_l^\alpha(u_1, \ldots, u_k)$ and $S_l^\alpha(u_1, \ldots, u_k)$ are
defined allows us to reason as follows. For each $i \in [k]$ verifying
$\alpha_i < l$, let $L_i$ be the language of words $w$ over $\Sigma$ containing
$x_{i, 1} {u_i}^{\alpha_i} x_{i, 2}$ as a subword but not
$x_{i, 1} {u_i}^{\alpha_i + 1} x_{i, 2}$ and such that $w = y_1 u_i y_2$ with
$y_1 \in x_{i, 1} \shuffle \Sigma^*$ and $y_2 \in x_{i, 2} \shuffle \Sigma^*$,
where $x_{i, 1} = {u_1}^{\alpha_1} \cdots {u_{i - 1}}^{\alpha_{i - 1}}$ and
$x_{i, 2} = {u_{i + 1}}^{\alpha_{i + 1}} \cdots {u_k}^{\alpha_k}$.
If we manage to prove that for each $i \in [k]$ verifying $\alpha_i < l$ we have
$L_i \in \Prog{\FMVJ}$, we can conclude that
$R_l^\alpha(u_1, \ldots, u_k) \cap S_l^\alpha(u_1, \ldots, u_k) =
 ({u_1}^{\alpha_1} \cdots {u_k}^{\alpha_k}) \shuffle \Sigma^* \cap
 \bigcap_{i \in [k], \alpha_i < l} L_i$
does belong to $\Prog{\FMVJ}$ by closure of $\Prog{\FMVJ}$ under intersection,
Proposition~\ref{lemma-simple-closure-P}.
The lemma that follows, the main lemma in the proof of
Theorem~\ref{thm:TDDO_languages_in_P(J)}, exactly shows this. The proof
crucially uses the ``feedback sweeping'' technique, but note that
we actually don't know how to prove it when we do not enforce that for each
$i \in [k]$, the letters in $u_i$ are all distinct.

\begin{lemma}
\label{lem:Building-block_program}
    Let $\Sigma$ be an alphabet and $u \in \Sigma^+$ such that its letters are
    all distinct.
    For all $\alpha \in \N_{>0}$ and $x_1, x_2 \in \Sigma^*$, we have
    \[
	(x_1 u^\alpha x_2) \shuffle \Sigma^* \cap
	\bigl((x_1 u^{\alpha + 1} x_2) \shuffle \Sigma^*\bigr)^\complement \cap
	(x_1 \shuffle \Sigma^*) u (x_2 \shuffle \Sigma^*)
	\in \Prog{\FMVJ}
	\displaypunct{.}
    \]
\end{lemma}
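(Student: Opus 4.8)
The plan is to give, for the language $L$ of the statement, a program-reduction of polynomial length to a piecewise testable language $K$ over a fixed alphabet $\Gamma \supseteq \Sigma$, and then to apply the ``in particular'' part of Proposition~\ref{ptn:Program-reduction_to_regular_language}: since a piecewise testable language has its syntactic monoid in $\FMVJ$, this yields $L \in \Prog{\FMVJ}$. It is worth stressing why the reduction has to be from all of $L$ rather than from a simpler piece. Writing $L = L_A \cap L_B$ with $L_A = (x_1 u^\alpha x_2)\shuffle\Sigma^* \cap \bigl((x_1 u^{\alpha+1}x_2)\shuffle\Sigma^*\bigr)^\complement$, which is itself piecewise testable and hence already in $\Prog{\FMVJ}$, and $L_B = (x_1\shuffle\Sigma^*)\,u\,(x_2\shuffle\Sigma^*)$, one cannot handle $L_B$ on its own, since for $x_1 = x_2 = \emptyword$ it is $\Sigma^* u \Sigma^*$, which is not in $\Prog{\FMVJ}$ as soon as $\length{u} \geq 2$; the bounded-occurrence information carried by $L_A$ is genuinely needed inside the reduction.

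First I would set up the feedback-sweeping reduction $\Psi_n$ on $\Sigma^n$. For every candidate position $i$ at which the length-$\length{u}$ window $w_i \cdots w_{i + \length{u} - 1}$ could equal $u$, the program emits a ``scan region'': it outputs the prefix letters $w_1\cdots w_{i-1}$, then re-reads the window positions in an interleaved back-and-forth pattern (in the spirit of the order $2,1,3,2,\ldots$ used for $(a+b)^*ac^+$), then outputs the suffix letters $w_{i+\length{u}}\cdots w_n$, the whole thing being delimited by a fixed finite set of auxiliary markers. The distinctness of the letters of $u$ is crucial here: a length-$\length{u}$ window that contains $u$ as a subword is necessarily exactly $u$, so a subword condition on the re-read window letters certifies ``this window equals $u$'', and, moreover, factor occurrences of $u$ in any word are automatically pairwise non-overlapping. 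The target $K$ is then a Boolean combination of subword conditions stating that some scan region is well formed --- its window is $u$, its prefix part contains $x_1$ as a subword, its suffix part contains $x_2$ as a subword --- conjoined with negative subword conditions that, pulled back along $\Psi_n$, express that $x_1 u^{\alpha+1} x_2$ is not a subword of $w$ (playing the role of the ``no $cca$, no $caa$, no $cb$'' clauses of the base example).

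The heart of the matter, and the step I expect to be the main obstacle, is the correctness equivalence $\Psi_n(w) \in K \iff w \in L$ for all $w \in \Sigma^n$. The delicate direction is soundness: a subword of $\Psi_n(w)$ witnessing a clause of $K$ may be assembled from letters lying in several different scan regions, and one must rule out that such cross-region matches forge a well-formed region for some $w \notin L$. This is precisely where the two hypotheses interact: since the letters of $u$ are distinct, any two factor occurrences of $u$ are non-overlapping and concatenate to a power of $u$, and, combined with the clause stating that $x_1 u^{\alpha+1} x_2$ is not a subword of $w$, this pins down the combinatorics tightly enough that every clause of $K$ satisfied by $\Psi_n(w)$ must be witnessed inside a single genuine scan region; that region then supplies a decomposition $w = y_1 u y_2$ with $x_1$ a subword of $y_1$ and $x_2$ a subword of $y_2$, while the negative clauses transfer back to give that $x_1 u^\alpha x_2$ is a subword of $w$ but $x_1 u^{\alpha+1} x_2$ is not. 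The converse direction --- that $w \in L$ produces, in the scan region attached to its distinguished occurrence of $u$, all the required positive subwords, the negative clauses holding because $x_1 u^{\alpha+1} x_2$ is not a subword of $w$ --- is comparatively routine.

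To finish, the length of $\Psi_n$ is at most quadratic in $n$ (linearly many scan regions, each of length linear in $n$), hence polynomial, so the ``in particular'' part of Proposition~\ref{ptn:Program-reduction_to_regular_language} applies and gives $L \in \Prog{\FMVJ}$. The real difficulty, as flagged above, is to choose the reading order inside $\Psi_n$ and the Boolean combination defining $K$ so that the cross-region spurious matches are provably excluded using only the distinctness of the letters of $u$ together with the bounded-occurrence hypothesis; everything else is bookkeeping.
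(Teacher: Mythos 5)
Your overall strategy is the paper's: a feedback-sweeping program-reduction from $L$ to a fixed piecewise testable language $K$, followed by Proposition~\ref{ptn:Program-reduction_to_regular_language}. But the decisive step --- actually defining $K$ and proving that spurious matches assembled from letters of several sweeps cannot occur --- is exactly what you leave open (``the real difficulty \dots is to choose the reading order \dots so that the cross-region spurious matches are provably excluded; everything else is bookkeeping''). That difficulty \emph{is} the lemma; without resolving it the proof is incomplete, and the device you propose for it does not suffice. Delimiting scan regions by auxiliary markers buys nothing against subword conditions, since a subword witness skips markers freely; and ``some scan region is well formed'' is an existential over $\Theta(n)$ regions, which a fixed Boolean combination of subword conditions cannot express directly.

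The paper's proof supplies the two missing ideas. First, instead of markers it \emph{decorates the swept letters themselves with their phase index}: each $a \in \Sigma$ gets $2\length{u}-1$ distinct copies $a^{(0)},\dots,a^{(2\length{u}-2)}$, and the (single, linear-length) program emits, after each position $i$, a backward sweep $w_{i-1}^{(1)}\cdots w_{i-\length{u}+1}^{(\length{u}-1)}$ and a forward sweep $w_{i-\length{u}+2}^{(\length{u})}\cdots w_i^{(2\length{u}-2)}$. Second, the unbounded existential over window positions is replaced by a \emph{finite} disjunction over $\beta \in [\alpha]$: $K$ asks for some pattern $\zeta_\beta = x_1 u^{\beta-1}\, u\, (\text{decorated sweep of } u)\, u^{\alpha-\beta} x_2$ as a subword while forbidding $x_1 u^{\alpha+1} x_2$; this exploits the hypothesis that $u$ occurs at most $\alpha$ times as a subword in the relevant region, so one of the $\alpha$ occurrences must be the factor occurrence. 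Soundness is then a genuinely delicate positional argument (a backward induction over the sweep phases, using that the letters of $u$ are distinct) showing that if the decorated sweep pattern is matched but $w \notin (x_1 \shuffle \Sigma^*)\, u\, (x_2 \shuffle \Sigma^*)$, the matched positions must spread across distinct input positions and yield $x_1 u^{\alpha+1} x_2$ as a subword of $w$, contradicting the negative clause. None of this is bookkeeping, and your quadratic ``repeat the whole prefix and suffix per candidate window'' layout would still have to face the same straddling problem.
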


\begin{proof}
    Before proving this lemma, we need a useful decomposition sublemma, that is
    straightforward to prove.

    \begin{lemma}
    \label{lem:Decomposition}
	Let $\Sigma$ be an alphabet and $u \in \Sigma^+$.
	Then, for all $\alpha \in \N_{>0}$, each
	$w \in u^\alpha \shuffle \Sigma^* \cap
	       (u^{\alpha + 1} \shuffle \Sigma^*)^\complement$
	verifies
	\[
	    w = \bigl(\prod_{i = 1}^\alpha
		      \prod_{j = 1}^{\length{u}} (v_{i, j} u_j)\bigr) y
	\]
	where $v_{i, j} \in (\Sigma \setminus \set{u_j})^*$ for all
	$i \in [\alpha]$ and $j \in [\length{u}]$, and
	$y \in \bigcup_{i = 1}^{\length{u}}
	       \Bigl(\prod_{j = 1}^{i - 1}
		     \bigl((\Sigma \setminus \set{u_j})^* u_j\bigr)
		     (\Sigma \setminus \set{u_i})^*\Bigr)$.
    \end{lemma}

    \begin{proof}[Proof of sublemma]
	Let $\Sigma$ be an alphabet and $u \in \Sigma^+$.

	Take $\alpha \in \N_{>0}$ and
	$w \in u^\alpha \shuffle \Sigma^* \cap
	       (u^{\alpha + 1} \shuffle \Sigma^*)^\complement$.

	As $w \in u^\alpha \shuffle \Sigma^*$, the word $w$ can be decomposed as
	$w = x y$ where $x \in u^\alpha \shuffle \Sigma^*$ and $\length{x}$ is
	minimal.
	Then, it is clearly necessarily the case that
	$x = \prod_{i = 1}^\alpha \prod_{j = 1}^{\length{u}} (v_{i, j} u_j)$
	with $v_{i, j} \in (\Sigma \setminus \set{u_j})^*$ for all
	$i \in [\alpha]$ and $j \in [\length{u}]$.
	Moreover, as $x y \notin u^{\alpha + 1} \shuffle \Sigma^*$, we
	necessarily have that $y \notin u \shuffle \Sigma^*$, so that there
	exists some $i \in [\length{u}]$ verifying that $u_1 \cdots u_{i - 1}$
	is a subword of $y$ but not $u_1 \cdots u_i$. Thus, we have that
	$y \in \prod_{j = 1}^{i - 1}
	       \bigl((\Sigma \setminus \set{u_j})^* u_j\bigr)
	       (\Sigma \setminus \set{u_i})^*$.

	This concludes the proof of the sublemma.
    \end{proof}

    We can now prove Lemma~\ref{lem:Building-block_program}.

    Let $\Sigma$ be an alphabet and $u \in \Sigma^+$ such that its letters are
    all distinct.
    Let $\alpha \in \N_{>0}$ and $x_1, x_2 \in \Sigma^*$.
    We let
    \[
	L =
	(x_1 u^\alpha x_2) \shuffle \Sigma^* \cap
	\bigl((x_1 u^{\alpha + 1} x_2) \shuffle \Sigma^*\bigr)^\complement \cap
	(x_1 \shuffle \Sigma^*) u (x_2 \shuffle \Sigma^*)
	\displaypunct{.}
    \]
    If $\length{u} = 1$, the lemma follows trivially because $L$ is piecewise
    testable and hence belongs to $\DLang{\FMVJ}$, so we assume
    $\length{u} > 1$.

    For each letter $a \in \Sigma$, we shall use $2 \length{u} - 1$ distinct
    decorated letters of the form $a^{(i)}$ for some
    $i \in \intinterval{0}{2 \length{u} - 2}$, using the convention that
    $a^{(0)} = a$; of course, for two distinct letters $a, b \in \Sigma$, we
    have that $a^{(i)}$ and $b^{(j)}$ are distinct for all
    $i, j \in \intinterval{0}{2 \length{u} - 2}$. We denote by $A$ the alphabet
    of these decorated letters.
    The main idea of the proof is, for a given input length $n \in \N$, to build
    an $A$-program $\Psi_n$ over $\Sigma^n$ such that, given an input word
    $w \in \Sigma^n$, it first ouputs the $\length{u} - 1$ first letters of $w$
    and then, for each $i$ going from $\length{u}$ to $n$, outputs $w_i$,
    followed by
    $w_{i - 1}^{(1)} \cdots w_{i - \length{u} + 1}^{(\length{u} - 1)}$ (a
    ``sweep'' of $\length{u} - 1$ letters backwards down to position
    $i - \length{u} + 1$, decorating the letters incrementally) and finally by
    $w_{i - \length{u} + 2}^{(\length{u})} \cdots w_i^{(2 \length{u} - 2)}$ (a
    ``sweep'' forwards up to position $i$, continuing the incremental decoration
    of the letters). The idea behind this way of rearranging and decorating
    letters is that, given an input word $w \in \Sigma^n$, as long as we make
    sure that $w$ and thus $\Psi_n(w)$ do contain $x_1 u^\alpha x_2$ as a
    subword but not $x_1 u^{\alpha + 1} x_2$, then $\Psi_n(w)$ can be decomposed
    as $\Psi_n(w) = y_1 z y_2$ where $y_1 \in x_1 \shuffle \Sigma^*$,
    $y_2 \in x_2 \shuffle \Sigma^*$, and $\length{y_1}, \length{y_2}$ are
    minimal, with $z$ containing
    $u^\beta u_{\length{u} - 1}^{(1)} \cdots u_1^{(\length{u} - 1)}
     u_2^{(\length{u})} \cdots u_{\length{u}}^{(2 \length{u} - 2)}
     u^{\alpha - \beta}$
    as a subword for some $\beta \in [\alpha]$ if and only if
    $w \in (x_1 \shuffle \Sigma^*) u (x_2 \shuffle \Sigma^*)$. This means we can
    check whether $w \in L$ by testing whether $w$ belongs to some fixed
    piecewise testable language over $A$.
    Let's now write the proof formally.

    For each $i \in \intinterval{0}{2 \length{u} - 2}$, let
    \[
	\function{f^{(i)}}{\Sigma}{A}{a}{a^{(i)}}
	\displaypunct{.}
    \]
    For all $i \in \N, i \geq \length{u}$, we define
    \[
	\Phi_i = (i, f^{(0)})
		 \prod_{j = 1}^{\length{u} - 1} (i - j, f^{(j)})
		 \prod_{j = 2}^{\length{u}}
		 (i - \length{u} + j, f^{(\length{u} + j - 2)})
	\displaypunct{.}
    \]
    For all $n \in \N, n < \length{u}$, we define $\Psi_n = \emptyword$.
    For all $n \in \N, n \geq \length{u}$, we define
    \[
	\Psi_n = \prod_{i = 1}^{\length{u} - 1} (i, f^{(0)})
		 \prod_{i = \length{u}}^n \Phi_i
	\displaypunct{.}
    \]
    Finally, let $K$ be the language of words over $A$ having
    \[
	\zeta_\beta =
	x_1 u^{\beta - 1} u
	\prod_{j = 1}^{\length{u} - 1} u_{\length{u} - j}^{(j)}
	\prod_{j = 2}^{\length{u}} u_j^{(\length{u} + j - 2)}
	u^{\alpha - \beta} x_2
    \]
    for some $\beta \in [\alpha]$ as a subword but not $x_1 u^{\alpha + 1} x_2$.

    \begin{claim}
	The sequence $(\Psi_n)_{n \in \N}$ of $A$-programs is a
	program-reduction from $L$ to $K$.
    \end{claim}

    Let
    \[
	\function{s}{\N}{\N}{n}
	{\begin{cases}
	    0 & \text{if $n < \length{u}$}\\
	    \length{u} - 1 + (n - \length{u} + 1) \cdot (2 \length{u} - 1) &
	    \text{otherwise}
	    \displaypunct{.}
	 \end{cases}}
    \]
    It is direct to see that
    $s(n) = \length{\Psi_n} \leq (2 \length{u} - 1) \cdot n$ for all $n \in \N$.

    Therefore, using this claim, $(\Psi_n)_{n \in \N}$ is a program-reduction
    from $L$ to $K$ of length $s(n)$, so since $K$ is piecewise testable and
    hence is recognised (classically) by some monoid from $\FMVJ$,
    Proposition~\ref{ptn:Program-reduction_to_regular_language} tells us that
    $L \in \Prog{\FMVJ, s(n)} = \Prog{\FMVJ, n}$.

    \begin{proof}[Proof of claim]
	Let $n \in \N$.
	If $n < \length{u}$, then it is obvious that for all $w \in \Sigma^n$,
	we have $w \notin (x_1 \shuffle \Sigma^*) u (x_2 \shuffle \Sigma^*)$ so
	$w \notin L^{=n}$ and also $\Psi_n(w) = \emptyword \notin K^{=s(n)}$,
	hence $L^{=n} = \emptyset = \Psi_n^{-1}(K^{=s(n)})$.
	Otherwise, $n \geq \length{u}$.
	We are going to show that $L^{=n} = \Psi_n^{-1}(K^{=s(n)})$.

	\paragraph{Left-to-right inclusion.}
	Let $w \in L^{=n}$. We want to show that $\Psi_n(w) \in K^{=s(n)}$.

	We are first going to show that there exists some $\beta \in [\alpha]$
	such that $\zeta_\beta$ is a subword of $\Psi_n(w)$.
	The fact that $w \in L^{=n}$ means in particular that
	$w \in (x_1 \shuffle \Sigma^*) u (x_2 \shuffle \Sigma^*)$ and we can
	hence decompose $w$ as $w = y_1 z y_2$ where
	$y_1 \in (x_1 \shuffle \Sigma^*)$ and $y_2 \in (x_2 \shuffle \Sigma^*)$
	with $\length{y_1}$ and $\length{y_2}$ being minimal. It follows
	necessarily that
	$z \in u^\alpha \shuffle \Sigma^* \cap
	       (u^{\alpha + 1} \shuffle \Sigma^*)^\complement \cap
	       \Sigma^* u \Sigma^*$
	by minimality of $\length{y_1}$ and $\length{y_2}$. By
	Lemma~\ref{lem:Decomposition}, we have
	$z = \bigl(\prod_{i = 1}^\alpha
		   \prod_{j = 1}^{\length{u}} (v_{i, j} u_j)\bigr) y$
	where $v_{i, j} \in (\Sigma \setminus \set{u_j})^*$ for all
	$i \in [\alpha]$ and $j \in [\length{u}]$, and
	$y \in \bigcup_{i = 1}^{\length{u}}
	       \Bigl(\prod_{j = 1}^{i - 1}
		     \bigl((\Sigma \setminus \set{u_j})^* u_j\bigr)
		     (\Sigma \setminus \set{u_i})^*\Bigr)$.
	We know the letters in $u$ are all distinct, so this means that there is
	no $\beta \in [\alpha - 1]$ such that $u$ is a factor of $z$ partly in
	$\prod_{j = 1}^{\length{u}} (v_{\beta, j} u_j)$ and partly in
	$\prod_{j = 1}^{\length{u}} (v_{\beta + 1, j} u_j)$, and that $u$ cannot
	appear as a factor of $z$ partly in
	$\prod_{j = 1}^{\length{u}} (v_{\alpha, j} u_j)$ and partly in $y$
	either. Hence, since $z \in \Sigma^* u \Sigma^*$, by the way we
	decomposed $z$, there necessarily exists $\beta \in [\alpha]$ such that
	$\prod_{j = 1}^{\length{u}} (v_{\beta, j} u_j) \in \Sigma^* u \Sigma^*$.
	Let $\gamma, \delta \in [n]$ such that
	$w_\gamma \cdots w_\delta =
	 \prod_{j = 1}^{\length{u}} (v_{\beta, j} u_j)$,
	$w_1 \cdots w_{\gamma - 1} =
	 y_1 \bigl(\prod_{i = 1}^{\beta - 1}
		   \prod_{j = 1}^{\length{u}} (v_{i, j} u_j)\bigr)$
	and
	$w_{\delta + 1} \cdots w_n =
	 \bigl(\prod_{i = \beta + 1}^\alpha
	       \prod_{j = 1}^{\length{u}} (v_{i, j} u_j)\bigr)
	 y y_2$.
	By the way $\beta$ is defined, we have
	$w_{\delta - \length{u} + 1} \cdots w_\delta = u$, because $\delta$ is
	the first and only position in $w$ with the letter $u_{\length{u}}$
	within the interval $\intinterval{\gamma}{\delta}$ verifying that
	$w_\gamma \cdots w_{\delta - 1}$ contains
	$u_1 \cdots u_{\length{u} - 1}$ as a subword, and we observe
	additionally that $\delta \geq \gamma + \length{u} - 1 \geq \length{u}$.
	This means that
	\begin{align*}
	    & \Phi_\delta(w)\\
	    = & f^{(0)}(w_\delta)
	        f^{(1)}(w_{\delta - 1}) \cdots
	        f^{(\length{u} - 1)}(w_{\delta - \length{u} + 1})
	        f^{(\length{u})}(w_{\delta - \length{u} + 2}) \cdots
	        f^{(2 \length{u} - 2)}(w_\delta)\\
	    = & u_{\length{u}}
		\prod_{j = 1}^{\length{u} - 1} u_{\length{u} - j}^{(j)}
		\prod_{j = 2}^{\length{u}} u_j^{(\length{u} + j - 2)}
	    \displaypunct{.}
	\end{align*}
	Moreover,
	\[
	    \prod_{i = 1}^{\gamma - 1} f^{(0)}(w_i) =
	    w_1 \cdots w_{\gamma - 1} =
	    y_1 \bigl(\prod_{i = 1}^{\beta - 1}
		      \prod_{j = 1}^{\length{u}} (v_{i, j} u_j)\bigr)
	    \displaypunct{,}
	\]
	\[
	    \prod_{i = \delta - \length{u} + 1}^{\delta - 1}f^{(0)}(w_i) =
	    w_{\delta - \length{u} + 1} \cdots w_{\delta - 1} =
	    u_1 \cdots u_{\length{u} - 1}
	\]
	and
	\[
	    \prod_{i = \delta + 1}^n f^{(0)}(w_i) =
	    w_{\delta + 1} \cdots w_n =
	    \bigl(\prod_{i = \beta + 1}^\alpha
		  \prod_{j = 1}^{\length{u}} (v_{i, j} u_j)\bigr)
	    y y_2
	    \displaypunct{.}
	\]
	So as
	$\prod_{i = 1}^{\gamma - 1} (i, f^{(0)})
	 \prod_{i = \delta - \length{u} + 1}^{\delta - 1} (i, f^{(0)})
	 \Phi_\delta
	 \prod_{i = \delta + 1}^n (i, f^{(0)})$
	is a subword of $\Psi_n$, we have that
	\[
	    \zeta_\beta =
	    x_1 u^{\beta - 1} u
	    \prod_{j = 1}^{\length{u} - 1} u_{\length{u} - j}^{(j)}
	    \prod_{j = 2}^{\length{u}} u_j^{(\length{u} + j - 2)}
	    u^{\alpha - \beta} x_2
	\]
	is a subword of $\Psi_n(w)$.
	
	We secondly show that $x_1 u^{\alpha + 1} x_2$ cannot be a subword of
	$\Psi_n(w)$. But this is direct by construction of $\Psi_n$, otherwise
	we would have that $x_1 u^{\alpha + 1} x_2$ is a subword of $w$,
	contradicting the fact that $w \in L^{=n}$.
	
	Hence, $\Psi_n(w) \in K^{=s(n)}$, and since this is true for all
	$w \in L^{=n}$, we have $L^{=n} \subseteq \Psi_n^{-1}(K^{=s(n)})$.

	\paragraph{Right-to-left inclusion.}
	We are going to prove the ``contrapositive inclusion''.

	Let $w \in \Sigma^n \setminus L^{=n}$. We want to show that
	$\Psi_n(w) \notin K^{=s(n)}$.
	
	Let us start with the easy cases. If we have
	$w \notin (x_1 u^\alpha x_2) \shuffle \Sigma^*$, then it means that
	$x_1 u^\alpha x_2$ is not a subword of $w$ and hence, by construction of
	$\Psi_n$, not a subword of $\Psi_n(w)$ either, so that there does not
	exist any $\beta \in [\alpha]$ such that $\zeta_\beta$ is a subword of
	$\Psi_n(w)$.
	Similarly, if we have
	$w \in (x_1 u^{\alpha + 1} x_2) \shuffle \Sigma^*$, then it means that
	$x_1 u^{\alpha + 1} x_2$ is a subword of $w$ and hence, by construction
	of $\Psi_n$, a subword of $\Psi_n(w)$.

	We now assume that
	$w \in
	 (x_1 u^\alpha x_2) \shuffle \Sigma^* \cap
	 \bigl((x_1 u^{\alpha + 1} x_2) \shuffle \Sigma^*\bigr)^\complement$
	while $w \notin (x_1 \shuffle \Sigma^*) u (x_2 \shuffle \Sigma^*)$.
	We want to show that in this case, there does not exist any
	$\beta \in [\alpha]$ such that $\zeta_\beta$ is a subword of
	$\Psi_n(w)$. Suppose for a contradiction that such a $\beta$ exists; our
	goal is to show, through a careful observation of what this implies on
	the letters in $w$ by examining how $\Psi_n$ decorates the letters, that
	this contradictingly entails $x_1 u^{\alpha + 1} x_2$ is a subword of
	$w$.

	Since $\zeta_\beta$ is a subword of $\Psi_n(w)$, it is not too difficult
	to see there exist
	\[
	    p_1, \ldots, p_{\length{x_1} + (\beta - 1) \cdot \length{u}},
	    q_1, \ldots, \allowbreak q_{3 \length{u} - 2}, r_1, \ldots,
	    r_{(\alpha - \beta) \cdot \length{u} + \length{x_2}} \in [n]
	\]
	verifying that
	\[
	    w_{p_1} \cdots w_{p_{\length{x_1} + (\beta - 1) \cdot \length{u}}} =
	    x_1 u^{\beta - 1}
	    \displaypunct{,}
	\]
	\[
	    w_{q_1} \cdots w_{q_{3 \length{u} - 2}} =
	    u \prod_{j = 1}^{\length{u} - 1} u_{\length{u} - j}
	    \prod_{j = 2}^{\length{u}} u_j
	    \displaypunct{,}
	\]
	\[
	    w_{r_1} \cdots
	    w_{r_{(\alpha - \beta) \cdot \length{u} + \length{x_2}}} =
	    u^{\alpha - \beta} x_2
	\]
	and
	\begin{align*}
	    (p_1, f^{(0)}) \cdots
	    (p_{\length{x_1} + (\beta - 1) \cdot \length{u}}, f^{(0)})
	    (q_1, f^{(0)}) \cdots (q_{\length{u}}, f^{(0)})\\
	    (q_{\length{u} + 1}, f^{(1)}) \cdots
	    (q_{2 \length{u} - 1}, f^{(\length{u} - 1)})
	    (q_{2 \length{u}}, f^{(\length{u})}) \cdots
	    (q_{3 \length{u} - 2}, f^{(2 \length{u} - 2)})\\
	    (r_1, f^{(0)}) \cdots
	    (r_{(\alpha - \beta) \cdot \length{u} + \length{x_2}}, f^{(0)})
	\end{align*}
	is a subword of $\Psi_n$.
	By construction of $\Psi_n$, we have
	\[
	    p_1 < \cdots < p_{\length{x_1} + (\beta - 1) \cdot \length{u}} <
	    q_1 < \cdots < q_{\length{u}} <
	    r_1 < \cdots < r_{(\alpha - \beta) \cdot \length{u} + \length{x_2}}
	    \displaypunct{,}
	\]
	so this implies that $w$ can be decomposed as $w = y_1 z y_2$ where
	$y_1 \in x_1 \shuffle \Sigma^*$, where
	$z \in u^\alpha \shuffle \Sigma^*$ and $y_2 \in x_2 \shuffle \Sigma^*$,
	the positions $p_1, \ldots, p_{\length{x_1}}$ corresponding to letters
	in $y_1$, the positions
	$p_{\length{x_1} + 1}, \ldots,
	 p_{\length{x_1} + (\beta - 1) \cdot \length{u}}, \allowbreak
	 q_1, \ldots, q_{\length{u}}, \allowbreak
	 r_1, \ldots, r_{(\alpha - \beta) \cdot \length{u}}$
	corresponding to letters in $z$ and the positions
	$r_{(\alpha - \beta) \cdot \length{u} + 1}, \ldots, \allowbreak
	 r_{(\alpha - \beta) \cdot \length{u} + \length{x_2}}$
	corresponding to letters in $y_2$.

	We are now going to show that, in fact,
	$q_{\length{u}} < q_{2 \length{u} - 1} < q_{2 \length{u}} < \cdots <
	 q_{3 \length{u} - 2} < r_1$,
	which implies $z \in u^{\alpha + 1} \shuffle \Sigma^*$ and thus the
	contradiction we are aiming for.
	Since $w \notin (x_1 \shuffle \Sigma^*) u (x_2 \shuffle \Sigma^*)$, we
	have $z \notin \Sigma^* u \Sigma^*$, hence as
	$w_{q_{\length{u}}} = u_{\length{u}}$ and $\length{u} > 1$, there must
	exist $j \in [\length{u} - 1]$ such that
	$w_{q_{\length{u}} - j} \neq u_{\length{u} - j}$ and
	$w_{q_{\length{u}} - \iota} = u_{\length{u} - \iota}$ for all
	$\iota \in \intinterval{0}{j - 1}$.
	By construction of $\Psi_n$, we know that
	$q_{\length{u} + j} \geq q_{\length{u}} - j$ (because the instructions
	with $f^{(j)}$ after an instruction with $f^{(0)}$ querying position
	$p \in [n]$ all query a position at least equal to $p - j$), but since
	$u_{\length{u} - j} \neq w_{q_{\length{u}} - j}$ and
	$u_{\length{u} - j} \neq u_{\length{u} - \iota} =
	 w_{q_{\length{u}} - \iota}$
	for all $\iota \in \intinterval{0}{j - 1}$ as the letters in $u$ are all
	distinct, we get that $q_{\length{u} + j} > q_{\length{u}}$.
	By (backward) induction, we can show that for all
	$\iota \in \intinterval{j + 1}{\length{u} - 1}$, we have
	$q_{\length{u} + \iota} > q_{\length{u}}$. Indeed, given
	$\iota \in \intinterval{j + 1}{\length{u} - 1}$, we have
	$q_{\length{u} + \iota - 1} > q_{\length{u}}$, either by inductive
	hypothesis or directly in the base case $\iota = j + 1$ by what we have
	just seen. So by construction of $\Psi_n$, we know that
	$q_{\length{u} + \iota} \geq q_{\length{u}}$ (because the instructions
	with $f^{(\iota)}$ after an instruction with $f^{(\iota - 1)}$ querying
	position $p \in [n]$ all query a position at least equal to $p - 1$),
	but since
	$u_{\length{u} - \iota} \neq u_{\length{u}} = w_{q_{\length{u}}}$ as the
	letters in $u$ are all distinct, it follows that
	$q_{\length{u} + \iota} > q_{\length{u}}$.
	Therefore, we have that $q_{2 \length{u} - 1} > q_{\length{u}}$.
	Moreover, by construction of $\Psi_n$, we also have
	$q_{2 \length{u} - 1} < q_{2 \length{u}} < \cdots < q_{3 \length{u} - 2}
	 < r_1$
	(because for each $\iota \in \intinterval{0}{\length{u} - 2}$, the
	instructions with $f^{(\length{u} + \iota)}$ after an instruction with
	$f^{(\length{u} + \iota - 1)}$ querying position $p \in [n]$ all query a
	position at least equal to $p + 1$ and similarly for the instructions
	with $f^{(0)}$ after an instruction with $f^{(2 \length{u} - 2)}$).
	So, to conclude, we have
	$p_1 < \cdots < p_{\length{x_1} + (\beta - 1) \cdot \length{u}} <
	 q_1 < \cdots < q_{\length{u}} < q_{2 \length{u} - 1} <
	 q_{2 \length{u}} < \cdots < q_{3 \length{u} - 2} <
	 r_1 < \cdots < r_{(\alpha - \beta) \cdot \length{u} + \length{x_2}}$
	and
	\begin{align*}
	    & w_{p_1} \cdots w_{p_{\length{x_1} + (\beta - 1) \cdot \length{u}}}
	      w_{q_1} \cdots w_{q_{\length{u}}} w_{q_{2 \length{u} - 1}}
	      w_{q_{2 \length{u}}} \cdots w_{q_{3 \length{u} - 2}}
	      w_{r_1} \cdots
	      w_{r_{(\alpha - \beta) \cdot \length{u} + \length{x_2}}}\\
	    = & x_1 u^{\beta - 1} u u_1 u_2 \cdots u_{\length{u}}
		u^{\alpha - \beta} x_2
		= x_1 u^{\alpha + 1} x_2
	    \displaypunct{.}
	\end{align*}
	This implies that $w \in (x_1 u^{\alpha + 1} x_2) \shuffle \Sigma^*$, a
	contradiction.
	So there does not exist $\beta \in [\alpha]$ such that $\zeta_\beta$ is
	a subword of $\Psi(w)$.

	Therefore, in every case $\Psi_n(w) \notin K^{=s(n)}$, and since this is
	true for all $w \in \Sigma^n \setminus L^{=n}$, we have
	$\Sigma^n \setminus L^{=n} \subseteq
	 \Psi_n^{-1}(A^{s(n)} \setminus K^{=s(n)})$,
	which is equivalent to $L^{=n} \supseteq \Psi_n^{-1}(K^{=s(n)})$.

	\bigskip

	This concludes the proof of the claim.
    \end{proof}

    And the one of the lemma.
\end{proof}

As explained before stating the previous lemma, we can now use it to prove the
result we were aiming for.

\begin{proposition}
\label{ptn:RegLangPJ-All_distinct_simple_TCCDDO}
    Let $\Sigma$ be an alphabet, $l \in \N_{>0}$ and
    $u_1, \ldots, u_k \in \Sigma^+$ ($k \in \N_{>0}$) such that for each
    $i \in [k]$, the letters in $u_i$ are all distinct.
    For all $\alpha \in [l]^k$, we have
    $R_l^\alpha(u_1, \ldots, u_k) \cap S_l^\alpha(u_1, \ldots, u_k) \in
     \Prog{\FMVJ}$.
\end{proposition}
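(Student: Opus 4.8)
The plan is to obtain the proposition by combining Lemma~\ref{lem:Building-block_program} with the closure of $\Prog{\FMVJ}$ under intersection (Proposition~\ref{lemma-simple-closure-P}), essentially formalising the discussion preceding the statement. I would fix $\alpha \in [l]^k$ and, for each $i \in [k]$ with $\alpha_i < l$, set $x_{i, 1} = {u_1}^{\alpha_1} \cdots {u_{i - 1}}^{\alpha_{i - 1}}$ and $x_{i, 2} = {u_{i + 1}}^{\alpha_{i + 1}} \cdots {u_k}^{\alpha_k}$, so that $x_{i, 1} {u_i}^{\alpha_i} x_{i, 2} = {u_1}^{\alpha_1} \cdots {u_k}^{\alpha_k}$ and $x_{i, 1} {u_i}^{\alpha_i + 1} x_{i, 2} = {u_1}^{\alpha_1} \cdots {u_i}^{\alpha_i + 1} \cdots {u_k}^{\alpha_k}$. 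With
\[
    L_i = (x_{i, 1} {u_i}^{\alpha_i} x_{i, 2}) \shuffle \Sigma^* \cap \bigl((x_{i, 1} {u_i}^{\alpha_i + 1} x_{i, 2}) \shuffle \Sigma^*\bigr)^\complement \cap (x_{i, 1} \shuffle \Sigma^*) u_i (x_{i, 2} \shuffle \Sigma^*),
\]
the hypotheses of Lemma~\ref{lem:Building-block_program} are met for $u = u_i$ (its letters being pairwise distinct), $\alpha = \alpha_i$ (which lies in $\N_{>0}$ since $\alpha_i < l$ forces $\alpha_i \geq 1$), $x_1 = x_{i, 1}$ and $x_2 = x_{i, 2}$, so $L_i \in \Prog{\FMVJ}$.

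Next I would verify the purely set-theoretic identity
\[
    R_l^\alpha(u_1, \ldots, u_k) \cap S_l^\alpha(u_1, \ldots, u_k) = \bigl(({u_1}^{\alpha_1} \cdots {u_k}^{\alpha_k}) \shuffle \Sigma^*\bigr) \cap \bigcap_{i \in [k], \alpha_i < l} L_i .
\]
Writing $A = ({u_1}^{\alpha_1} \cdots {u_k}^{\alpha_k}) \shuffle \Sigma^*$, $B_i = ({u_1}^{\alpha_1} \cdots {u_i}^{\alpha_i + 1} \cdots {u_k}^{\alpha_k}) \shuffle \Sigma^*$ and $C_i = (x_{i, 1} \shuffle \Sigma^*) u_i (x_{i, 2} \shuffle \Sigma^*)$, the word equalities above show that $L_i = A \cap B_i^\complement \cap C_i$; intersecting the $L_i$ over all $i$ with $\alpha_i < l$, then with $A$, and reassociating gives $\bigl(A \cap \bigcap_{i: \alpha_i < l} B_i^\complement\bigr) \cap \bigcap_{i: \alpha_i < l} C_i$, which is exactly $R_l^\alpha(u_1, \ldots, u_k) \cap S_l^\alpha(u_1, \ldots, u_k)$ — with the convention that an empty intersection is $\Sigma^*$, so that the degenerate case $\alpha = (l, \ldots, l)$ is handled too, both sides then reducing to $A$.

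To close, $A$ is of the form $u \shuffle \Sigma^*$, hence piecewise testable and classically recognised by a monoid in $\FMVJ$, so $A \in \Prog{\FMVJ}$ (e.g.\ via the length-$n$ identity program-reduction from $A$ to itself and Proposition~\ref{ptn:Program-reduction_to_regular_language}). Since $\Prog{\FMVJ}$ is closed under intersection by Proposition~\ref{lemma-simple-closure-P}, the right-hand side of the identity is a finite intersection of languages in $\Prog{\FMVJ}$, hence lies in $\Prog{\FMVJ}$; by the identity, so does $R_l^\alpha(u_1, \ldots, u_k) \cap S_l^\alpha(u_1, \ldots, u_k)$.

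Regarding difficulty: the substantive work — the ``feedback-sweeping'' program-reduction and the decomposition sublemma — has already been absorbed into Lemma~\ref{lem:Building-block_program}, so the proof of this proposition is essentially bookkeeping. The only points needing a little care are checking that $L_i$ is genuinely an instance of the language treated in Lemma~\ref{lem:Building-block_program} (i.e.\ that $x_{i, 1} {u_i}^{\alpha_i} x_{i, 2}$ and $x_{i, 1} {u_i}^{\alpha_i + 1} x_{i, 2}$ coincide with the words occurring in $R_l^\alpha(u_1, \ldots, u_k)$) and treating the empty-index-set case in the set identity.
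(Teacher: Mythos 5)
Your proposal is correct and follows essentially the same route as the paper: define the languages $L_i$, observe they are instances of Lemma~\ref{lem:Building-block_program}, verify the set-theoretic identity expressing $R_l^\alpha(u_1, \ldots, u_k) \cap S_l^\alpha(u_1, \ldots, u_k)$ as the intersection of $({u_1}^{\alpha_1} \cdots {u_k}^{\alpha_k}) \shuffle \Sigma^*$ with the $L_i$, and conclude by closure of $\Prog{\FMVJ}$ under intersection. (One cosmetic remark: $\alpha_i \geq 1$ holds simply because $\alpha_i \in [l]$, not because $\alpha_i < l$.)
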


\begin{proof}
    Let $\Sigma$ be an alphabet, $l \in \N_{>0}$ and
    $u_1, \ldots, u_k \in \Sigma^+$ ($k \in \N_{>0}$) such that for each
    $i \in [k]$, the letters in $u_i$ are all distinct.
    Let $\alpha \in [l]^k$.

    For each $i \in [k]$ verifying $\alpha_i < l$, we define
    \begin{align*}
	L_i
	= & ({u_1}^{\alpha_1} \cdots {u_k}^{\alpha_k}) \shuffle \Sigma^* \cap
	    \bigl(({u_1}^{\alpha_1} \cdots {u_i}^{\alpha_i + 1} \cdots
		   {u_k}^{\alpha_k}) \shuffle \Sigma^*\bigr)^\complement \cap\\
	& \bigl(({u_1}^{\alpha_1} \cdots {u_{i - 1}}^{\alpha_{i - 1}}) \shuffle
		\Sigma^*\bigr)
	  u_i
	  \bigl(({u_{i + 1}}^{\alpha_{i + 1}} \cdots {u_k}^{\alpha_k}) \shuffle
		\Sigma^*\bigr)
	\displaypunct{.}
    \end{align*}
    It is immediate to show that
    \[
	R_l^\alpha(u_1, \ldots, u_k) \cap S_l^\alpha(u_1, \ldots, u_k) =
	({u_1}^{\alpha_1} \cdots {u_k}^{\alpha_k}) \shuffle \Sigma^* \cap
	\bigcap_{i \in [k], \alpha_i < l} L_i
	\displaypunct{.}
    \]

    By Lemma~\ref{lem:Building-block_program}, $L_i \in \Prog{\FMVJ}$ for each
    $i \in [k]$ verifying $\alpha_i < l$. Moreover, since
    $({u_1}^{\alpha_1} \cdots {u_k}^{\alpha_k}) \shuffle \Sigma^*$ obviously is
    a piecewise testable language, it belongs to $\Prog{\FMVJ}$.
    Thus, we can conclude that
    $R_l^\alpha(u_1, \ldots, u_k) \cap S_l^\alpha(u_1, \ldots, u_k)$ belongs to
    $\Prog{\FMVJ}$ by closure of $\Prog{\FMVJ} \cap \Reg$ under intersection,
    Proposition~\ref{lemma-simple-closure-P}.
\end{proof}

We thus derive the awaited corollary.

\begin{corollary}
\label{cor:RegLangPJ-All_distinct_TCCDDO}
    Let $\Sigma$ be an alphabet, $l \in \N_{>0}$ and
    $u_1, \ldots, u_k\!\in\!\Sigma^+$ ($k \in \N_{>0}$) such that for each
    $i \in [k]$, the letters in $u_i$ are all distinct.
    Then, $\ccddo{u_1, \ldots, u_k}_l \in \Prog{\FMVJ}$.
\end{corollary}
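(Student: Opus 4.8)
The plan is to obtain the corollary by simply chaining together the normal-form identity and the two structural results already established. First I would recall the decomposition noted just before Definition~\ref{def:TDDO_alternative}, namely that $\ccddo{u_1, \ldots, u_k}_l = \bigcup_{q_1, \ldots, q_k \in \set{1, l}} L^{(l)}_{(u_1, q_1)} \cdots L^{(l)}_{(u_k, q_k)}$, which holds for any $u_1, \ldots, u_k \in \Sigma^+$ and any $l \in \N_{>0}$ and follows directly from unfolding the definitions of $\ccddo{u_i}_l = \Sigma^* u_i \Sigma^* \cup u_i^l \shuffle \Sigma^*$ and of concatenation. Next, since by hypothesis the letters in each $u_i$ are pairwise distinct, Lemma~\ref{lem:TDDO-Equality} applies and rewrites the right-hand side as the finite union $\bigcup_{\alpha \in [l]^k} \bigl(R_l^\alpha(u_1, \ldots, u_k) \cap S_l^\alpha(u_1, \ldots, u_k)\bigr)$. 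This is the step where the combinatorial work on the normal form — carried out by induction on $k$ in the proof of Lemma~\ref{lem:TDDO-Equality} — pays off, and it is the only place the distinct-letters assumption is genuinely used at this level of the argument.

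Then I would invoke Proposition~\ref{ptn:RegLangPJ-All_distinct_simple_TCCDDO}, which (again relying on the distinct-letters hypothesis, and ultimately on the feedback-sweeping construction of Lemma~\ref{lem:Building-block_program}) asserts that for every single $\alpha \in [l]^k$ the language $R_l^\alpha(u_1, \ldots, u_k) \cap S_l^\alpha(u_1, \ldots, u_k)$ lies in $\Prog{\FMVJ}$. Finally, since $[l]^k$ is finite and $\Prog{\FMVJ}$ is closed under Boolean operations, in particular under finite unions, by Proposition~\ref{lemma-simple-closure-P}, the union over all $\alpha \in [l]^k$ is again in $\Prog{\FMVJ}$. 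This yields $\ccddo{u_1, \ldots, u_k}_l \in \Prog{\FMVJ}$, as desired.

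As for the main obstacle: at the level of this corollary there is essentially none, since all the substance has been pushed into Lemma~\ref{lem:TDDO-Equality} and Proposition~\ref{ptn:RegLangPJ-All_distinct_simple_TCCDDO}. The only point requiring a moment's care is checking that the hypothesis ``the letters in each $u_i$ are all distinct'' is precisely the hypothesis demanded by both of those statements, so that they can be applied verbatim to the $u_1, \ldots, u_k$ at hand; once that is verified, the proof is a short three-line assembly.
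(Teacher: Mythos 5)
Your proposal is correct and follows exactly the paper's own derivation: the identity of Lemma~\ref{lem:TDDO-Equality} (which is where the distinct-letters hypothesis enters), Proposition~\ref{ptn:RegLangPJ-All_distinct_simple_TCCDDO} for each $\alpha \in [l]^k$, and closure of $\Prog{\FMVJ}$ under finite union via Proposition~\ref{lemma-simple-closure-P}. Nothing is missing.
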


However, what we really want to obtain is that
$\ccddo{u_1, \ldots, u_k}_l \in \Prog{\FMVJ}$ without putting any restriction on
the $u_i$'s. But, in fact, to remove the constraint that the letters must be
all distinct in each of the $u_i$'s, we simply have to decorate each of the
input letters with its position minus $1$ modulo a big
enough $d \in \N_{>0}$. This finally leads to the following proposition.

\begin{proposition}
\label{ptn:RegLangPJ-TCCDDO}
    Let $\Sigma$ be an alphabet, $l \in \N_{>0}$ and
    $u_1, \ldots, u_k \in \Sigma^+$ ($k \in \N_{>0}$).
    Then $\ccddo{u_1, \ldots, u_k}_l \in \Prog{\FMVJ}$.
\end{proposition}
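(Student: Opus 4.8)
The plan is to reduce the general case to Corollary~\ref{cor:RegLangPJ-All_distinct_TCCDDO} by decorating each input position with its index modulo $d := \max_{i \in [k]} \length{u_i}$, which forces every contiguous occurrence of a word $u_i$ to become a word with pairwise distinct letters. Set $\Gamma := \Sigma \times \intinterval{0}{d - 1}$, let $\pi\colon \Gamma^* \to \Sigma^*$ be the projection erasing the second component, and for $w \in \Sigma^n$ let $\hat w \in \Gamma^n$ be given by $\hat w_j = (w_j, (j - 1) \bmod d)$. The map $w \mapsto \hat w$ is computed by the $\Gamma$-program $\Psi_n = (1, g_0)(2, g_1) \cdots (n, g_{n - 1})$ on $\Sigma^n$, where $g_r\colon a \mapsto (a, r \bmod d)$, whose length is exactly $n$. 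So, by Proposition~\ref{ptn:P(V)_program-reduction_closure}, it is enough to produce a language $K \subseteq \Gamma^*$ with $\ccddo{u_1, \ldots, u_k}_l^{=n} = \Psi_n^{-1}(K^{=n})$ for all $n \in \N$ and $K \in \Prog{\FMVJ}$.

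For $i \in [k]$ and $r \in \intinterval{0}{d - 1}$, write $v_i^{(r)} = (u_{i, 1}, r)(u_{i, 2}, r + 1) \cdots (u_{i, \length{u_i}}, r + \length{u_i} - 1) \in \Gamma^+$, second components taken modulo $d$; as $\length{u_i} \leq d$, the letters of $v_i^{(r)}$ are pairwise distinct. Put $H_i = \bigcup_{r = 0}^{d - 1} \Gamma^* v_i^{(r)} \Gamma^* \cup \bigcup_{t \in \pi^{-1}(u_i^l) \cap \Gamma^{l \length{u_i}}} (t \shuffle \Gamma^*)$ and $K = H_1 \cdots H_k$. To check $\ccddo{u_1, \ldots, u_k}_l^{=n} = \Psi_n^{-1}(K^{=n})$: since $\pi(\Gamma^* v_i^{(r)} \Gamma^*) = \Sigma^* u_i \Sigma^*$ and $\pi(t \shuffle \Gamma^*) = u_i^l \shuffle \Sigma^*$, one has $\pi(H_i) \subseteq \ccddo{u_i}_l$, hence $\pi(K) \subseteq \prod_i \ccddo{u_i}_l = \ccddo{u_1, \ldots, u_k}_l$, so $\hat w \in K$ forces $w = \pi(\hat w) \in \ccddo{u_1, \ldots, u_k}_l$; conversely, given $w \in \prod_i \ccddo{u_i}_l$, factor $w = z_1 \cdots z_k$ with $z_i \in \ccddo{u_i}_l$ and split $\hat w$ into the corresponding blocks $\eta_1 \cdots \eta_k$ --- if $z_i$ has $u_i$ as a factor then $\eta_i$ has the matching $v_i^{(r)}$ as a factor, and if $z_i$ has $u_i^l$ as a subword then $\eta_i$ has some $t$ of length $l\length{u_i}$ with $\pi(t) = u_i^l$ as a subword, so in both cases $\eta_i \in H_i$ and $\hat w \in K$.

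It remains to show $K \in \Prog{\FMVJ}$. Since $(v_i^{(r)})^l \in \pi^{-1}(u_i^l) \cap \Gamma^{l \length{u_i}}$, we may rewrite $H_i = \bigl(\bigcup_{r = 0}^{d - 1} \ccddo{v_i^{(r)}}_l\bigr) \cup \bigcup_{t} (t \shuffle \Gamma^*)$, where $\ccddo{v}_l = \Gamma^* v \Gamma^* \cup v^l \shuffle \Gamma^*$. Distributing the $k$ concatenations in $K = H_1 \cdots H_k$ over all these unions turns $K$ into a finite union of concatenations $Z_1 \cdots Z_k$ in which every $Z_i$ is either some $\ccddo{v_i^{(r_i)}}_l$ or some $t_i \shuffle \Gamma^*$. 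Now $t_i \shuffle \Gamma^* = \ccddo{(t_i)_1, \ldots, (t_i)_{l\length{u_i}}}_l$, a concatenation of single-letter blocks (recall that $\ccddo{\cdot}_l$ of one-letter words is the corresponding subword language), and single-letter blocks trivially have distinct letters. Since $\ccddo{\cdot}_l$ is multiplicative over concatenation, each $Z_1 \cdots Z_k$ is of the form $\ccddo{z_1, \ldots, z_N}_l$ with all $z_j$ having pairwise distinct letters, hence lies in $\Prog{\FMVJ}$ by Corollary~\ref{cor:RegLangPJ-All_distinct_TCCDDO}; closure of $\Prog{\FMVJ}$ under union (Proposition~\ref{lemma-simple-closure-P}) then gives $K \in \Prog{\FMVJ}$, and Proposition~\ref{ptn:P(V)_program-reduction_closure} finishes the argument.

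I expect the genuine obstacle to be this last step, and precisely the decision to rewrite each $H_i$ using $\ccddo{v_i^{(r)}}_l$ rather than the bare factor-languages $\Gamma^* v_i^{(r)} \Gamma^*$ before distributing. A naive distribution of $K = H_1 \cdots H_k$ yields concatenations of languages $\Gamma^* v_i^{(r)} \Gamma^*$, i.e.\ essentially ordinary (non-threshold) dot-depth one building blocks with $\length{u_i} \geq 2$, which are not in $\Prog{\FMVJ}$ at all (they lie outside $\DLang{\StVQDA}$). Absorbing $\Gamma^* v_i^{(r)} \Gamma^*$ into $\ccddo{v_i^{(r)}}_l$ is licit exactly because the companion subword language $(v_i^{(r)})^l \shuffle \Gamma^*$ is already a subset of the subword part of $H_i$, and it is what keeps every term of the resulting finite union in the guarded ``threshold'' shape to which Corollary~\ref{cor:RegLangPJ-All_distinct_TCCDDO} applies.
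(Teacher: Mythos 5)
Your proof is correct and follows essentially the same route as the paper's: decorate each input position with its index modulo $d = \max_i \length{u_i}$ so that every decorated occurrence of a $u_i$ has pairwise distinct letters, then conclude via a length-$n$ program-reduction to a language handled by Corollary~\ref{cor:RegLangPJ-All_distinct_TCCDDO} and closure under union. The only difference is bookkeeping: where you define $K = H_1 \cdots H_k$ over the decorated alphabet and then absorb each factor language into a threshold block before distributing, the paper first passes to the normal form $\bigcup_{q} L^{(l)}_{(u_1, q_1)} \cdots L^{(l)}_{(u_k, q_k)}$ and replaces each $u_i$ with $q_i = l$ by the list of its $l \cdot \length{u_i}$ individual letters --- the same splitting of the subword part into single-letter blocks that you perform when rewriting $t_i \shuffle \Gamma^*$ as $\ccddo{(t_i)_1, \ldots, (t_i)_{l \length{u_i}}}_l$.
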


\begin{proof}
    Let $\Sigma$ be an alphabet, $l \in \N_{>0}$ and
    $u_1, \ldots, u_k \in \Sigma^+$ ($k \in \N_{>0}$).

    Let $d = \max_{i \in [k]} \length{u_i}$.
    If $d = 1$, then the result is straightforward because the language
    $\ccddo{u_1, \ldots, u_k}_l$ then belongs to $\DLang{\FMVJ}$, so now we
    assume $d \geq 2$.
    We let $\Sigma_d = \Sigma \times \Z \quotient d \Z$ and for all
    $w \in \Sigma^*$, for all $i \in \Z \quotient d \Z$, we define
    $\widetilde{w}^i = \prod_{j = 1}^{\length{w}} (w_j, (j + i - 1) \mod d)$.
    We also let $\widetilde{w} = \widetilde{w}^0$ for all $w \in \Sigma^*$.

    For all $v \in \Sigma^+, \length{v} \leq d$, we define $\mu(v, 1) = v$ and
    \[
	\mu(v, l) =
	\underbrace{v_1, \ldots, v_{\length{v}}, \ldots \ldots \ldots,
		    v_1, \ldots, v_{\length{v}}}_{\text{$l$ times}}
	\displaypunct{.}
    \]
    For all $v_1, \ldots, v_{k'} \in \Sigma^+$ ($k' \in \N_{>0}$) such that
    $\length{v_i} \leq d$ for each $i \in [k']$, we let
    \[
	\ccddo{v_1, \ldots, v_{k'}}_{l, d} =
	\bigcup_{i_1, \ldots, i_{k'} \in \Z \quotient d \Z}
	\ccddo{\widetilde{v_1}^{i_1}, \ldots,
	       \widetilde{v_{k'}}^{i_{k'}}}_l
	\displaypunct{,}
    \]
    a language over $\Sigma_d$, that does belong to $\Prog{\FMVJ}$ by
    Corollary~\ref{cor:RegLangPJ-All_distinct_TCCDDO} and closure of
    $\Prog{\FMVJ} \cap \Reg$ under finite union
    (Proposition~\ref{lemma-simple-closure-P}), because since
    $\length{v_i} \leq d$ for each $i \in [k']$, each $\widetilde{v_i}^j$ for
    $j \in \Z \quotient d \Z$ has all distinct letters.

    This implies that for all $q_1, \ldots, q_k \in \set{1, l}$, we have that
    $\ccddo{\mu(u_1, q_1), \ldots, \allowbreak \mu(u_k, q_k)}_{l, d}$ does
    belong to $\Prog{\FMVJ}$, so that
    \[
	\bigcup_{q_1, \ldots, q_k \in \set{1, l}}
	\ccddo{\mu(u_1, q_1), \ldots, \mu(u_k, q_k)}_{l, d}
    \]
    is a language over $\Sigma_d$ belonging to $\Prog{\FMVJ}$.

    Now, it is not so difficult to see that
    \begin{align*}
	\ccddo{u_1, \ldots, u_k}_l
	& = \bigcup_{q_1, \ldots, q_k \in \set{1, l}}
	    L^{(l)}_{(u_1, q_1)} \cdots L^{(l)}_{(u_k, q_k)}\\
	& = \Bigset{w \in \Sigma^* \bigmid
		    \widetilde{w} \in
		    \bigcup_{q_1, \ldots, q_k \in \set{1, l}}
		    \ccddo{\mu(u_1, q_1), \ldots,
			   \mu(u_k, q_k)}_{l, d}}
	\displaypunct{,}
    \end{align*}
    which allows us to conclude that the sequence $(\Psi_n)_{n \in \N}$ of
    $\Sigma_d$-programs such that $\Psi_n(w) = \widetilde{w}$ for all $n \in \N$
    and $w \in \Sigma^n$ is a program-reduction from
    $\ccddo{u_1, \ldots, u_k}_l$ to
    $\bigcup_{q_1, \ldots, q_k \in \set{1, l}}
     \ccddo{\mu(u_1, q_1), \ldots, \mu(u_k, q_k)}_{l, d}$
    of length $n$.
    Hence, $\ccddo{u_1, \ldots, u_k}_l$ does also belong to $\Prog{\FMVJ}$ by
    Proposition~\ref{ptn:P(V)_program-reduction_closure}.
\end{proof}

This finishes to prove Theorem~\ref{thm:TDDO_languages_in_P(J)} by closure of
$\Prog{\FMVJ}$ under Boolean combinations
(Proposition~\ref{lemma-simple-closure-P}) and by the discussion at the
beginning of Subsection~\ref{sse:Non-tameness_of_J}.

\section{Algebraic characterisation of threshold dot-depth one languages}
\label{sec:Algebraic_characterisation_TDDO_languages}
In his Ph.D.\ thesis~\cite{PhD_thesis/Grosshans}, the author conjectured that
the class of threshold dot-depth languages is exactly
$\DLang{\FMVDA} \cap \DLang{\FSVJsdD}$ and proved that all strongly unambiguous
monomials (the basic building blocks in $\DLang{\FMVDA}$) that are imposed to
belong to $\DLang{\FSVJsdD}$ at the same time are in fact threshold dot-depth
one languages. The problem with the proof of this partial result supporting that
conjecture is that it is very complex and technical, without leaving much hope
for an extension to all languages in $\DLang{\FMVDA} \cap \DLang{\FSVJsdD}$.
Here we show this conjecture to be actually true by using a result by
Costa~\cite{Costa-2000}.

We first prove the easy direction, a proof actually already to be found
in~\cite{PhD_thesis/Grosshans}. The result is quite straightforward but a bit
cumbersome to prove.

\begin{proposition}
\label{ptn:TDDO_in_DAintJsdD}
    Any threshold dot-depth one language belongs to
    $\DLang{\FMVDA} \cap \DLang{\FSVJsdD}$.
\end{proposition}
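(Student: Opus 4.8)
The plan is to exploit that both $\DLang{\FMVDA}$ and $\DLang{\FSVJsdD}$ are closed under Boolean operations, being the language classes associated with a variety of monoids, resp.\ a variety of semigroups. Hence it suffices to prove that each generator of the threshold dot-depth one languages lies in $\DLang{\FMVDA} \cap \DLang{\FSVJsdD}$, namely the languages $\Sigma^* u$ and $u \Sigma^*$ for $u \in \Sigma^+$, and $\ccddo{u_1, \ldots, u_k}_l$ for $k, l \in \N_{>0}$ and $u_1, \ldots, u_k \in \Sigma^+$.

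The languages $\Sigma^* u$ and $u\Sigma^*$ are the easy case: they lie in $\DLang{\FSVJsdD}$ since they are themselves generators of the dot-depth one languages, and they lie in $\DLang{\FMVDA}$ because their syntactic monoids --- which only retain a bounded-length suffix, resp.\ prefix, of the argument together with how it can be completed so as to have $u$ as a suffix, resp.\ prefix --- are readily checked to satisfy $(xy)^\omega = (xy)^\omega x (xy)^\omega$ (equivalently, one checks directly that $\Sigma^* u$ and $u\Sigma^*$ are finite disjoint unions of unambiguous monomials $A_0^* a_1 A_1^* \cdots a_m A_m^*$). Membership of $\ccddo{u_1, \ldots, u_k}_l$ in $\DLang{\FSVJsdD}$ is also elementary: by the equality $\ccddo{u_1, \ldots, u_k}_l = \bigcup_{q_1, \ldots, q_k \in \set{1, l}} L^{(l)}_{(u_1, q_1)} \cdots L^{(l)}_{(u_k, q_k)}$ recorded in the text, this is a finite union of concatenations each of whose factors has the form $\Sigma^* u_i \Sigma^*$ or $u_i^l \shuffle \Sigma^* = \Sigma^* a_1 \Sigma^* \cdots \Sigma^* a_m \Sigma^*$ with $a_1 \cdots a_m = u_i^l$; collapsing consecutive $\Sigma^*$'s, each such concatenation equals a single language $\ccddo{z_1, \ldots, z_p}$ for suitable $z_1, \ldots, z_p \in \Sigma^+$, hence is a dot-depth one generator, and the finite union is dot-depth one.

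The real content --- and the step I expect to be the main obstacle --- is $\ccddo{u_1, \ldots, u_k}_l \in \DLang{\FMVDA}$. The subtlety is that the summands $L^{(l)}_{(u_1, q_1)} \cdots L^{(l)}_{(u_k, q_k)}$ are, individually, typically \emph{not} in $\DLang{\FMVDA}$ (already $\Sigma^* v \Sigma^*$ with $\length{v} \geq 2$ is not), so a genuinely different decomposition is needed, and the threshold condition --- factor detection being active only below $l$ subword occurrences --- is precisely what forces membership in $\FMVDA$. My plan is to start from the normal form of Lemma~\ref{lem:TDDO-Equality}: writing $\ccddo{u_1, \ldots, u_k}_l$ (after the position-modulo-$d$ decoration reduction of Proposition~\ref{ptn:RegLangPJ-TCCDDO}, or else carrying directly the bookkeeping caused by repeated letters) as a finite union of languages $R_l^\alpha(u_1, \ldots, u_k) \cap S_l^\alpha(u_1, \ldots, u_k)$, one then shows each of these lies in $\DLang{\FMVDA}$. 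For this, the sub-threshold constraint (captured by the $R_l^\alpha$ part) pins down the relevant factor occurrences of the $u_i$ to \emph{unique} positions, so that $R_l^\alpha \cap S_l^\alpha$ can be exhibited as a finite disjoint union of unambiguous monomials, whence membership in $\DLang{\FMVDA}$ by Schützenberger's characterisation; alternatively, one checks that a natural monoid recognising $\ccddo{u_1, \ldots, u_k}_l$ --- tracking partial progress into the pattern $u_1^{?} \cdots u_k^{?}$, threshold-$l$ subword counts, and a ``factor seen'' flag whose transitions get overridden once the corresponding threshold saturates --- satisfies $(xy)^\omega = (xy)^\omega x (xy)^\omega$. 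In either route, the heart of the matter is a case analysis separating, for the relevant product $xy$, whether or not the thresholds get saturated, and this is where the argument becomes technical --- and where, if one does not first reduce to pairwise-distinct letters in each $u_i$, the extra ambiguity introduced by repeated letters has to be resolved by hand.
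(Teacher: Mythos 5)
Your reduction to the generators via Boolean closure, your treatment of $\Sigma^* u$ and $u \Sigma^*$, and your argument for $\ccddo{u_1, \ldots, u_k}_l \in \DLang{\FSVJsdD}$ all match the paper's proof and are fine. You also correctly locate the real difficulty in showing $\ccddo{u_1, \ldots, u_k}_l \in \DLang{\FMVDA}$. But your proposed route for that step has a genuine gap, in fact two. First, the ``position-modulo-$d$ decoration'' of Proposition~\ref{ptn:RegLangPJ-TCCDDO} is a program-reduction: it is available to programs, which can query positions, but not to a monoid morphism. The preimage of a $\DLang{\FMVDA}$ language under $w \mapsto \widetilde{w}$ is in general only recognised by a quasi-$\FMVDA$ morphism, not by a monoid in $\FMVDA$ (this is exactly the gap between $\DLang{\FMVDA}$ and $\DLang{\StVQDA}$ that the paper's conjecture is about), so this branch cannot establish membership in $\DLang{\FMVDA}$. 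Second, the fallback branch relies on Lemma~\ref{lem:TDDO-Equality} without the distinct-letters hypothesis --- which the paper explicitly states it is unable to prove --- and then asserts, without argument, that each $R_l^\alpha \cap S_l^\alpha$ is a finite disjoint union of unambiguous monomials; that assertion is the entire content of the claim and is precisely the ``very complex and technical'' route the author abandoned.

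What the paper does instead is verify the $\FMVDA$ identity directly on the syntactic congruence $\sim$ of $L = \ccddo{u_1, \ldots, u_k}_l$: it shows $(uv)^{\omega'} \sim (uv)^{\omega'} u (uv)^{\omega'}$ for all $u, v \in \Sigma^*$, where $\omega'$ is a multiple of the idempotent power chosen at least $\sum_{i} l \cdot \length{u_i}$. Given a context $(x, y)$ with $x (uv)^{\omega'} y \in L$, one locates the first block index $i_1$ not already covered by $x$ and the last block index $i_2$ not covered by $y$; the key observation is that any $u_i$ with $i_1 < i < i_2$ that must be ``found'' inside the long $(uv)$-power has $\alphabet(u_i) \subseteq \alphabet(uv)$, so that ${u_i}^l$ occurs as a subword of a long enough power of $uv$ and the threshold clause of $\ccddo{u_i}_l$ (rather than factor detection) applies; the boundary blocks $i_1, i_2$ are handled by a separate case analysis. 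No decomposition into unambiguous monomials and no distinct-letters reduction is needed. If you want to salvage your approach, you would have to carry out one of these verifications in full; as written, the $\DLang{\FMVDA}$ half of the proposition is not proved.
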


\begin{proof}
    To prove the proposition, by closure under Boolean operations of both
    $\DLang{\FMVDA}$ and $\DLang{\FSVJsdD}$, it suffices to prove that for any
    $\Sigma$ an alphabet, $l \in \N_{>0}$ and $u, u_1, \ldots, u_k \in \Sigma^+$
    ($k \in \N_{>0}$), the languages $\Sigma^* u$, $u \Sigma^*$ and
    $\ccddo{u_1, \ldots, u_k}_l$ do all belong to
    $\DLang{\FMVDA} \cap \DLang{\FSVJsdD}$. This is what we show in the
    following.

    Let $\Sigma$ be an alphabet, $l \in \N_{>0}$ and
    $u, u_1, \ldots, u_k \in \Sigma^+$ ($k \in \N_{>0}$).
    First, it is obvious that $\Sigma^* u$ and $u \Sigma^*$ do belong to
    $\DLang{\FMVDA} \cap \DLang{\FSVJsdD}$.
    We now show that
    $\ccddo{u_1, \ldots, u_k}_l \in \DLang{\FMVDA} \cap \DLang{\FSVJsdD}$.

    \paragraph{Membership in $\DLang{\FSVJsdD}$.}
    As given by Definition~\ref{def:TDDO_alternative}, we have that
    \[
	\ccddo{u_1, \ldots, u_k}_l =
	\bigcup_{q_1, \ldots, q_k \in \set{1, l}}
	L^{(l)}_{(u_1, q_1)} \cdots L^{(l)}_{(u_k, q_k)}
	\displaypunct{,}
    \]
    where for all $q_1, \ldots, q_k \in \set{1, l}$, the language
    $L^{(l)}_{(u_1, q_1)} \cdots L^{(l)}_{(u_k, q_k)}$ is easily seen to be
    dot-depth one. Hence, by closure of $\DLang{\FSVJsdD}$ under finite union,
    we have that $\ccddo{u_1, \ldots, u_k}_l \in \DLang{\FSVJsdD}$.

    \paragraph{Membership in $\DLang{\FMVDA}$.}
    Let now $L = \ccddo{u_1, \ldots, u_k}_l$, let $\sim$ be its syntactic
    congruence and let $\omega$ be the idempotent power of its syntactic monoid
    $M$. Using the equational characterisation of $\FMVDA$, we are now going to
    prove that $M \in \FMVDA$: that is, we are going to prove that
    $(m n)^\omega = (m n)^\omega m (m n)^\omega$ for all $m, n \in M$, so that
    $M$ does belong to $\FMVDA$ and thus $\ccddo{u_1, \ldots, u_k}_l$ to
    $\DLang{\FMVDA}$. To show that each pair of elements of $M$ verifies the
    previous equation, by definition of the syntactic monoid of $L$, it suffices
    to show that $(uv)^\omega \sim (uv)^\omega u (uv)^\omega$ for all
    $u, v \in \Sigma^*$.

    \medskip

    Let $u, v \in \Sigma^*$. Our aim is to show that
    $(uv)^\omega \sim (uv)^\omega u (uv)^\omega$. By definition of the syntactic
    monoid of $L$ and of $\omega$, it is not too difficult to see that this is
    equivalent to showing that
    $(uv)^{\omega'} \sim (uv)^{\omega'} u (uv)^{\omega'}$ where
    $\omega' \in \N_{>0}$ is the smallest multiple of $\omega$ not smaller than
    $\sum_{i = 1}^k l \cdot \length{u_i}$ (why we need $\omega'$ to be as big
    will become clear later on).

    When both $u$ and $v$ are equal to the empty word, we trivially have that
    $(uv)^{\omega'} \sim (uv)^{\omega'} u (uv)^{\omega'}$. So we now assume that
    at least one of $u$ and $v$ is not equal to the empty word.

    \smallskip

    Let $x, y \in \Sigma^*$ be such that $w = x (uv)^{\omega'} y \in L$ and
    consider the word $w' = x (uv)^{\omega'} u (uv)^{\omega'} y$. Let's now
    prove that $w'$ does also belong to $L$.
    When $x$ or $y$ belongs to $L$, then it is obvious that $w'$ does also
    belong to it. We now assume that it is not the case.
    Let $i_1 \in [k]$ be the smallest integer in $[k]$ such that $x$ does not
    belong to $\ccddo{u_1, \ldots, u_{i_1}}_l$ and $i_2 \in [k]$ the biggest
    integer in $[k]$ such that $y$ does not belong to
    $\ccddo{u_{i_2}, \ldots, u_k}_l$, that do exist by the hypothesis we just
    made. Let $\kappa_1 \in \intinterval{0}{\length{x}}$ be the smallest integer
    in $[\length{x}]$ such that
    $x_1 \cdots x_{\kappa_1} \in \ccddo{u_1, \ldots, u_{i_1 - 1}}_l$ when
    $i_1 > 1$ and $0$ otherwise; let symmetrically
    $\kappa_2 \in \intinterval{1}{\length{y} + 1}$ be the biggest integer in
    $[\length{y}]$ such that
    $y_{\kappa_2} \cdots y_{\length{y}} \in \ccddo{u_{i_2 + 1}, \ldots, u_k}_l$
    when $i_2 < k$ and $\length{y} + 1$ otherwise.
    The idea to prove $w' \in L$ is to distinguish between three cases when
    $i_1 \leq i_2$, otherwise it is direct. When both the prefix
    $x (uv)^{l \cdot \length{u_{i_1}}}$ of $w$ and $w'$ belongs to
    $\ccddo{u_1, \ldots, u_{i_1}}_l$ and the suffix
    $(uv)^{l \cdot \length{u_{i_2}}} y$ of $w$ and $w'$ belongs to
    $\ccddo{u_{i_2}, \ldots, u_k}_l$, then we can conclude by using the fact
    that all the letters of the words $u_{i_1 + 1}$ to $u_{i_2 - 1}$ are to be
    found in the remaining factor in the middle of $w$, made solely of powers of
    $uv$. Otherwise, the prefix $x (uv)^{l \cdot \length{u_{i_1}}}$ of $w$ and
    $w'$ does not belong to $\ccddo{u_1, \ldots, u_{i_1}}_l$ or the suffix
    $(uv)^{l \cdot \length{u_{i_2}}} y$ of $w$ and $w'$ does not belong to
    $\ccddo{u_{i_2}, \ldots, u_k}_l$. When the first possibility is true, we can
    show that we necessarily have that the prefix $x (uv)^{\omega'}$ of $w$ and
    $w'$ as a whole does not belong to $\ccddo{u_1, \ldots, u_{i_1}}_l$ and then
    conclude after analysing how $w$ does consequently decompose into one prefix
    in $\ccddo{u_1, \ldots, u_{i_1 - 1}}_l$, one middle factor in
    $\ccddo{u_{i_1}}_l$ and one suffix in $\ccddo{u_{i_1 + 1}, \ldots, u_k}_l$,
    using $\kappa_1$ and $\kappa_2$. We proceed by symmetry when the second
    possibility is true.
    We now move on to the details.

    If $i_1 > i_2$, then we have that $x$ belongs to
    $\ccddo{u_1, \ldots, u_{i_1 - 1}}_l$ (which is well defined as
    $i_1 > i_2 \geq 1$) and that $y$ belongs to $\ccddo{u_{i_1}, \ldots, u_k}_l$
    (which is also well defined as $k \geq i_1$), so that $w'$ obviously belongs
    to $L$.
    Otherwise, $i_1 \leq i_2$.
    We first observe that if $x (uv)^{\omega'}$ belongs to
    $\ccddo{u_1, \ldots, u_{i_1}}_l$, then $x (uv)^{l \cdot \length{u_{i_1}}}$
    does also belong to it. Indeed, assume the hypothesis of the implication is
    true; there are two possible cases.
    Either all letters of $u_{i_1}$ appear in $uv$: in that case we have
    $(uv)^{l \cdot \length{u_{i_1}}} \in {u_{i_1}}^l \shuffle \Sigma^* \subseteq
     \ccddo{u_{i_1}}_l$
    and hence
    $x (uv)^{l \cdot \length{u_{i_1}}} \in \ccddo{u_1, \ldots, u_{i_1}}_l$.
    Or there is at least one letter in $u_{i_1}$ not appearing in $uv$: since
    $x \notin \ccddo{u_1, \ldots, u_{i_1}}_l$, either
    \begin{itemize}
	\item
	    $u_{i_1}$ is a factor of $x (uv)^{\omega'}$ whose first letter is in
	    $x_{\kappa_1 + 1} \cdots x_{\length{x}}$ and whose last letter is in
	    $(uv)^{\omega'}$, so that because $\length{uv} \geq 1$, we have
	    \[
		x_{\kappa_1 + 1} \cdots x_{\length{x}}
		(uv)^{l \cdot \length{u_{i_1}}} \in
		\Sigma^* u_{i_1} \Sigma^* \subseteq \ccddo{u_{i_1}}_l
	    \]
	    and hence
	    $x (uv)^{l \cdot \length{u_{i_1}}} \in
	     \ccddo{u_1, \ldots, u_{i_1}}_l$;
	\item
	    or ${u_{i_1}}^l$ is a subword of
	    $x_{\kappa_1 + 1} \cdots x_{\length{x}} (uv)^{\omega'}$ such that
	    only its at most $\length{u_{i_1}} - 1$ last letters appear in the
	    factor $(uv)^{\omega'}$, so that we have
	    \[
		x_{\kappa_1 + 1} \cdots x_{\length{x}}
		(uv)^{l \cdot \length{u_{i_1}}} \in
		{u_{i_1}}^l \shuffle \Sigma^* \subseteq \ccddo{u_{i_1}}_l
	    \]
	    and hence
	    $x (uv)^{l \cdot \length{u_{i_1}}} \in
	     \ccddo{u_1, \ldots, u_{i_1}}_l$.
    \end{itemize}
    Symmetrically, we can prove that if $(uv)^{\omega'} y$ belongs to
    $\ccddo{u_{i_2}, \ldots, u_k}_l$, then the word
    $(uv)^{l \cdot \length{u_{i_2}}} y$ does also belong to it.
    We now distinguish between three different cases.
    \begin{itemize}
	\item
	    $x (uv)^{l \cdot \length{u_{i_1}}}$ does not belong to
	    $\ccddo{u_1, \ldots, u_{i_1}}_l$.
	    By what we have shown just above, this means that $x (uv)^{\omega'}$
	    does not belong to $\ccddo{u_1, \ldots, u_{i_1}}_l$ either.
	    Since $w \in L$, this necessarily means that $\kappa_2 > 1$ and that
	    there exists $\kappa_2' \in \intinterval{2}{\kappa_2}$ verifying
	    $x_1 \cdots x_{\kappa_1} \in \ccddo{u_1, \ldots, u_{i_1 - 1}}_l$,
	    \[
		x_{\kappa_1 + 1} \cdots x_{\length{x}} (uv)^{\omega'}
		y_1 \cdots y_{\kappa_2' - 1} \in \ccddo{u_{i_1}}_l
	    \]
	    and
	    $y_{\kappa_2'} \cdots y_{\length{y}} \in
	     \ccddo{u_{i_1 + 1}, \ldots, u_k}_l$,
	    implying $i_1 = i_2$ and that $\kappa_2'$ can be taken equal to
	    $\kappa_2$ by the fact that
	    $y \notin \ccddo{u_{i_2}, \ldots, u_k}_l$ and
	    $y_{\kappa_2} \cdots y_{\length{y}} \in
	     \ccddo{u_{i_2 + 1}, \ldots, u_k}_l$.
	    If $(uv)^{\omega'} y$ belongs to $\ccddo{u_{i_1}, \ldots, u_k}_l$,
	    then as it holds that $x \in \ccddo{u_1, \ldots, u_{i_1 - 1}}_l$, we
	    have that $w' = x (uv)^{\omega'} u (uv)^{\omega'} y \in L$.
	    Otherwise, since $u_{i_1}$ contains at least one letter not
	    appearing in $uv$ and $\length{uv} \geq 1$, ${u_{i_1}}^l$ must be a
	    subword of
	    $x_{\kappa_1 + 1} \cdots x_{\length{x}} (uv)^{\omega'}
	     y_1 \cdots y_{\kappa_2' - 1} \in \ccddo{u_{i_1}}_l$
	    with at most $\length{u_{i_1}} - 1$ of its letters appearing in the
	    factor $(uv)^{\omega'}$, so that 
	    \[
		x_{\kappa_1 + 1} \cdots x_{\length{x}}
		(uv)^{\omega'} u (uv)^{\omega'} y_1 \cdots y_{\kappa_2' - 1} \in
		{u_{i_1}}^l \shuffle \Sigma^* \subseteq \ccddo{u_{i_1}}_l
		\displaypunct{,}
	    \]
	    also showing $w' = x (uv)^{\omega'} u (uv)^{\omega'} y \in L$.
	\item
	    $(uv)^{l \cdot \length{u_{i_2}}} y$ does not belong to
	    $\ccddo{u_{i_2}, \ldots, u_k}_l$. Symmetrically to the previous
	    case, we can show that $w' \in L$.
	\item
	    $x (uv)^{l \cdot \length{u_{i_1}}}$ belongs to
	    $\ccddo{u_1, \ldots, u_{i_1}}_l$ on one side and
	    $(uv)^{l \cdot \length{u_{i_2}}} y$ belongs to
	    $\ccddo{u_{i_2}, \ldots, u_k}_l$ on the other side.
	    In this case, for all $i \in \intinterval{i_1 + 1}{i_2 - 1}$, we
	    have that $\alphabet(u_i) \subseteq \alphabet(uv)$, because as
	    $x \notin \ccddo{u_1, \ldots, u_{i_1}}_l$ and
	    $y \notin \ccddo{u_{i_2}, \ldots, u_k}_l$, we must have
	    $(uv)^{\omega'} \in \ccddo{u_{i_1 + 1}, \ldots, u_{i_2 - 1}}_l$.
	    Hence, we have that
	    $(uv)^{\omega' - l \cdot \length{u_{i_1}}} u
	     (uv)^{\omega' - l \cdot \length{u_{i_2}}}$,
	    containing
	    $(uv)^{\sum_{i = i_1 + 1}^{i_2 - 1} l \cdot \length{u_i}}$ as a
	    subword, belongs to
	    ${u_{i_1 + 1}}^l \cdots {u_{i_2 - 1}}^l \shuffle \Sigma^* \subseteq
	     \ccddo{u_{i_1 + 1}, \ldots, u_{i_2 - 1}}_l$.
	    Thus, putting all together, we get that
	    $w' = x (uv)^{\omega'} u (uv)^{\omega'} y \in L$.
    \end{itemize}
    Therefore, in any case we have $x (uv)^{\omega'} u (uv)^{\omega'} y \in L$.

    \smallskip

    Let $x, y \in \Sigma^*$ such that
    $x (uv)^{\omega'} u (uv)^{\omega'} y \in L$. In a way similar to above, we
    can show that then, $x (uv)^{\omega'} y \in L$.

    \medskip

    This shows that $(uv)^\omega \sim (uv)^\omega u (uv)^\omega$ and as it is
    true for all $u, v \in \Sigma^*$, we eventually get that
    $\ccddo{u_1, \ldots, u_k}_l \in \DLang{\FMVDA}$.

    \bigskip

    This concludes the proof of the proposition.
\end{proof}

Let us denote by $\FSVGen{\FMVDA}$ the variety of semigroups generated by the
variety of monoids $\FMVDA$, that is, the smallest variety of semigroups
containing $\FMVDA$.
By~\cite[Chapter~V, Exercise~1.3 and Proposition~1.1]{Books/Eilenberg-1976}, we
have that for any language over some alphabet, its syntactic semigroup belongs
to $\FSVGen{\FMVDA}$ if and only its syntactic monoid belongs to $\FMVDA$. Thus,
the languages in $\DLang{\FMVDA} \cap \DLang{\FSVJsdD}$ are exactly those whose
syntactic semigroup belongs to $\FSVGen{\FMVDA}$ and $\DLang{\FSVJsdD}$; in
other words, $\DLang{\FMVDA} \cap \DLang{\FSVJsdD} = \DLang{\FSVDAintJsdD}$.

In his Ph.D.\ thesis~\cite{PhD_thesis/Costa} and later in~\cite{Costa-2000},
Costa gave a language theoretic characterisation of $\DLang{\FSVDAintLJ}$, where
$\FSVLJ$ is the variety of locally-$\FMVJ$ semigroups, the class of all finite
semigroups $S$ such that for any idempotent $e$ in $S$, the monoid $e S e$
belongs to $\FMVJ$. It is well known that $\FSVJsdD$ is a strict subclass of
$\FSVLJ$~\cite[Theorem~17.3, Example~15.8]{Tilson-1987}, but if we manage to
prove that any language in $\DLang{\FSVDAintLJ}$ is threshold dot-depth one, we
would in particular have proven that any language in
$\DLang{\FMVDA} \cap \DLang{\FSVJsdD}$ is threshold dot-depth one.
Let us present the characterisation of Costa.

\begin{definition}[See~{\cite[p.35]{Costa-2000}}]
    Let $\Sigma$ be an alphabet.
    We let $\mathcal{K}(\FSVDAintLJ)(\Sigma^*)$ be the set of
    languages over $\Sigma$ of the form
    \[
	u_0 A_1^* X_1 A_2^* \cdots X_{n - 1} A_n^* u_n
    \]
    where $n, r \in \N$, $u_0, u_1, \ldots, u_n \in \Sigma^*$,
    $\emptyset \neq A_1, A_2, \ldots, A_n \subseteq \Sigma$ and, for all
    $i \in \intinterval{1}{n - 1}$, we have
    \[
	X_i =
	\begin{cases}
	    \set{u_i} &
		\begin{array}{@{}l@{}}
		    \text{if $u_i \neq \emptyword$, with}\\
		    \alphabet(u_i) \nsubseteq A_i, A_{i + 1}
		\end{array}\\
	    (A_i \setminus A_{i + 1}) (A_i \cap A_{i + 1})^{\geq r}
	    (A_{i + 1} \setminus A_i) &
		\begin{array}{@{}l@{}}
		    \text{otherwise, with}\\
		    \text{$A_i \nsubseteq A_{i + 1}$ and $A_{i + 1} \nsubseteq
		A_i$}
		\end{array}
	\end{cases}
	\displaypunct{.}
    \]
\end{definition}

\begin{remark}
    There's actually a slight mistake in the definition given
    in~\cite[p.35]{Costa-2000}, as the languages defined should be exactly those
    recognised by the automata
    $\mathcal{A}(r; u_0, A_1, u_1, \ldots, \allowbreak A_n, u_n)$ of page 10:
    the $(A_{i + 1} \setminus A_i)$ factor is missing in the definition of $X_i$
    for all $u_i$'s equal to the empty word.
\end{remark}

\begin{theorem}[{\cite[Theorem 9.1]{Costa-2000}}]
\label{thm:Costa}
    For $\Sigma$ an alphabet, $\DLang{\FSVDAintLJ}(\Sigma^*)$ is the set of all
    languages over $\Sigma$ that are Boolean combinations of languages of
    $\mathcal{K}(\FSVDAintLJ)(\Sigma^*)$.
\end{theorem}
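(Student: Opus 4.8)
The plan is to establish the two inclusions of the claimed equality separately, the easy one showing that every Boolean combination of languages in $\mathcal{K}(\FSVDAintLJ)(\Sigma^*)$ lies in $\DLang{\FSVDAintLJ}(\Sigma^*)$, and the hard one showing the converse. For the easy inclusion, since $\DLang{\FSVDAintLJ}$ is closed under Boolean operations (corresponding to a variety of semigroups), it suffices to check that each generating language $L = u_0 A_1^* X_1 A_2^* \cdots X_{n-1} A_n^* u_n$ has its syntactic semigroup $S$ in $\FSVDAintLJ$. I would do this by building a deterministic automaton for $L$ directly --- essentially Costa's $\mathcal{A}(r; u_0, A_1, u_1, \ldots, A_n, u_n)$ --- reading off $S$ as a quotient of its transition semigroup, and then verifying two things. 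First, $S^1 \in \FMVDA$, i.e. $(xy)^\omega = (xy)^\omega x (xy)^\omega$ for all $x, y$: informally, an accepting run scans the blocks $A_1^*, X_1, A_2^*, \ldots$ in a fixed, non-backtracking order, so the automaton is ``unambiguous'', which is exactly the combinatorial content of $\FMVDA$, and by the Eilenberg fact recalled just before this gives $S \in \FSVGen{\FMVDA}$. Second, $S \in \FSVLJ$: for an idempotent $e \in S$, the monoid $eSe$ describes the behaviour of looping inside a single block $A_i^*$ (or through an $(A_i \cap A_{i+1})^{\geq r}$ part), which records only --- up to the threshold $r$ --- which letters of that block have been seen, a piecewise-testable behaviour, hence $eSe \in \FMVJ$.

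For the hard inclusion, let $L \subseteq \Sigma^*$ have syntactic semigroup $S \in \FSVDAintLJ$. Since $S \in \FSVGen{\FMVDA}$, the syntactic monoid of $L$ lies in $\FMVDA$, so by the known language-theoretic description of $\DLang{\FMVDA}$ the language $L$ is a finite Boolean combination of unambiguous monomials $v_0 B_1^* v_1 B_2^* \cdots B_m^* v_m$ with $v_j \in \Sigma^*$ and $\emptyset \neq B_j \subseteq \Sigma$. I would then refine each such monomial using that $S$ is additionally locally $\FMVJ$: in the minimal DFA of $L$, a block $B^*$ acts through a local monoid $eSe \in \FMVJ$, which is $\mathcal{J}$-trivial and hence piecewise testable, so its effect depends only on the set of letters of $B$ seen, counted up to a uniform threshold $r$ depending only on $|S|$. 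Combining this with the unambiguity --- which pins down, for an accepting run, the unique way it splits across the monomial --- each sub-block $B^*$ together with the letters immediately surrounding it rewrites either as a literal factor $u_i$ (precisely when the letters of $u_i$ escape the adjacent alphabets, forcing a deterministic passage) or as a thresholded form $(A_i \setminus A_{i+1})(A_i \cap A_{i+1})^{\geq r}(A_{i+1} \setminus A_i)$, the shape being dictated by the fact that the block is entered on a letter of $A_i \setminus A_{i+1}$, looped on letters of $A_i \cap A_{i+1}$ (thresholded, by $\mathcal{J}$-triviality), and left on a letter of $A_{i+1} \setminus A_i$. Collecting the finitely many cases yields $L$ as a Boolean combination of $\mathcal{K}(\FSVDAintLJ)$-languages.

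The main obstacle is this hard inclusion, and within it the simultaneous bookkeeping of the $\FMVDA$-unambiguity and the local $\mathcal{J}$-triviality: one must turn ``$eSe \in \FMVJ$'' into the exact normal form for $X_i$ while correctly gluing consecutive blocks $A_i^*, A_{i+1}^*$ whose alphabets overlap, without double-counting the common letters --- this is precisely what forces the factor $(A_i \cap A_{i+1})^{\geq r}$ rather than a plain $(A_i \cap A_{i+1})^*$. Concretely, this is where one would invoke the structure theory --- an analysis of the $\mathcal{R}$- and $\mathcal{L}$-classes of $S$ together with their stabilisers, or equivalently a Sch\"utzenberger-product decomposition --- to make the induction on the depth of the monomial decomposition go through uniformly in the threshold $r$.
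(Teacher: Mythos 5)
This theorem is not proved in the paper at all: it is imported verbatim as Theorem~9.1 of Costa's article (itself building on his Ph.D.\ thesis), and the paper only uses it as a black box. So the relevant question is whether your sketch would actually constitute a proof of Costa's theorem, and it would not. What you have written is a plan, and the entire difficulty of the hard inclusion is deferred to the closing sentence (``this is where one would invoke the structure theory \ldots to make the induction go through''). Costa's actual argument is a full research paper; reproving it is not something that can be done by gesturing at $\mathcal{R}$- and $\mathcal{L}$-class analysis.

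Beyond the incompleteness, there is a concrete logical gap in the hard direction. From $S \in \FSVGen{\FMVDA}$ you correctly get that $L$ is a Boolean combination of unambiguous monomials $v_0 B_1^* v_1 \cdots B_m^* v_m$, but you then propose to ``refine each such monomial using that $S$ is additionally locally $\FMVJ$''. The local-$\FMVJ$ hypothesis is a property of the syntactic semigroup of $L$, not of the individual monomials appearing in some chosen decomposition; a single monomial in that Boolean combination need not itself have a syntactic semigroup in $\FSVLJ$, and the local monoid $eSe$ does not in general coincide with ``the action of a block $B^*$'' in one particular monomial. Making this transfer precise --- choosing a decomposition canonical enough that the condition $eSe \in \FMVJ$ really does force each $X_i$ into the shape $\set{u_i}$ or $(A_i \setminus A_{i+1})(A_i \cap A_{i+1})^{\geq r}(A_{i+1} \setminus A_i)$ with a \emph{uniform} threshold $r$ --- is exactly the content of Costa's theorem, and it is the step your proposal leaves unproved. (The easy direction is believable and is analogous in spirit to the paper's Proposition~\ref{ptn:TDDO_in_DAintJsdD}, but even there the verification of the identities for the syntactic semigroup of each generator takes several pages of case analysis, not the one-sentence appeal to ``unambiguity'' you give.)
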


The main contribution of this section is that we can indeed prove that all
languages in $\mathcal{K}(\FSVDAintLJ)(\Sigma^*)$ for any alphabet $\Sigma$ are
threshold dot-depth one. We leave the proof, that is technical, at the end of
the section.

\begin{proposition}
\label{ptn:DAintLJ_in_TDDO}
    For any alphabet $\Sigma$, any language in
    $\mathcal{K}(\FSVDAintLJ)(\Sigma^*)$ is threshold dot-depth one.
\end{proposition}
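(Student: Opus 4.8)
The plan is to fix an alphabet $\Sigma$ and a language $L = u_0 A_1^* X_1 A_2^* \cdots X_{n-1} A_n^* u_n$ in $\mathcal{K}(\FSVDAintLJ)(\Sigma^*)$, with the $X_i$'s of the two prescribed shapes, and to exhibit $L$ explicitly as a Boolean combination of languages of the form $v \Sigma^*$, $\Sigma^* v$ and $\ccddo{w_1, \ldots, w_m}_l$. Once this is done for every single such language, we are finished, since (recalling $\FSVJsdD \subseteq \FSVLJ$ and the monoid/semigroup correspondence quoted just before the statement) Costa's Theorem~\ref{thm:Costa} then gives that every language of $\DLang{\FMVDA} \cap \DLang{\FSVJsdD} \subseteq \DLang{\FSVDAintLJ}$ is a Boolean combination of threshold dot-depth one languages, hence itself threshold dot-depth one. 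I would start by choosing, once and for all, a threshold $l$ that is large compared to $r$, to $\max_i \length{u_i}$ and to $n$; the purpose of taking $l$ large will be that "at least $l$ scattered occurrences of a short word'' becomes a faithful surrogate for the genuinely positional data ("this occurrence is a factor sitting inside this block'', "those $\geq r$ letters sit inside that block'') that a piecewise testable test by itself cannot see.

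The structural point driving the construction is that no \emph{pure} factor test is available: for $\length{v} \geq 2$ the language $\Sigma^* v \Sigma^*$ is not threshold dot-depth one (it is not even in $\DLang{\StVQDA}$, whereas every threshold dot-depth one language is, by Theorem~\ref{thm:TDDO_languages_in_P(J)} together with $\Prog{\FMVJ} \cap \Reg \subseteq \DLang{\StVQDA}$), so the only factor-detecting primitive we may use is the hybrid $\ccddo{v}_l = \Sigma^* v \Sigma^* \cup v^l \shuffle \Sigma^*$, in which the factor clause is always bundled with the scattered-subword alternative $v^l \shuffle \Sigma^*$. The plan is therefore to arrange every factor test so that it is placed where the \emph{alphabet constraints} inherent to $\mathcal{K}$-languages make the subword alternative vacuous (i.e.\ on the relevant inputs it can hold only when the factor already occurs), so that the bundled test is, in effect, the factor test. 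Concretely, I would describe membership $w \in L$ by a finite conjunction/disjunction of atoms of three kinds: (a) prefix and suffix tests ($u_0 \Sigma^*$, $\Sigma^* u_n$, plus, if needed, a length lower bound $\Sigma^{\geq m} = \bigcup_{\length{v} = m} v \Sigma^*$, to rule out spuriously short words); (b) a piecewise testable "skeleton'' test forcing a scattered witness pattern running through the $n$ blocks — for a singleton $X_i = \set{u_i}$ the witness is $u_i$ itself, and for the other shape it is $a_i c_{i,1} \cdots c_{i,r} b_i$ with $a_i \in A_i \setminus A_{i+1}$, $c_{i,j} \in A_i \cap A_{i+1}$, $b_i \in A_{i+1} \setminus A_i$ (one takes a finite union over the choices of these letters); and (c) localisation atoms $\ccddo{\cdot}_l$ certifying that each witness actually occurs as a \emph{factor} in its intended block, together with complements of piecewise testable atoms asserting that no letter outside $A_i$ strays out of its intended block (each such "confinement'' atom being, as in the $n=1$ case, the complement of a finite union of $\shuffle$-languages $\Sigma^{\length{u_0}} a \Sigma^{\length{u_n}} \shuffle \Sigma^*$-style patterns that pin a forbidden letter between chosen reference points).

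I would prove the resulting equivalence by induction on $n$, the base case $n \leq 1$ being the direct computation just sketched (there $L = u_0 A_1^* u_1$ or $L = u_0 A_1^* X_1 A_2^* u_1$ and one checks by hand that prefix/suffix $+$ a length bound $+$ skeleton $+$ confinement already characterise $L$), and the inductive step isolating the leading block $u_0 A_1^* X_1$ and invoking the statement on the residual $\mathcal{K}$-language formed by $X_1 A_2^* \cdots A_n^* u_n$, re-indexing the positional data by the bounded shift caused by stripping that leading block. The "$\Rightarrow$'' direction of the equivalence is routine: a genuine factorisation of $w$ through $L$ makes all the atoms true. The hard direction, and the main obstacle, is "$\Leftarrow$'': reconstructing a legitimate factorisation from the atoms, i.e.\ showing no atom can be satisfied "for the wrong reason''. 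The delicate sub-points are (1) that the scattered alternative $v^l \shuffle \Sigma^*$ of a localisation atom cannot be met inside a correctly-confined block — this is exactly where $l$ must dominate $r$ and the $\length{u_i}$'s, used in tandem with the hypotheses $\alphabet(u_i) \nsubseteq A_i, A_{i+1}$ (singleton case) and $A_i \nsubseteq A_{i+1}$, $A_{i+1} \nsubseteq A_i$ (other case), which prevent a putative witness from "hiding'' entirely inside an $A_i^*$ or $A_{i+1}^*$ run; (2) the unbounded blocks $(A_i \cap A_{i+1})^{\geq r}$, which cannot be captured by any fixed scattered pattern and must instead be detected through the threshold clause of a $\ccddo{\cdot}_l$-atom built from a suitably chosen word over $A_i \cap A_{i+1}$; and (3) overlap bookkeeping when the $u_i$'s, the chosen witness letters, and neighbouring alphabets share letters, which forces a careful "strictly between the transition letters'' phrasing throughout. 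The role of the $\FMVDA$ hypothesis is precisely to make the positionally-blind atoms of (a)–(c) \emph{sufficient}: although a word may factor through $L$ in many ways, the alternating $A_i^*$/$X_i$ structure built into a $\mathcal{K}$-language pins the block boundaries down to bounded transition zones, so finitely many local factor/subword tests at the skeleton positions are equivalent to the existence of a bona fide factorisation.
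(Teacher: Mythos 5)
Your overall architecture (prefix/suffix atoms, a scattered skeleton running through the blocks, confinement via complements of subword tests, and factor localisation at positions where the alphabet constraints neutralise the subword alternative of $\ccddo{\cdot}_l$) is the same as the paper's, and you have correctly identified where the difficulties lie. But at the single most delicate point --- enforcing the length-$\geq r$ middle of a block $X_i = (A_i \setminus A_{i+1})(A_i \cap A_{i+1})^{\geq r}(A_{i+1} \setminus A_i)$ --- your proposal does not contain a working mechanism. Your positive skeleton witness $a_i c_{i,1}\cdots c_{i,r} b_i$ fails under both readings: as a scattered subword it is satisfied spuriously, because the letters of $A_i \cap A_{i+1}$ also populate the adjacent free runs $\alpha_i \in A_i^*$ and $\beta_i \in A_{i+1}^*$, so the $r$ middle letters of the witness need not lie between the factor-adjacent occurrences of $a_i$ and $b_i$; as a factor it is too strong, since it pins the middle down to length exactly $r$ and to one specific word. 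The fallback you name in your point (2), a positive threshold clause $v^l \shuffle \Sigma^*$ for some $v$ over $A_i \cap A_{i+1}$, cannot express a lower bound on the length of a contiguous factor sitting between two designated letters, for the same reason. The paper's solution goes through \emph{complements}: it takes only the two-letter, purely scattered witness $a_i b_i$, shows (via the confinement atoms and the alphabet conditions) that $a_i$ and $b_i$ must then be factor-adjacent up to a middle over $A_i \cap A_{i+1}$, and intersects with the complements of $\ccddo{\ldots, a_i v b_i, \ldots}_2$ for \emph{every} $v \in (A_i \cap A_{i+1})^{<r}$ (finitely many), which is what actually rules out short middles. Relatedly, threshold $l = 2$ suffices throughout; your ``large $l$'' buys nothing once the confinement atoms are in place, since it is the alphabet straddling, not the size of $l$, that makes the subword alternative vacuous.

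A second, structural problem is your induction on $n$ ``invoking the statement on the residual $\mathcal{K}$-language'': threshold dot-depth one languages are Boolean combinations of the building blocks and are not closed under left-concatenation with $u_0 A_1^* X_1$, so a Boolean-combination description of the residual does not compose into one for $L$, and the ``bounded shift re-indexing'' you appeal to has no meaning for positionally blind subword atoms. The paper instead writes down the candidate Boolean combination $K$ globally for all $n \geq 2$ and proves $L = K$ directly; the only induction is an alignment claim (any prefix of $w \in L$ witnessing the first $i$ skeleton entries must contain the true factorisation prefix $u_0 \alpha_1 x_1 \cdots \alpha_i x_i$), proven by induction on the skeleton index $i$, not on $n$.
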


By closure of the class of threshold dot-depth one languages under Boolean
operations and since $\FSVDAintJsdD \subseteq \FSVDAintLJ$, this allows us to
fill the gap of the author's Ph.D.\ thesis using Costa's theorem,
Theorem~\ref{thm:Costa}, obtaining the following result, by combination with
Proposition~\ref{ptn:TDDO_in_DAintJsdD}.

\begin{theorem}
    A language is threshold dot-depth one if and only if it belongs to
    $\DLang{\FMVDA} \cap \DLang{\FSVJsdD} = \DLang{\FSVDAintJsdD}$.
\end{theorem}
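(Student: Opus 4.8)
The plan is to obtain the biconditional purely by combining the earlier results, since the substantive work is done elsewhere. First I would note, as remarked just before the statement, that the equality $\DLang{\FMVDA} \cap \DLang{\FSVJsdD} = \DLang{\FSVDAintJsdD}$ is Eilenberg's observation that the syntactic semigroup of a language lies in $\FSVGen{\FMVDA}$ exactly when its syntactic monoid lies in $\FMVDA$; so it remains to prove that a language is threshold dot-depth one if and only if it belongs to $\DLang{\FMVDA} \cap \DLang{\FSVJsdD}$.

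The forward implication is immediate from Proposition~\ref{ptn:TDDO_in_DAintJsdD}. For the converse, let $L$ be a language over an alphabet $\Sigma$ whose syntactic semigroup lies in both $\FSVGen{\FMVDA}$ and $\FSVJsdD$, i.e. in $\FSVDAintJsdD$. Since $\FSVJsdD$ is a subvariety of $\FSVLJ$, we have $\FSVDAintJsdD \subseteq \FSVDAintLJ$, so $L \in \DLang{\FSVDAintLJ}$. Costa's Theorem~\ref{thm:Costa} then writes $L$ as a Boolean combination of languages of $\mathcal{K}(\FSVDAintLJ)(\Sigma^*)$; by Proposition~\ref{ptn:DAintLJ_in_TDDO} each of those is threshold dot-depth one; and the class of threshold dot-depth one languages is closed under Boolean operations by its very definition. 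Hence $L$ is threshold dot-depth one, which closes the loop.

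So the theorem itself is a routine assembly, and the real obstacle is Proposition~\ref{ptn:DAintLJ_in_TDDO}, whose proof the paper defers to the end of the section. To prove it I would take a generic language $u_0 A_1^* X_1 A_2^* \cdots X_{n - 1} A_n^* u_n$ of $\mathcal{K}(\FSVDAintLJ)(\Sigma^*)$ and analyse it phase by phase (e.g. by induction on $n$), peeling off the prefix $u_0$ and suffix $u_n$ using the basic blocks $u_0 \Sigma^*$ and $\Sigma^* u_n$. When $X_i = \set{u_i}$ with $\alphabet(u_i) \nsubseteq A_i, A_{i + 1}$, the passage from the $A_i$-region to the $A_{i + 1}$-region is pinned down by $u_i$ as a fixed factor together with alphabet restrictions, all of which are piecewise testable and hence threshold dot-depth one. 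The hard case is $X_i = (A_i \setminus A_{i + 1})(A_i \cap A_{i + 1})^{\geq r}(A_{i + 1} \setminus A_i)$, where one must detect a long enough run of $(A_i \cap A_{i + 1})$-letters flanked by a letter of $A_i \setminus A_{i + 1}$ and a letter of $A_{i + 1} \setminus A_i$; this is exactly the ``a factor matters only when it does not occur too often as a subword'' behaviour that the blocks $\ccddo{u_1, \ldots, u_k}_l$ are designed to express. I would encode it by choosing the threshold $l$ large in terms of $r$ and the word lengths involved and combining several $\ccddo{\cdot}_l$ blocks with complements of subword languages $v \shuffle \Sigma^*$, so as to reproduce the ordered alphabet constraints exactly; verifying this equality uniformly over all input words is the technical heart of the argument.
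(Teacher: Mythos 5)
Your assembly of the theorem itself is exactly the paper's: Proposition~\ref{ptn:TDDO_in_DAintJsdD} for one direction, and for the converse the syntactic semigroup/monoid correspondence giving $\DLang{\FMVDA} \cap \DLang{\FSVJsdD} = \DLang{\FSVDAintJsdD}$, the inclusion $\FSVDAintJsdD \subseteq \FSVDAintLJ$, Costa's Theorem~\ref{thm:Costa}, Proposition~\ref{ptn:DAintLJ_in_TDDO} and Boolean closure. One correction to your sketch of Proposition~\ref{ptn:DAintLJ_in_TDDO}, which is where the real work lies: the paper does \emph{not} take the threshold $l$ large in terms of $r$. It works with threshold $2$ throughout and enforces the $(A_i \cap A_{i+1})^{\geq r}$ requirement by intersecting with the complements of $\ccddo{u_0, \ldots, b_i v b_{i+1}, \ldots, u_n}_2$ for \emph{every} short word $v \in (A_i \cap A_{i+1})^{< r}$; since $b_{i+1} \notin A_i$ and $b_i \notin A_{i+1}$, the word $(b_i v b_{i+1})^2$ cannot occur as a subword of the relevant $A_i^* X_i A_{i+1}^*$ region, so membership in such a block would force $b_i v b_{i+1}$ to appear as a factor, which a connecting run of length at least $r$ excludes. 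The dependence on $r$ thus lives in the finite family of forbidden short connecting factors, not in the threshold parameter, and a similar device (forbidding $\ccddo{\ldots, c, \ldots}_2$ for letters $c$ outside the appropriate $A_i$'s) handles the alphabet constraints; your sketch as stated leaves open how a large threshold alone would express the lower bound $\geq r$ on the run length.
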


Note that this gives an algebraic characterisation of threshold dot-depth one
languages. Also note that, as another corollary of
Proposition~\ref{ptn:DAintLJ_in_TDDO}, since $\DLang{\FSVDAintLJ}$ and
$\DLang{\FSVDAintJsdD}$ are so-called \nem-varieties of languages, we have that
$\FSVDAintJsdD = \FSVDAintLJ$ (see~\cite{Straubing-2002}).

We now prove Proposition~\ref{ptn:DAintLJ_in_TDDO}.

\begin{proof}[Proof of Proposition~\ref{ptn:DAintLJ_in_TDDO}]
    Let $\Sigma$ be an alphabet.

    Let 
    \[
	L = u_0 A_1^* X_1 A_2^* \cdots X_{n - 1} A_n^* u_n
    \]
    where $n, r \in \N$, $u_0, u_1, \ldots, u_n \in \Sigma^*$,
    $\emptyset \neq A_1, A_2, \ldots, A_n \subseteq \Sigma$ and, for all
    $i \in \intinterval{1}{n - 1}$, we have
    \[
	X_i =
	\begin{cases}
	    \set{u_i} &
		\begin{array}{@{}l@{}}
		    \text{if $u_i \neq \emptyword$, with}\\
		    \alphabet(u_i) \nsubseteq A_i, A_{i + 1}
		\end{array}\\
	    (A_i \setminus A_{i + 1}) (A_i \cap A_{i + 1})^{\geq r}
	    (A_{i + 1} \setminus A_i) &
		\begin{array}{@{}l@{}}
		    \text{otherwise, with}\\
		    \text{$A_i \nsubseteq A_{i + 1}$ and $A_{i + 1} \nsubseteq
		A_i$}
		\end{array}
	\end{cases}
	\displaypunct{.}
    \]

    If $n = 0$, then
    \[
	L = u_0 =
	u_0 \Sigma^* \cap \bigcap_{c \in \Sigma} {\ccddo{u_0, c}_2}^\complement
	\displaypunct{,}
    \]
    so $L$ is indeed threshold dot-depth one.
    If $n = 1$, then
    \[
	L = u_0 A_1^* u_n =
	u_0 \Sigma^* \cap \Sigma^* u_n \cap
	\bigcap_{c \in \Sigma \setminus A_1} {\ccddo{u_0, c, u_n}_2}^\complement
	\displaypunct{,}
    \]
    so, again, $L$ is threshold dot-depth one.

    We now assume $n \geq 2$. Given some $u \in \Sigma^+$, we set
    $\overline{u} = u_1, \ldots, u_{\length{u}}$.
    Moreover, for all $i \in \intinterval{1}{n - 1}$, we set
    \[
	Y_i =
	\begin{cases}
	    \set{u_i} & \text{if $u_i \neq \emptyword$}\\
	    (A_i \setminus A_{i + 1}) (A_{i + 1} \setminus A_i) &
	    \text{otherwise}
	\end{cases}
    \]
    and
    \[
	\widetilde{v_i} =
	\begin{cases}
	    u_i & \text{if $u_i \neq \emptyword$}\\
	    \overline{v_{i, 1} v_{i, 2}} & \text{otherwise}
	\end{cases}
    \]
    for any $v_i \in Y_i$.

    We now define
    \[
	K =
	\begin{array}[t]{@{}l@{}}
	    \displaystyle
	    u_0 \Sigma^* \cap \Sigma^* u_n \cap
	    \bigcup_{v_1 \in Y_1, \ldots, v_{n - 1} \in Y_{n - 1}}
	    \ccddo{u_0, \widetilde{v_1}, \ldots, \widetilde{v_{n - 1}}, u_n}_2
	    \cap\\
	    \displaystyle
	    \bigcap_{\substack{v_1 \in Y_1, \ldots, v_{n - 1} \in Y_{n - 1}\\
			       i \in \intinterval{1}{n},
			       c \in \Sigma \setminus A_i}}
	    {\ccddo{u_0, \overline{v_1}, \ldots, \overline{v_{i - 1}}, c,
		    \overline{v_i}, \ldots, \overline{v_{n - 1}},
		    u_n}_2}^\complement \cap\\
	    \displaystyle
	    \bigcap_{\substack{i \in \set{j \in \intinterval{1}{n - 1} \mid
					  u_j = \emptyword}\\
			       b_i \in A_i \setminus A_{i + 1},
			       b_{i + 1} \in A_{i + 1} \setminus A_i\\
			       v_1 \in Y_1, \ldots, v_{i - 1} \in Y_{i - 1},\\
			       v_{i + 1} \in Y_{i + 1}, \ldots,
			       v_{n - 1} \in Y_{n - 1}}}\\
	    \displaystyle
	    \Bigl(
	    \begin{array}[t]{@{}l@{}}
		\displaystyle
		\bigcap_{c \in \Sigma \setminus (A_i \cup A_{i + 1})}
		{\ccddo{u_0, \overline{v_1}, \ldots, \overline{v_{i - 1}}, b_i,
			c, b_{i + 1}, \overline{v_{i + 1}}, \ldots,
			\overline{v_{n - 1}}, u_n}_2}^\complement \cap\\
		\displaystyle
		\bigcap_{v \in (A_i \cap A_{i + 1})^{< r}}
		{\ccddo{u_0, \overline{v_1}, \ldots, \overline{v_{i - 1}},
			b_i v b_{i + 1}, \overline{v_{i + 1}}, \ldots,
			\overline{v_{n - 1}}, u_n}_2}^\complement\Bigr)
		\displaypunct{.}
	    \end{array}
	\end{array}
    \]
    Our goal is now to show that $L = K$, which implies that $L$ is actually
    threshold dot-depth one.

    \paragraph{Direction $L \subseteq K$.}
    Let $w \in L$. Then there exist
    $\alpha_1 \in A_1^*, \ldots, \alpha_n \in A_n^*$ and
    $x_1 \in X_1, \ldots, x_{n - 1} \in X_{n - 1}$ such that
    $w = u_0 \alpha_1 x_1 \alpha_2 \cdots x_{n - 1} \alpha_n u_n$.

    We can prove the following instrumental claim.
    \begin{claim}
	\label{clm:Alignment}
	For each $i \in \intinterval{0}{n - 1}$ and
	$v_1 \in Y_1, \ldots, v_{n - 1} \in Y_{n - 1}$ such that
	$w \in
	 \ccddo{u_0, \overline{v_1}, \ldots, \overline{v_{n - 1}}, u_n}_2$,
	we have that any prefix of $w$ belonging to
	$\ccddo{u_0, \overline{v_1}, \ldots, \overline{v_i}}_2$ starts with
	$u_0 \alpha_1 x_1 \cdots \alpha_i x_i$.
	Similarly, for each $i \in \intinterval{1}{n}$ and
	$v_1 \in Y_1, \ldots, v_{n - 1} \in Y_{n - 1}$ such that
	$w \in
	 \ccddo{u_0, \overline{v_1}, \ldots, \overline{v_{n - 1}}, u_n}_2$,
	we have that any suffix of $w$ belonging to
	$\ccddo{\overline{v_i}, \ldots, \overline{v_{n - 1}}, u_n}_2$ ends with
	$x_i \alpha_{i + 1} \cdots x_{n - 1} \alpha_n u_n$.
    \end{claim}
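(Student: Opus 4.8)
The plan is to prove the prefix statement by induction on $i$ and to obtain the suffix statement by the obvious left--right symmetry, which interchanges the roles of $A_i$ and $A_{i+1}$, of prefixes and suffixes, and of the two possible shapes of $X_i$. Fix a tuple $v_1 \in Y_1, \ldots, v_{n-1} \in Y_{n-1}$ with $w \in \ccddo{u_0, \overline{v_1}, \ldots, \overline{v_{n-1}}, u_n}_2$, and recall $w = u_0 \alpha_1 x_1 \alpha_2 \cdots x_{n-1} \alpha_n u_n$ with $\alpha_j \in A_j^*$ and $x_j \in X_j$. The base case $i = 0$ is immediate: a prefix of $w$ lying in $\ccddo{u_0}_2$ has length at least $\length{u_0}$ (it has $u_0$ as a factor, or $u_0$ twice as a subword), and since $w$ itself begins with $u_0$, so does that prefix.

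For the inductive step I would use the factorisation $\ccddo{u_0, \overline{v_1}, \ldots, \overline{v_i}}_2 = \ccddo{u_0, \overline{v_1}, \ldots, \overline{v_{i-1}}}_2 \cdot \ccddo{\overline{v_i}}_2$, noting that $\ccddo{\overline{v_i}}_2$ is precisely the set of words having $v_i$ as a subword (each single-letter factor $\ccddo{c}_2$ equals $\Sigma^* c \Sigma^*$). A prefix $p$ of $w$ in the left-hand language thus factors as $p = p' p''$ with $p' \in \ccddo{u_0, \overline{v_1}, \ldots, \overline{v_{i-1}}}_2$ and $v_i$ a subword of $p''$; by the induction hypothesis $p'$ begins with $u_0 \alpha_1 x_1 \cdots \alpha_{i-1} x_{i-1}$, hence, being a prefix of $w$, $p' = u_0 \alpha_1 x_1 \cdots \alpha_{i-1} x_{i-1} \gamma$ for some prefix $\gamma$ of $\alpha_i x_i \alpha_{i+1} \cdots u_n$, and $p''$ is the corresponding prefix of the remaining suffix of $w$.

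Then I would distinguish three cases on where $\gamma$ ends. If $\gamma$ reaches the end of the displayed occurrence of $x_i$ or beyond, then $p$ already begins with $u_0 \alpha_1 x_1 \cdots \alpha_i x_i$. If $\gamma$ ends within $\alpha_i$, write $\alpha_i = \gamma \alpha_i''$, so $\alpha_i'' \in A_i^*$ and $p''$ is a prefix of $\alpha_i'' x_i \alpha_{i+1} \cdots u_n$; since $v_i$ is not a subword of $\alpha_i'' y$ for any proper prefix $y$ of $x_i$, the prefix $p''$ (which has $v_i$ as a subword) must run past $x_i$, so $p = u_0 \alpha_1 x_1 \cdots x_{i-1} \gamma \alpha_i'' x_i \cdots = u_0 \alpha_1 x_1 \cdots \alpha_i x_i \cdots$. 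If $\gamma$ ends strictly inside $x_i$, write $x_i = x_i' x_i''$ with $x_i'' \neq \emptyword$; then $p''$ is a prefix of $x_i'' \alpha_{i+1} \cdots u_n$, and since $v_i$ is not a subword of any proper prefix of $x_i''$, again $p''$ runs past $x_i''$ and $p$ begins with $u_0 \alpha_1 x_1 \cdots \alpha_i x_i$.

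The crux, and essentially the only obstacle, is justifying these two subword facts. When $u_i \neq \emptyword$ we have $v_i = x_i = u_i$ with $\alphabet(u_i) \nsubseteq A_i$, and they reduce to: a word $u$ having a letter $c \notin A$ is a subword neither of $\beta y$ for $\beta \in A^*$ and $y$ a proper prefix of $u$, nor of a proper prefix of any suffix of $u$. The second is clear on length grounds; for the first I would count occurrences of $c$: as $\beta$ has no $c$, all occurrences of $c$ in $u$ must be matched inside $y$, which, $y$ being a prefix of $u$, forces $y$ to contain the last occurrence of $c$ in $u$, say at position $m_0 < \length{u}$; but then the length-$(\length{u} - m_0)$ suffix of $u$ would have to embed into the strictly shorter remaining part of $y$, a contradiction. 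When $u_i = \emptyword$, $v_i = v_{i,1} v_{i,2}$ with $v_{i,2} \in A_{i+1} \setminus A_i$, while $x_i \in (A_i \setminus A_{i+1})(A_i \cap A_{i+1})^{\geq r}(A_{i+1} \setminus A_i)$ has all its letters in $A_i$ save for its last letter, which is the only letter of $x_i$ lying outside $A_i$; hence $v_i$ cannot occur in $\alpha_i'' y$ with $y$ a proper prefix of $x_i$, nor in a proper prefix of a proper suffix of $x_i$, and the facts follow directly. The symmetric statements needed for the suffix induction replace $v_i$'s last letter and $A_i$ by $v_i$'s first letter and $A_{i+1}$ and are proved the same way.
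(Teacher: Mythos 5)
Your proof is correct and follows essentially the same route as the paper's: induction on $i$ for the prefix half, symmetry for the suffix half, with the inductive step reduced to showing that $v_i$ cannot occur as a subword strictly before the end of the displayed occurrence of $\alpha_i x_i$, a fact both you and the paper derive from the letter of $v_i$ (resp.\ of $u_i$) that lies outside $A_i$ combined with $\alpha_i \in A_i^*$. The only cosmetic difference is that your justification in the case $u_i \neq \emptyword$ counts occurrences of the \emph{last} letter of $u_i$ outside $A_i$, whereas the paper argues from the \emph{first} such letter.
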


    Now, by the structure of $w$, it is obvious that
    \[
	w \in u_0 \Sigma^* \cap \Sigma^* u_n \cap
	      \bigcup_{v_1 \in Y_1, \ldots, v_{n - 1} \in Y_{n - 1}}
	      \ccddo{u_0, \widetilde{v_1}, \ldots, \widetilde{v_{n - 1}}, u_n}_2
	\displaypunct{.}
    \]
    It remains to be proved that $w$ does not belong to any of the sets
    complemented in the formula for $K$.

    Assume there exist $v_1 \in Y_1, \ldots, v_{n - 1} \in Y_{n - 1}$, some
    $i \in \intinterval{1}{n}$ and $c \in \Sigma \setminus A_i$ such that
    $w$ belongs to
    $\ccddo{u_0, \overline{v_1}, \ldots, \overline{v_{i - 1}}, c,
	    \overline{v_i}, \ldots, \overline{v_{n - 1}}, u_n}_2$.
    Then, since by Claim~\ref{clm:Alignment} we have that any prefix of $w$
    belonging to $\ccddo{u_0, \overline{v_1}, \ldots, \overline{v_{i - 1}}}_2$
    starts with $u_0 \alpha_1 x_1 \cdots \alpha_{i - 1} x_{i - 1}$ and that any
    suffix of $w$ belonging to
    $\ccddo{\overline{v_i}, \ldots, \overline{v_{n - 1}}, u_n}_2$
    ends with $x_i \alpha_{i + 1} \cdots x_{n - 1} \alpha_n u_n$, and since
    \[
	\ccddo{u_0, \overline{v_1}, \ldots, \overline{v_{i - 1}}, c,
	       \overline{v_i}, \ldots, \overline{v_{n - 1}}, u_n}_2 =
	\ccddo{u_0, \overline{v_1}, \ldots, \overline{v_{i - 1}}}_2 \ccddo{c}_2
	\ccddo{\overline{v_i}, \ldots, \overline{v_{n - 1}}, u_n}_2
	\displaypunct{,}
    \]
    this would imply that a factor of $\alpha_i$ belongs to $\ccddo{c}_2$.
    This would in turn mean that
    $c \in \alphabet(\alpha_i) \cap (\Sigma \setminus A_i)$ while
    $\alpha_i \in A_i^*$: contradiction.
    So,
    \[
	w \in 
	\bigcap_{\substack{v_1 \in Y_1, \ldots, v_{n - 1} \in Y_{n - 1}\\
			   i \in \intinterval{1}{n},
			   c \in \Sigma \setminus A_i}}
	{\ccddo{u_0, \overline{v_1}, \ldots, \overline{v_{i - 1}}, c,
		\overline{v_i}, \ldots, \overline{v_{n - 1}},
		u_n}_2}^\complement
	\displaypunct{.}
    \]

    Similarly, let $i \in  \intinterval{1}{n - 1}$ verifying $u_i = \emptyword$,
    let $b_i \in A_i \setminus A_{i + 1}, b_{i + 1} \in A_{i + 1} \setminus A_i$
    (i.e.\ $b_i b_{i + 1} \in Y_i$) and
    $v_1 \in Y_1, \ldots, v_{i - 1} \in Y_{i - 1},
     v_{i + 1} \in Y_{i + 1}, \ldots, v_{n - 1} \in Y_{n - 1}$.
    Firstly, assume there exists $c \in \Sigma \setminus (A_i \cup A_{i + 1})$
    such that $w$ belongs to
    $\ccddo{u_0, \overline{v_1}, \ldots, \overline{v_{i - 1}}, b_i, c,
	    b_{i + 1}, \allowbreak
	    \overline{v_{i + 1}}, \ldots, \overline{v_{n - 1}}, u_n}_2$.
    Then, since by Claim~\ref{clm:Alignment} we have that any prefix of $w$
    belonging to $\ccddo{u_0, \overline{v_1}, \ldots, \overline{v_{i - 1}}}_2$
    starts with $u_0 \alpha_1 x_1 \cdots \allowbreak \alpha_{i - 1} x_{i - 1}$
    and that any suffix of $w$ belonging to
    $\ccddo{\overline{v_{i + 1}}, \ldots, \overline{v_{n - 1}}, u_n}_2$ ends
    with $x_{i + 1} \alpha_{i + 2} \cdots x_{n - 1} \alpha_n u_n$, and since
    \begin{align*}
	& \ccddo{u_0, \overline{v_1}, \ldots, \overline{v_{i - 1}}, b_i, c,
		 b_{i + 1}, \overline{v_{i + 1}}, \ldots,
		 \overline{v_{n - 1}}, u_n}_2\\
	= & \ccddo{u_0, \overline{v_1}, \ldots, \overline{v_{i - 1}}}_2
	    \ccddo{b_i, c, b_{i + 1}}_2
	    \ccddo{\overline{v_{i + 1}}, \ldots, \overline{v_{n - 1}}, u_n}_2
	\displaypunct{,}
    \end{align*}
    this would imply that a factor of $\alpha_i x_i \alpha_{i + 1}$ belongs to
    $\ccddo{b_i, c, b_{i + 1}}_2$.
    This would in turn mean that
    $c \in \alphabet(\alpha_i x_i \alpha_{i + 1}) \cap
	   \bigl(\Sigma \setminus (A_i \cup A_{i + 1})\bigr)$
    while
    $\alpha_i x_i \alpha_{i + 1} \in
     A_i^* (A_i \setminus A_{i + 1}) (A_i \cap A_{i + 1})^{\geq r}
     (A_{i + 1} \setminus A_i) A_{i + 1}^*$:
    contradiction.
    Secondly, assume there exists $v \in (A_i \cap A_{i + 1})^{< r}$ such that
    $w$ belongs to
    $\ccddo{u_0, \overline{v_1}, \ldots, \overline{v_{i - 1}}, b_i v b_{i + 1},
	    \overline{v_{i + 1}}, \ldots, \overline{v_{n - 1}}, \allowbreak
	    u_n}_2$.
    Then, since by Claim~\ref{clm:Alignment} we have that any prefix of $w$
    belonging to $\ccddo{u_0, \overline{v_1}, \ldots, \overline{v_{i - 1}}}_2$
    starts with $u_0 \alpha_1 x_1 \cdots \alpha_{i - 1} x_{i - 1}$ and that any
    suffix of $w$ belonging to
    $\ccddo{\overline{v_{i + 1}}, \ldots, \overline{v_{n - 1}}, u_n}_2$ ends
    with $x_{i + 1} \alpha_{i + 2} \cdots x_{n - 1} \alpha_n u_n$, and since
    \begin{align*}
	& \ccddo{u_0, \overline{v_1}, \ldots, \overline{v_{i - 1}},
		 b_i v b_{i + 1}, \overline{v_{i + 1}}, \ldots,
		 \overline{v_{n - 1}}, u_n}_2\\
	= & \ccddo{u_0, \overline{v_1}, \ldots, \overline{v_{i - 1}}}_2
	    \ccddo{b_i v b_{i + 1}}_2
	    \ccddo{\overline{v_{i + 1}}, \ldots, \overline{v_{n - 1}}, u_n}_2
	\displaypunct{,}
    \end{align*}
    this would imply that a factor of $\alpha_i x_i \alpha_{i + 1}$ belongs to
    $\ccddo{b_i v b_{i + 1}}_2$. As
    $\alpha_i x_i \alpha_{i + 1} \in
     A_i^* (A_i \setminus A_{i + 1}) (A_i \cap A_{i + 1})^{\geq r}
     (A_{i + 1} \setminus A_i) A_{i + 1}^* \subseteq A_i^* A_{i + 1}^*$,
    we could not have that $(b_i v b_{i + 1})^2$ is a subword of
    $\alpha_i x_i \alpha_{i + 1}$, as $b_{i + 1} \notin A_i$ and
    $b_i \notin A_{i + 1}$, so it would necessarily be the case that
    $b_i v b_{i + 1}$ is a factor of $\alpha_i x_i \alpha_{i + 1}$. However, as
    $v$ is of length less than $r$, the word $b_i v b_{i + 1}$ does not fit as a
    factor anywhere in $\alpha_i x_i \alpha_{i + 1}$, hence we again reach a
    contradiction.
    Therefore,
    \begin{align*}
	w \in &
	    \bigcap_{\substack{i \in \set{j \in \intinterval{1}{n - 1} \mid
					  u_j = \emptyword}\\
			       b_i \in A_i \setminus A_{i + 1},
			       b_{i + 1} \in A_{i + 1} \setminus A_i\\
			       v_1 \in Y_1, \ldots, v_{i - 1} \in Y_{i - 1},\\
			       v_{i + 1} \in Y_{i + 1}, \ldots,
			       v_{n - 1} \in Y_{n - 1}}}\\
	& \Bigl(
	    \begin{array}[t]{@{}l@{}}
		\displaystyle
		\bigcap_{c \in \Sigma \setminus (A_i \cup A_{i + 1})}
		{\ccddo{u_0, \overline{v_1}, \ldots, \overline{v_{i - 1}}, b_i,
			c, b_{i + 1}, \overline{v_{i + 1}}, \ldots,
			\overline{v_{n - 1}}, u_n}_2}^\complement \cap\\
		\displaystyle
		\bigcap_{v \in (A_i \cap A_{i + 1})^{< r}}
		{\ccddo{u_0, \overline{v_1}, \ldots, \overline{v_{i - 1}},
			b_i v b_{i + 1}, \overline{v_{i + 1}}, \ldots,
			\overline{v_{n - 1}}, u_n}_2}^\complement\Bigr)
		\displaypunct{.}
	    \end{array}
    \end{align*}

    Proving Claim~\ref{clm:Alignment} finishes to prove that $L \subseteq K$.
    \begin{proof}[Proof of Claim~\ref{clm:Alignment}]
	We only prove the first part of the claim, the second part can be proven
	in a symmetric way.

	We want to prove that for each $i \in \intinterval{0}{n - 1}$ and
	$v_1 \in Y_1, \ldots, v_{n - 1} \in Y_{n - 1}$ such that
	$w \in
	 \ccddo{u_0, \overline{v_1}, \ldots, \overline{v_{n - 1}}, u_n}_2$,
	any prefix of $w$ belonging to
	$\ccddo{u_0, \overline{v_1}, \ldots, \overline{v_i}}_2$ starts with
	$u_0 \alpha_1 x_1 \cdots \alpha_i x_i$. We are going to prove it by
	induction on $i$.

	\subparagraph{Base case $i = 0$.}
	It is direct to see that for all
	$v_1 \in Y_1, \ldots, v_{n - 1} \in Y_{n - 1}$ such that
	$w \in
	 \ccddo{u_0, \overline{v_1}, \ldots, \overline{v_{n - 1}}, u_n}_2$,
	any prefix of $w$ belonging to $\ccddo{u_0}_2$ starts with $u_0$.

	\subparagraph{Induction.}
	Let $i \in \intinterval{0}{n - 2}$ verifying that for all
	$v_1 \in Y_1, \ldots, v_{n - 1} \in Y_{n - 1}$ such that
	$w \in
	 \ccddo{u_0, \overline{v_1}, \ldots, \overline{v_{n - 1}}, u_n}_2$,
	any prefix of $w$ belonging to
	$\ccddo{u_0, \overline{v_1}, \ldots, \overline{v_i}}_2$ starts with
	$u_0 \alpha_1 x_1 \cdots \alpha_i x_i$.
	We are now going to prove that this also holds for $i + 1$.

	Let $v_1 \in Y_1, \ldots, v_{n - 1} \in Y_{n - 1}$ such that
	$w \in
	 \ccddo{u_0, \overline{v_1}, \ldots, \overline{v_{n - 1}}, u_n}_2$.
	By the inductive hypothesis, we have that any prefix of $w$ belonging to
	$\ccddo{u_0, \overline{v_1}, \ldots, \overline{v_i}}_2$ starts with
	$u_0 \alpha_1 x_1 \cdots \alpha_i x_i$. We claim that any prefix of
	$\alpha_{i + 1} x_{i + 1} \alpha_{i + 2} \cdots
	 x_{n - 1} \allowbreak \alpha_n u_n$
	containing $v_{i + 1}$ as a subword starts with
	$\alpha_{i + 1} x_{i + 1}$: this implies that any prefix of $w$
	belonging to
	$\ccddo{u_0, \overline{v_1}, \ldots, \overline{v_{i + 1}}}_2$ starts
	with $u_0 \alpha_1 x_1 \cdots \alpha_i x_i \alpha_{i + 1} x_{i + 1}$,
	otherwise we would have that some proper prefix of
	$u_0 \alpha_1 x_1 \cdots \alpha_i x_i$ belongs to
	$\ccddo{u_0, \overline{v_1}, \ldots, \overline{v_i}}_2$ or that some
	proper prefix of $\alpha_{i + 1} x_{i + 1}$ belongs to
	$\ccddo{\overline{v_{i + 1}}}_2$.
	We will show that the second situation cannot occur.

	We have two cases.
	\begin{itemize}
	    \item
		Either $u_{i + 1} \neq \emptyword$. Then, we have
		$x_{i + 1} = v_{i + 1} = u_{i + 1}$ and it is obvious that
		$\alpha_{i + 1} x_{i + 1}$ contains $u_{i + 1}$ as a subword.
		But as $x_{i + 1} = u_{i + 1} = z c z'$ with
		$z \in A_{i + 1}^*$, with $c \in \Sigma \setminus A_{i + 1}$ and
		$z' \in \Sigma^*$, it cannot be that a proper prefix of
		$\alpha_{i + 1} x_{i + 1}$ contains $v_{i + 1} = u_{i + 1}$ as a
		subword, otherwise we would have that
		$\alpha_{i + 1} z \in A_{i + 1}^*$ contains $c$.
	    \item
		Or $u_{i + 1} = \emptyword$. Then, we have $x_{i + 1} = a z b$
		and $v_{i + 1} = a' b'$ with
		$a, a' \in A_{i + 1} \setminus A_{i + 2}$, with
		$z \in (A_{i + 1} \cap A_{i + 2})^{\geq r}$ and
		$b, b' \in A_{i + 2} \setminus A_{i + 1}$.
		But since $\alpha_{i + 1} a z \in A_{i + 1}^*$, it cannot
		contain $b'$, so no proper prefix of $\alpha_{i + 1} x_{i + 1}$
		can contain $v_{i + 1}$ as a subword.
	\end{itemize}

	This concludes the proof of Claim~\ref{clm:Alignment}.
    \end{proof}

    \paragraph{Direction $K \subseteq L$.}
    Let $w \in K$.

    Then, since
    \[
	w \in
	u_0 \Sigma^* \cap \Sigma^* u_n \cap
	\bigcup_{v_1 \in Y_1, \ldots, v_{n - 1} \in Y_{n - 1}}
	\ccddo{u_0, \widetilde{v_1}, \ldots, \widetilde{v_{n - 1}}, u_n}_2
	\displaypunct{,}
    \]
    we have that there exist $v_1 \in Y_1, \ldots, v_{n - 1} \in Y_{n - 1}$ and
    $y_1 \in \ccddo{\widetilde{v_1}}_2, \ldots,
     y_{n - 1} \in \ccddo{\widetilde{v_{n - 1}}}_2$
    such that $w = u_0 y_1 \cdots y_{n - 1} u_n$.

    Let $i \in \intinterval{1}{n - 1}$. There are two cases to consider.
    \begin{itemize}
	\item
	    $u_i \neq \emptyword$.
	    In that case, we have $\widetilde{v_i} = u_i$, so that
	    $y_i \in \ccddo{u_i}_2$. Since
	    $u_0 y_1 \cdots y_{i - 1} \in
	     \ccddo{u_0, \overline{v_1}, \ldots, \overline{v_{i - 1}}}_2$
	    and
	    $y_{i + 1} \cdots y_{n - 1} u_n \in
	    \ccddo{\overline{v_{i + 1}}, \ldots, \overline{v_{n - 1}},
		   \allowbreak u_n}_2$
	    and as there exists some $c \in \alphabet(u_i) \setminus A_i$, it
	    cannot be that ${u_i}^2$ is a subword of $y_i$, otherwise we would
	    have that
	    $w \in \ccddo{u_0, \overline{v_1}, \ldots, \overline{v_{i - 1}}, c,
			  \overline{v_i}, \ldots, \allowbreak
			  \overline{v_{n - 1}}, u_n}_2$.
	    Hence, there exist $\alpha_i, \beta_i \in \Sigma$ such that
	    $y_i = \alpha_i u_i \beta_i$, which actually verify
	    $\alpha_i \in A_i^*$ and $\beta_i \in A_{i + 1}^*$ for the same
	    reasons as just above.
	\item
	    $u_i = \emptyword$.
	    In that case, we have
	    $\widetilde{v_i} = \overline{v_{i, 1} v_{i, 2}}$, so that
	    $y_i \in \ccddo{v_{i, 1}, v_{i, 2}}_2$ with
	    $v_{i, 1} v_{i, 2} \in
	     (A_i \setminus A_{i + 1}) (A_{i + 1} \setminus A_i)$,
	    which means that there exist
	    $a_i \in A_i \setminus A_{i + 1}, b_i \in A_{i + 1} \setminus A_i$
	    verifying that $a_i b_i$ is a subword of $y_i$. We can take these
	    $a_i, b_i$ to be such that $y_i$ can be decomposed as
	    $\alpha_i a_i z_i b_i \beta_i$ where
	    $\alpha_i, \beta_i \in \Sigma^*$ and
	    $z_i \in \Bigl((A_i \cap A_{i + 1}) \cup
			   \bigl(\Sigma \setminus (A_i \cup A_{i + 1})\bigr)
		     \Bigr)^*$.
	    Since
	    $u_0 y_1 \cdots y_{i - 1} \in
	     \ccddo{u_0, \overline{v_1}, \ldots, \overline{v_{i - 1}}}_2$
	    and
	    $y_{i + 1} \cdots y_{n - 1} u_n \in
	     \ccddo{\overline{v_{i + 1}}, \ldots, \overline{v_{n - 1}}, u_n}_2$
	    and as $a_i b_i \in Y_i$, it must actually be that
	    $\alpha_i \in A_i^*$ and $\beta_i \in A_{i + 1}^*$, otherwise we
	    would have that
	    $w \in \ccddo{u_0, \overline{v_1}, \ldots, \overline{v_{i - 1}}, c,
			  \overline{a_i b_i}, \overline{v_{i + 1}}, \ldots,
			  \overline{v_{n - 1}}, u_n}_2$
	    for some $c \in \Sigma \setminus A_i$ or that
	    $w \in \ccddo{u_0, \overline{v_1}, \ldots, \overline{v_{i - 1}},
			  \overline{a_i b_i}, c, \overline{v_{i + 1}}, \ldots,
			  \overline{v_{n - 1}}, u_n}_2$
	    for some $c \in \Sigma \setminus A_{i + 1}$.
	    Moreover, it must be that $z_i \in (A_i \cap A_{i + 1})^*$,
	    otherwise we would have that
	    $w \in \ccddo{u_0, \overline{v_1}, \ldots, \overline{v_{i - 1}},
			  a_i, c, b_i, \overline{v_{i + 1}}, \ldots,
			  \overline{v_{n - 1}}, u_n}_2$
	    for some $c \in \Sigma \setminus (A_i \cup A_{i + 1})$. Finally,
	    using the last complemented set in the formula for $K$, we also
	    necessarily have that $\length{z_i} \geq r$, so that
	    $z_i \in (A_i \cap A_{i + 1})^{\geq r}$.
    \end{itemize}
    Therefore, in any case we have that $y_i = \alpha_i x_i \beta_i$ with
    $x_i \in X_i$ as well as $\alpha_i \in A_i^*$ and $\beta_i \in A_{i + 1}^*$.
    This allows us to conclude that $w \in L$.
\end{proof}

\section{Conclusion}
\label{sec:Conclusion}
Although $\Prog{\FMVJ}$ is very small compared to $\AC[0]$, we have shown that
programs over monoids in $\FMVJ$ are an interesting subject of study in that
they allow to do quite unexpected things. The ``feedback-sweeping'' technique
allows one to detect presence of a factor thanks to such programs as long as
this factor does not appear too often as a subword: this is the basic principle
behind threshold dot-depth one languages, that our article shows to belong
wholly to $\Prog{\FMVJ}$.

The result that the class of threshold dot-depth one languages corresponds
exactly to $\DLang{\FMVDA} \cap \DLang{\FSVJsdD}$ is of independent interest for
automata theory: it means that the class of threshold dot-depth one languages
corresponds exactly to the intersection of the class of dot-depth one languages
and of the class of languages recognised by monoids in $\FMVDA$, two well-known
classes at the bottom of the dot-depth hierarchy (see~\cite{Pin-2017}).
Obtaining similar results for higher levels of the dot-depth hierarchy could be
a nice research goal in automata theory.

Concerning the question whether threshold dot-depth one languages with
additional positional modular counting do correspond exactly to the languages in
$\DLang{\StVQDA} \cap \DLang{\StVQJsdD}$, we think that with the algebraic
characterisation of threshold dot-depth languages we now have, it should
readily be solved affirmatively using the finite category machinery
of~\cite{Dartois-Paperman-2014}.

\bibliographystyle{abbrv}
\bibliography{Bibliography}

\end{document}